\documentclass[a4paper,fleqn]{cas-dc}

\usepackage[numbers]{natbib}

\usepackage{amsthm}
\usepackage{amsmath,amssymb,amsfonts}
\usepackage{algorithmic}
\usepackage[ruled,linesnumbered]{algorithm2e}
\usepackage{textcomp}
\usepackage{url}            
\usepackage{booktabs}       

\usepackage{nicefrac}
\usepackage{microtype} 
\usepackage{lipsum}
\usepackage{enumitem}
\usepackage{listings}
\usepackage{graphicx}
\usepackage{epstopdf}
\usepackage{array}
\usepackage{wrapfig}
\usepackage{mathtools,cuted}
\usepackage{color}
\usepackage{colortbl}
\usepackage{cases}
\DeclarePairedDelimiter\abs{\lvert}{\rvert}

\usepackage{stackengine,graphicx}
\stackMath
\usepackage[normalem]{ulem}
\usepackage{bm}
\usepackage{mathtools}
\usepackage{verbatim}
\usepackage{caption}
\usepackage{subcaption}

\newcommand{\xbf}{\mathbf{x}}

\newcommand{\bbf}{\mathbf{b}}
\newcommand{\ubf}{\mathbf{u}}
\newcommand{\Ubf}{\mathbf{U}}
\newcommand{\vbf}{\mathbf{v}}
\newcommand{\wbf}{\mathbf{w}}
\newcommand{\zbf}{\mathbf{z}}
\newcommand{\fbf}{\mathbf{f}}
\newcommand{\sbf}{\mathbf{s}}
\newcommand{\Fbf}{\mathbf{F}}
\newcommand{\Pbf}{\mathbf{P}}

\newcommand{\gbf}{\mathbf{g}}

\DeclareMathOperator{\prox}{\mathrm{prox}}

\newcommand{\C}{\mathbb{C}}

\DeclareMathOperator*{\argmin}{arg\,min}

\newcommand{\J}{\mathcal{J}}
\newcommand{\K}{\mathcal{K}}

\newcommand{\N}{\mathcal{N}}
\newcommand{\soft}{\mathcal{S}}

\newcommand{\M}{\mathcal{M}}

\newcommand{\G}{\mathcal{G}}

\newtheorem{theorem}{Theorem}

\begin{document}

\let\WriteBookmarks\relax
\def\floatpagepagefraction{1}
\def\textpagefraction{.001}
\shorttitle{}
\shortauthors{W. Bian et~al.}

\title [mode = title]{An Optimal Control Framework for Joint-channel Parallel MRI  Reconstruction without Coil Sensitivities}                      

\author[1]{\color{black}Wanyu Bian}
\cormark[1]
\ead{wanyu.bian@ufl.edu}

\author[1]{\color{black}Yunmei Chen}

\address[1]{Department of Mathematics, University of Florida, Gainesville, FL 32601, USA}

\author[2]{\color{black}Xiaojing Ye}
\address[2]{Department of Mathematics and Statistics, Georgia State University, Atlanta, GA 30303, USA}

\cortext[cor1]{Corresponding author.}

\begin{abstract}
\emph{Goal}: This work aims at developing a novel calibration-free fast parallel MRI (pMRI) reconstruction method incorporate with discrete-time optimal control framework. The reconstruction model is designed to learn a regularization that combines channels and extracts features by leveraging the information sharing among channels of multi-coil images. We propose to recover both magnitude and phase information by taking advantage of structured convolutional networks in image and Fourier spaces.
\emph{Methods}: We develop a novel variational model with a learnable objective function that integrates an adaptive multi-coil image combination operator and effective image regularization in the image and Fourier spaces. We cast the reconstruction network as a structured discrete-time optimal control system, resulting in an optimal control formulation of parameter training where the parameters of the objective function play the role of control variables. We demonstrate that the Lagrangian method for solving the control problem is equivalent to back-propagation, ensuring the local convergence of the training algorithm.
\emph{Results}: We conduct a large number of numerical experiments of the proposed method with comparisons to several state-of-the-art pMRI reconstruction networks on real pMRI datasets. The numerical results demonstrate the promising performance of the proposed method evidently. 
\emph{Conclusion}: The proposed method provides a general deep network design and training framework for efficient joint-channel pMRI reconstruction.
\emph{Significance}: By learning multi-coil image combination operator and performing regularizations in both image domain and k-space domain, the proposed method achieves a highly efficient image reconstruction network for pMRI. 
\end{abstract}

\begin{keywords}
Parallel MRI\sep Reconstruction \sep Discrete-time optimal control\sep Residual learning.
\end{keywords}

\ExplSyntaxOn
\keys_set:nn { stm / mktitle } { nologo }
\ExplSyntaxOff
\maketitle

\section{Introduction}
\label{sec:introduction}

Magnetic resonance imaging (MRI) is one of the most prominent medical imaging technologies with extensive clinical applications. In clinical applications, an advanced medical MRI technique known as parallel MRI (pMRI) is widely used. PMRI surrounds the scanned objects with multiple receiver coils and collects k-space (Fourier) data in parallel. PMRI can reduce the data acquisition time and has become the state-of-the-art technology in modern MRI applications. To accelerate the scan process, partial data acquisitions that increase the spacing between read-out lines in k-space are implemented in pMRI. However, this results in aliasing artifacts, and a proper image reconstruction process is necessary to recover the high-quality artifact-free images from the partial data.

Two major approaches are commonly addressed to image reconstruction in pMRI: the first approach is k-space method which interpolates the missing k-space data using the sampled ones across multiple receiver coils \cite{doi:10.1002/jmri.23639}, such as the generalized auto-calibrating partially parallel acquisition (GRAPPA) \cite{griswold2002generalized} and  simultaneous acquisition of spatial harmonics (SMASH) \cite{sodickson1997simultaneous}. The other approach is the class of image space method that eliminate the aliasing artifacts in the image domain by solving a system of equations that relate the image to be reconstructed and partial k-space data through coil sensitivities, such as SENSitivity Encoding (SENSE) \cite{pruessmann1999sense}.
Coil sensitivity maps are indispensable and required to be accurately pre-estimated in traditional SENSE-based methods.
Traditional pMRI reconstruction methods in image space follows SENSE-based framework, which is formulated as an optimization problem that minimizes a summation of a data fidelity term and a weighted regularization term. The detailed explanation about this formulation can be refer to \cite{knoll2020deep}.

In recent years, we have witnessed fast developments of SENSE-based pMRI  reconstruction incorporate with deep-learning based methods \cite{knoll2020deep, lu2020pfista, tavaf2021grappa,hammernik2021systematic, lv2021pic}. There are two critical issues that need to be carefully addressed.
The first issue is on the choice of regularization including the regularization weight.
The regularization term is of paramount importance to the severely ill-posed inverse problem of pMRI reconstruction due to the significant undersampling of k-space data. 
In the past decades, most traditional image reconstruction methods employ handcrafted regularization terms, such as the total variation (TV). 
In recent years, a class of unrolling methods that mimic classical optimization schemes are developed, where the regularization is realized by deep networks whose parameters are learned from data. However, the learned regularization is often cast as a black-box that is difficult to interpret, and the training can be very data demanding and prone to overfitting especially when the networks are over-parameterized \cite{lecun2015deep,279181,VanishingGradient}. 

The second issue is due to the unavailability of accurate coil sensitivities $\{ \sbf_i\}$ in practice. Inaccurate coil sensitivity maps lead to severe biases that degrade the quality of reconstructed $\vbf$. 
One way to eliminate this issue is to (regularize and) reconstruct  multi-coil images, and combine channels into a full-body image in the final step by taking some hand-crafted methods such as the root of sum of squares (RSS). Different from RSS, our method proposed a learnable multi-coil combination operator to combine channel-wise multi-coil images.

                                                          
In this paper, we tackle the aforementioned issues in a discrete-time optimal control framework to optimize the variational pMRI reconstruction model.
We highlight several main features of our framework as follows.

\begin{enumerate}[leftmargin=*]
    \item Unlike most existing methods which regularize and reconstruct multi-coil images, we employ regularization in both image and Fourier spaces to improve reconstruction quality. 
    \item Our method advocates a learned adaptive combination operator that first merges multi-coil images into a full-body image with a complete field of view (FOV), followed by an effective regularization on this image. 
    This is in sharp contrast to existing methods which only combine reconstructed multi-coil images in the final step, whereas our regularizer leverages the combination operator in each iteration which improves parameter efficiency.
    \item We employ a complex-valued neural network as the coil combination operator to recover both magnitude and phase information of pMRI images when coil sensitivity is unavailable. This combination method benefits from the coil information shared among multiple channels, which is distinct from most hand-crafted coil combination methods.  
   \item We propose a novel deep reconstruction network whose structure is determined by the discrete-time optimal control system for minimizing the objective function, which yields an optimal control formulation where the parameters of the combination and regularization operators play the role of control variables of the discrete dynamic system. The optimal value of these parameters is obtained by a Lagrangian method which can be implemented using back-propagation. 
   
\end{enumerate}

We consider two clinical pMRI sequences of knee images and verified the effective performance of the proposed combination operator, different initial reconstructions, complex convolutions, and domain-hybrid network in the Ablation Studies. The proposed network recovers both magnitude and phase information of pMRI images. The effect of the aforementioned techniques demonstrate evident improvement of reconstruction quality and parameter efficiency using our method. For reproducing the experiment, our code is available at \url{https://github.com/1lol/pMRI_optimal_control}.

This remainder of the paper is organized as follows: In section \ref{sec:related}, we provide an overview of recent developments in pMRI, cross-domain reconstructions, complex-valued CNNs, and optimal control inspired deep training models that related to our work. We present our proposed problem settings and reconstruction network architecture in detail in Section \ref{sec:proposed}. Extensive numerical experiments and analyses on a variety of clinical pMRI data are presented in Section \ref{sec:experiment}. Section \ref{sec:conclusion} concludes this paper.

\section{Related work}
\label{sec:related}

In recent years, we have witnessed fast developments of medical imaging incorporate with deep-learning based methods \cite{huang2020medical, huang2020magnetic, huang2020mri}. Most existing deep-learning based methods rendering end-to-end neural networks mapping from the partial k-space data to the reconstructed images \cite{WANG2020136, zhou_pmri, Quan2018CompressedSM,8417964, zhu2018}. These approaches require an excessive amount of training data, and the designed networks are cast as black boxes whose underlying mechanism can be very difficult to interpret. To mitigate this issue, a number of unrolling methods were proposed to map existing optimization algorithms to structured networks where each phase of the networks correspond to one iteration of an optimization algorithm \cite{doi:10.1002/mrm.26977, adler2018learned, Aggarwal_2019,8550778, cheng2019model,NIPS2016_6406,zhang2018ista, 8067520}.
In what follows, we focus on the recent developments in deep-learning based image reconstruction methods for pMRI.

Variational Network (VN) \cite{doi:10.1002/mrm.26977} was introduced to unroll the gradient descent algorithm as a reconstruction network which requires precalculated sensitivities $\{\sbf_i\}$ as input.
%
%
MoDL \cite{Aggarwal_2019} proposed a weight sharing strategy in a recursive network to learn the regularization parameters by unrolling the conjugate gradient method.
%
%
%
%
%
Several methods explored different strategies to avoid using pre-calculated coil sensitivity maps for pMRI reconstruction.
Blind-PMRI-Net \cite{10.1007/978-3-030-32251-9_80} proposed pMRI model by regularizing sensitivity maps and MR image, where their network alternatively estimates coil images, sensitivities and single-body image by three subnets. 
De-Aliasing-Net \cite{10.1007/978-3-030-32248-9_4} proposed a de-aliasing reconstruction model with that applied split Bregman iteration algorithm without explicit coil sensitivity calculation. The de-aliasing network explored cross-correlation among channels and spatial redundancy which provoked a desirable performance.
LINDBERG \cite{wang2017learning} explored calibration-free pMRI technique which uses adaptive sparse coding to obtain joint-sparse representation precisely by equipping a joint sparsity regularization to extract desirable cross-channel relationship. This work proposed to alternatively update sparse representation, sensitivity encoded images, and K-space data. 
%
Adaptive-CS-Net \cite{pezzotti2020adaptive} is a leading method in 2019 fastMRI challenge \cite{zbontar2018fastmri} that unrolled modified ISTA-Net$^{+}$ \cite{zhang2018ista}. 
The proposed calibration-free pMRI method distincts from above related works in terms of the learnable multi-coil combination operator to adaptively combine channels of the updated multi-coil images through iterations.

Recently, cross-domain methods exhibits its significance in medical imaging \cite{  doi:10.1002/mrm.27201, wang2020ikwi, sriram2020grappanet, 10.1007/978-3-030-59713-9_41,   pmlr-v102-souza19a,  10.1007/978-3-030-59713-9_34,  10.1007/978-3-030-59713-9_37, souza2019hybrid}.
%
KIKI-net \cite{doi:10.1002/mrm.27201} iteratively applied k-space CNN, image domain CNN and interleaved data consistency operation for single-coil image reconstruction. 
%
%
CDF-Net \cite{ 10.1007/978-3-030-59713-9_41} further shows adding communication between spatial and frequency domain gives a boost in performance. Their results indicated that domain-specific network has individual strong points and disadvantages in restoring tissue-structure. Our reconstruction model is inspired of cross-domain reconstruction, the difference is that we solve for the reconstruction model with cross-domain regularization functions through a learnable optimization algorithm instead of an end-to-end network.


Our network applies complex-valued convolutions and activation functions. MRI data are complex-valued, and the phase signals also carry important pathological information such as in quantitative susceptibility mapping \cite{cole2020analysis, sandino2020compressed}. 
%
%
%
Cole $et$ $al.$ \cite{cole2020analysis} investigated the performance of complex-valued convolution and activation functions has better reconstruction over the model with real-valued convolution in various network architectures. 
%
DeepcomplexMRI \cite{WANG2020136} was developed to recover multi-coil images by unrolling an end-to-end complex-valued network.

In supervised learning, deep residual neural networks can be approximated as discretizations of a classical optimal control problem of a dynamical system, where training parameters can be viewed as control variables \cite{weinan2017proposal, lidynamical,li2019deep}. Control inspired learning algorithms introduced a new family of network training models which connect with dynamical systems. Pontryagin's maximum principle (PMP) \cite{PMP} was explored as necessary optimality conditions for the optimal control \cite{li2017maximum}, \cite{pmlr}, these works devise the discrete method of successive approximations (MSA) \cite{MSA} and its variance for solving PMP. 
Neural ODE \cite{chen2018neural} models the continuous dynamics of hidden states by some certain types of neural networks such as ResNet, the forward propagation is equivalent to one step of discretatized ordinary differential equations (ODE). Inspired by Neural ODE \cite{chen2018neural}, Chen $et$ $al.$ \cite{chen2020mri} modeled ODE-based deep network for MRI reconstruction. In this paper, we model the optimization trajectory as a discrete dynamic process from the view of the method of Lagrangian Multipliers.

The present work is a substantial extension of the preliminary work in \cite{10.1007/978-3-030-61598-7_2} using domain-hybrid network with a trained initialization to solve for an optimal control pMRI joint-channel reconstruction problem when coil-sensitivity is unavailable. More comprehensive empirical study is conducted in this work.

\section{Proposed Method}
\label{sec:proposed}

\subsection{Background} \label{background}

PMRI as well as general MRI reconstruction can be formulated as an inverse problem. Consider a pMRI system with $c$ receiver coils acquiring 2D MR images at resolution $m\times n$ (we treat a 2D image $\vbf \in \mathbb{C}^{m\times n}$ and its column vector form $\vbf \in \mathbb{C}^{mn}$ interchangeably hereafter). Let $\Pbf \in \mathbb{R}^{p\times mn}  (p \le mn)$ be the binary matrix representing the undersampling mask with $p$ sampled locations in k-space, $\sbf_i\in\mathbb{C}^{mn}$ the coil sensitivity, and $\fbf_i \in \mathbb{C}^{p}$ the \emph{partial} k-space data at the $i$-th receiver coil for $i=1,\dots,c$. The partial data $\fbf_i$ and the image $\vbf$ are related by $\fbf_i = \Pbf \Fbf (\sbf_i \odot \vbf) + \mathbf{n}_i$, where $\sbf_i$ is the (unknown) sensitivity map at the $i$-th coil and $\odot$ denotes entrywise product of two matrices, $\Fbf \in \mathbb{C}^{mn\times mn}$ stands for the (normalized) discrete Fourier transform that maps an image to its Fourier coefficients, and $\mathbf{n}_i$ is the unknown acquisition noise in k-space at the $i$-th receiver coil. Then the variational model for image reconstruction can be cast as an optimization problem as follows:
\begin{equation}\label{eq:PFS}
    \min_{\vbf} \ \sum\nolimits^{c}_{i=1} \frac{1}{2} \| \Pbf \Fbf (\sbf_i \odot \vbf)- \fbf_i\|^2 + R(\vbf),
\end{equation}
where $\vbf\in \mathbb{C}^{m n}$ is the single full-body MR image to be reconstructed, $R(\vbf)$ is a regularization on the image $\vbf$, and $\| \wbf \|^2 := \| \wbf \|_2^2 = \sum_{j=1}^n |w_j|^2$ for any complex vector $\wbf = (w_1,\dots,w_n)^{\top} \in \mathbb{C}^n$.
Our approach is based on uniform Cartesian k-space sampling. Table \ref{tab:notations} displays the notations and their descriptions that we used in the paper.

\begin{table}
    \centering
    \caption{Some notations and meanings that used throughout this paper.}
    \resizebox{\linewidth}{46mm}{ 
    \begin{tabular}{ll}
    \toprule
    Expression     &  Description \\
    \midrule
    $c$ & total number of receiver coils\\
    $\ubf= (\ubf_1,\dots, \ubf_{c}) $ & multi-coil MRI data\\
    $\fbf= (\fbf_1,\dots,\fbf_{c}) $ & partial k-space measurement\\
    $\sbf = (\sbf_1,\dots, \sbf_{c}) $ & coil sensitivity map\\
    $\vbf $ & full fov image that need to be reconstructed \\
    $\ubf^*$ & ground truth multi-coil MRI data\\
    $\vbf^*$ & ground truth single body MRI data\\
    $\Fbf $ & discrete Fourier transform\\
    $\Fbf^{H} $ & inverse discrete Fourier transform\\
    $\Pbf $ & undersampling trajectory\\
    $\mathbf{n}$ & measurement noise\\
    $\gbf$ & algorithm unrolling network\\
    $\gbf_0$ & initial network\\
    $\J,\G,\tilde{\G}, \tilde{\J}$ & image space convolutional operators in $\gbf$ \\
    $\K$ & k-space convolutional operators in $\gbf$ \\
    $\K_0$ & k-space convolutional operator in $\gbf_0$\\
    $h$ & data fidelity term\\
    $R$ & regularization\\
    \text{RSS} & square root of the sum of squares\\
    $ t=1,\cdots,T$ & phase number \\
    $ k=1,\cdots,K$ & number of iterations for Alg \eqref{alg:1}\\
    $\Ubf=(\ubf{(0)}, \cdots, \ubf{(T)})^{\intercal}$ & collection of predicted multi-coil images\\
    & at each phase\\
    $\Theta=(\theta(0),\cdots, \theta(T))^{\intercal}$ & collection of control variable (parameters)\\
    & at each phase\\    
    $\Lambda = (\lambda(0), \cdots, \lambda(T))^{\intercal}$ & collection of Lagrangian multipliers of \eqref{eq:bilevel}\\
    \bottomrule
    \end{tabular}}
    \label{tab:notations}
\end{table}

\subsection{Problem Settings} \label{subsec:setting}

%
We propose a unified deep neural network for calibration-free pMRI reconstruction by recovering images from individual receiver coils jointly that does not require any knowledge of coil-wise sensitivity profile. 
Denote that $\ubf_i$ is the MR image at the $i$-th receiver coil and hence is related to the full body image $\vbf$ by $\ubf_i = \sbf_i  \odot \vbf$. 
On the other hand, the image $\ubf_i$ corresponds to the partial k-space data $\fbf_i$ by $\fbf_i = \Pbf \Fbf \ubf_i + \mathbf{n}_i$, and hence the data fidelity term is formulated as least squares $ \frac{1}{2}\sum^{c}_{i=1} \|\Pbf \Fbf \ubf_i - \fbf_i\|^2$.
We also need a suitable regularization $R$ on the images $\{\ubf_i\}$.
However, these images have severely geometrically inhomogeneous contrasts due to the physical variations in the sensitivities across the image domain at different receiver coils. Therefore, it is more appropriate and effective to apply regularization to the single full-body image $\vbf$ than to individual coil images in $( \ubf_1,\dots,\ubf_c )$.


To address the issue of proper regularization, we propose to learn a nonlinear operator $\J$ that combines $\{\ubf_i\}$ into the image $\vbf = \J (\ubf) \in \mathbb{C}^{m\times n}$, where $\ubf = (\ubf_1,\dots, \ubf_{c}) \in \mathbb{C}^{m \times n \times c}$ represents the channel-wise multi-coil MRI data that consists of $ \ubf_i$ for $ i=1,\cdots,c$.
Then we apply a suitable $R$ to the image $\vbf$. 
%
We also introduce a k-space $R_f$ on $\Fbf \ubf_i$ to take advantage of Fourier information and enhance the model performance.


We denote $\fbf = (\fbf_1,\dots,\fbf_{c}) \in \mathbb{C}^{p \times c}$ as the partial k-space measurements at $c$ sensor coils. Suppose that we are given $N$ data pairs $\{(\fbf^{(j)}, \ubf^{*(j)}) \}_{j=1}^N$ for training the network, where $\ubf^{*(j)}$ is the ground truth multi-coil MR data with index $j \in \{ 1, \cdots, N\}$. Let $\Theta$ represents the parameters that need to be learned from network by minimizing the loss function $\ell$. We formulate the network training as a bilevel optimization problem, where the lower level is to update $ \ubf$ with fixed trainable parameters $\Theta$ and upper level is to update $\Theta$ that learned from the training data by minimizing loss function $\ell$.
\begin{subequations}\label{learnable_pmri}
\begin{align}
& \min_{\Theta} \ \ \frac{1}{N}\sum^N_{j=1}  \ell(  \ubf^{(j)}_{\Theta}) ,\ \ \ \label{learnable_pmri_upper} \\
& \mathrm{s.t.} \ \  \ubf_{\Theta}^{(j)} = \argmin_{\ubf^{(j)}}  \phi_{\Theta}( \ubf^{(j)}), \ \  \label{learnable_pmri_lower}  
\end{align}
\end{subequations}
with $\phi_{\Theta}$ defined as below:
\begin{equation}\label{eq:m}
\phi(\ubf):= \frac{1}{2} \sum\limits^{c}_{i=1} \| \textbf{PF} \ubf_i - \fbf_i \|^2  +  R(\J(\ubf)) + R_f(\Fbf \ubf_i).
\end{equation}
The objective function $\phi$ is the variational model for pMRI reconstruction, in which $\Theta$ is the set collects all the parameters that learned from the regularizers $R \circ \J$ and $R_f \circ \Fbf$, so $\phi$ is depending on $\Theta$.
Problem \ref{learnable_pmri} is formulated in the scenario of reconstructing the multi-coil MRI data.  The final reconstruction result $\ubf_{\Theta}$ is the forward network output that dependent on network parameters $\Theta$.   

The deep learning approach for pMRI reconstruction in lower level problem \eqref{learnable_pmri_lower} can be cast and formulated as a discrete-time optimal control system. 
We use one data sample $(\fbf, \ubf^*)$ and omit the average and data indexes for notation simplicity. The architecture of deep unrolling method follows the iterations of optimization algorithms and solve for the minimizer of the following problem: 
\begin{subequations}\label{eq:bilevel}
\begin{align}
    \min_{\Theta}\ \ \  & \ell(\ubf_{\Theta}), \label{eq:bilevelupper} \\ 
    \mathrm{s.t.}\ \ \ & \ubf{(t)} =   \gbf(\ubf{(t-1)},\theta(t)),\ t= 1,\cdots, T, \label{eq:bilevelconstraint}\\
 & \ubf{(0)} =  \gbf_0(\fbf,\theta(0)), \label{eq:init}
\end{align}
\end{subequations}
where $\Theta = (\theta(0),\cdots, \theta(T))^{\intercal}$ is the collection of control variables $\theta(t)$ at all time steps (phases) respectively.  In \eqref{eq:bilevelupper},  $\ubf_{\Theta} = \ubf(T)$, which is the output image from the last $T$-th phase of the entire network so that we want to find optimal $\ubf(T)$ that close to  $\argmin_{\ubf}  \phi_{\Theta}(\ubf)$. Equations \eqref{eq:bilevelconstraint}, \eqref{eq:init} are inspired by deep unrolling algorithm for solving the lower level constraint \eqref{learnable_pmri_lower}. Given $\theta(t)$, $\ubf(t)$ is the state of updated reconstruction multi-coil data from $t$-th phase for each $t=0,\cdots, T $. $\gbf$ is a multi-phase unrolling network inspired by the proximal gradient algorithm, and the output of $\gbf(\cdot) \in \C^{m \times n \times c}$  is the updated multi-coil MRI data from each phase. The network $\gbf$ is the intermediate mapping from $\ubf{(t)}$ to $\ubf{(t+1)}$ for $t=0,\cdots, T-1 $, whose structure is explained in Section \ref{MLM} for minimizing the variational model \eqref{eq:m}. The network $\gbf_0$ with initial control parameter $\theta(0)$ maps the partial k-space measurement $ \fbf$ to an initial reconstruction $ \ubf{(0)}$ as the input of this optimal control system. 

To summarize in brief, we solve for the minimizer (reconstruction result) of the lower level problem \eqref{learnable_pmri_lower} in a discrete-time optimal control framework \eqref{eq:bilevel}. The dynamic system \eqref{eq:bilevelconstraint}, \eqref{eq:init} as the constrain of  \eqref{eq:bilevelupper} is modeled as the optimal control system of the variational model \eqref{eq:m}.

The loss function $\ell(\ubf{(T)})$ measures the discrepancy between the final state $\ubf{(T)}$ and the reference image $\ubf^*$ obtained using full k-space data in the training data set. We set the loss function in  \eqref{learnable_pmri_upper} and \eqref{eq:bilevelupper} for the proposed method as:
\begin{equation}\label{eq:loss}
\begin{aligned}
& \ell(\ubf_{\Theta}) = \ell(\ubf{(T)}) = \sum\nolimits^{c}_{i=1} \gamma \| \ubf_i{(T)} - \ubf^*_i\| +  \\ 
& \| |\J(\Bar{\ubf}{(T)})|- \text{RSS}(\ubf^*)\| + \eta \| \text{RSS}(\Bar{\ubf}{(T)}) - \text{RSS}(\ubf^*)\|,  
\end{aligned}
\end{equation}
where $\text{RSS}(\ubf^*)= (\sum_{i=1}^{c} |\ubf^*_i|^2)^{1/2} \in\mathbb{R}^{m \times n}$ is the pointwise root of sum of squares across the $c$ channels of $\ubf^*$, $|\cdot|$ is the pointwise modulus, and $\gamma, \eta > 0$ are prescribed weight parameters. 

The motivation of applying learnable image space regularization $R \circ \J$ and k-space regularization $R_f \circ \Fbf$ in model \eqref{eq:m} is explained in the following: (i) Image domain network recovers the high spatial resolution, but may not suppress some artifacts. Frequency domain network is more suitable to remove high-frequency artifacts. (ii) Image domain and k-space information are equivalent due to the global linear transformation, but adding nonlinear activations with CNNs can feasibly improve the efficacy of network learning and boost the reconstruction performance.
We parametrize the combination operator $\J$ by CNNs since the partial k-space data were scanned by multiple coil arrays, and introduce cross-correlation among channels of coil-images which could be compatible for CNN structure.

\subsection{Design of $\gbf$ and $\gbf_0$}\label{subsec:g}

Denote  function $ h(\ubf) $ as the data fidelity term, one of the famous traditional method for solving problem $ \min_{\ubf} h(\ubf) + R(\ubf) $ is the
\emph{proximal gradient algorithm} \cite{parikh2014proximal}:
\begin{subequations}\label{ista_alg}
\begin{align}
\bbf^l &= \ubf^l - \alpha^l \nabla h(\ubf^l),\\
\ubf^{l+1} &= \prox_{\alpha^l R} (\bbf^l),
\end{align}
\end{subequations}
where $\alpha^l >0$ is the step size, the proximity operator $\prox_{\alpha R}$ defined below: 
\begin{equation}\label{prox}
\prox_{\alpha R}(\bbf) := \argmin_{\xbf} \left\{\frac{1}{2 \alpha} \| \xbf - \bbf \|^2 + R(\xbf) \right\}, \end{equation} 
where $[\xbf]_i = \xbf_i \in \mathbb{C}^{m \times n}$ for any $\xbf \in \mathbb{C}^{m \times n \times c}$.
The proposed network structure is inspired by \eqref{ista_alg} for solving \eqref{learnable_pmri_lower} which can be cast as an iterative procedure with $T$ phases in the discrete dynamic system \eqref{eq:bilevelconstraint} and \eqref{eq:init}.
We parametrize the $t$-th phase consists of three steps:
\begin{small}
\begin{subequations}\label{eq:newmodel} 
\begin{align}
\bbf_i{(t)}  = \ubf_i{(t)} - \rho_t  \Fbf^{H}  \Pbf^{\top}  (\Pbf \Fbf \ubf_i{(t)} - \fbf_i),  & \quad i = 1,\cdots, c, \label{eq:newbi}  \\
\bar{\ubf}_i{(t)}   = [\prox_{\rho_t R(\J(\cdot)) } (\bbf{(t)})]_i, & \quad i = 1,\cdots, c, \label{eq:newubar} \\
{\ubf}_i{(t+1)}  = [\prox_{ \rho_t R_f(\Fbf(\cdot))}  (\bar{\ubf}{(t)})]_i, & \quad i = 1,\cdots, c \label{eq:newui},
\end{align}
\end{subequations}
\end{small}
for $t=0,\cdots, T-1 $ and $\bbf{(t)} = (\bbf_1{(t)},\dots,\bbf_{c}{(t)}) \in \mathbb{C}^{m \times n \times c}$,  $\rho_t>0$ is the step size. $F$ denotes the normalized discrete Fourier transform and $F^{H}$ the complex conjugate transpose (i.e., Hermitian transpose) of $F$, here $F^{H}$ is the inverse discrete Fourier transform. 

The step \eqref{eq:newbi} computes $ \bbf{(t)} $ by applying the gradient decent algorithm to minimize the data fidelity term in \eqref{eq:m} which is straightforward to compute.
The first proximal update step \eqref{eq:newubar} equipped with the joint regularizer $ R(\J(\cdot))$ and upgrades its input $ \bbf_i{(t)} $ to a multi-coil image $\bar{\ubf}_i{(t)}$.
Ideally, the regularization $R\circ \J$ can be parameterized as a deep neural network whose parameters can be adaptively learned from data, however, in such case $\prox_{\rho_t R(\J(\cdot)) }$ does not have closed form and can be difficult to compute.
As an alternative, the proximity operator $\prox_{\rho_t R(\J(\cdot)) }$ can be directly parametrized as a learnable denoiser and solve \eqref{eq:newubar} in each iteration. 
For the similar reason,  the proximity operator $\prox_{R_f(\Fbf(\cdot))}$  in \eqref{eq:newui} can also be
parametrized as CNNs, which further improves the accuracy of the k-space measurement.
In both \eqref{eq:newubar} and \eqref{eq:newui}, the regularizers $R \circ \J$ and $R_f \circ \Fbf $ extract complex features through neural networks and the proximity points can be learned in denoising network via ResNet \cite{7780459} structure.

We frame step \eqref{eq:newubar} incorporates joint reconstruction to update coil-images via ResNet \cite{7780459}: $\Bar{\ubf}{(t)} =  \bbf{(t)} + \M( \bbf{(t)})$,  where $ \M$ represents a multi-layer CNN by executing approximation to the proximal mapping in the image space.
We propose to first learn a nonlinear operator $\J$ that combines $\{ \bbf_i\}$ into the image $ \zbf = \J (\bbf_1,\dots,\bbf_{c}) \in \mathbb{C}^{m\times n}$ with homogeneous contrast. Then apply a nonlinear operator $\G$ on $\zbf$ with $ \G(\zbf) \in \mathbb{C}^{m\times n\times N_f}$ to extract a $N_f$-dimensional features. The nonlinear operator $\J$ contains four convolutions with an activation function in between, each convolution obtains kernel size $ 3\times3$. The nonlinear operator $\G$ consists of three convolutions with filter size $ 9\times 9$. For the sake of improving the capacity of the network, $ \tilde{\J}$ and $ \tilde{\G}$ are employed as adjoint operators of $ \J$ and $ \G$ respectively with symmetric structure and parameters are trained separately. $\G \circ \J $ was designed in the sense of playing a role as an encoder and $\tilde{\J} \circ \tilde{\G}$ as a decoder. Therefore, image domain network can be parametrized as compositions of four CNN operators: $\M = \tilde{\J} \circ \tilde{\G} \circ \G \circ \J $ with output $ \M( \bbf{(t)})\in \mathbb{C}^{m \times n\times c}$. 
This step \eqref{eq:newubar} outputs the multi-coil data $ \Bar{\ubf}  = (\Bar{\ubf}_1, \dots, \Bar{\ubf}_{c})  \in \mathbb{C}^{m \times n \times c}$ from image domain network and we apply the combination operator $\J$ on $\Bar{\ubf}$ to obtain a full FOV MR single-channel image $ \vbf = \J(\Bar{\ubf})  \in \mathbb{C}^{m\times n}$ that we desired for reconstruction.

Step \eqref{eq:newui} leverages k-space information and further suppresses the high-frequency artifacts. The output  $\Bar{\ubf}{(t)} $ from \eqref{eq:newubar} is passed to a k-space domain network by a ResNet structure
$\ubf{(t+1)} = \ \Bar{\ubf}{(t)} +  \Fbf^{H}  \K \big( \Fbf(\Bar{\ubf}{(t)} ) \big)$, where $ \Fbf^{H}  \K  \Fbf$ is a CNN operator to refine and further improve the accuracy of the k-space data from each coil. The CNN  $\K$  consists four convolutions, the last convolution kernel numbers meets the channel number of coil images.

Therefore, \eqref{eq:newmodel} is proceed in the following scheme:
\begin{small}
\begin{subequations}  \label{eq:scheme}
\begin{align}
\bbf_i{(t)} =  \ \ubf_i{(t)} - \rho_t \Fbf^{H}  \Pbf^{\top} (\Pbf \Fbf \ubf_i{(t)} - \fbf_i),& \quad i = 1,\cdots, c, \label{eq:schemeb} \\
\bar{\ubf}_i{(t)}  =  \ \bbf_i{(t)} + \M( \bbf_i{(t)}), & \quad i = 1,\cdots, c, \label{eq:schemeu-bar}   \\
\ubf_i{(t+1)} = \ \Bar{\ubf}_i{(t)} +   \Fbf^{H}  \K \big( \Fbf(\Bar{\ubf}_i{(t)} ) \big), & \quad i = 1,\cdots, c, \label{eq:schemeu} 
\end{align}
\end{subequations}
\end{small}
for $ t = 0,\cdots,T-1$.
Now we can derive the function $\gbf$ described in \eqref{eq:bilevel} by combining \eqref{eq:scheme}:
\begin{subequations}\label{eq:g}
\begin{align}
 \ubf_i{(t+1)}&  =  ( \N - \rho_t \Fbf^{H}  \Pbf^{\top} \Pbf \Fbf -  \N( \rho_t \Fbf^{H}  \Pbf^{\top} \Pbf \Fbf) ) \ubf_i{(t)} \notag\\
&  + \ubf_i{(t)} + \rho_t \Fbf^{H}  \Pbf^{\top} \fbf_i-\N(\rho_t \Fbf^{H}  \Pbf^{\top} \fbf_i) \label{eq:funcg}\\
& = \gbf(\ubf_i{(t)}), \label{eq:gu}
\end{align}
\end{subequations}
where $ \N =\M + \Fbf^{H} \K \Fbf + \Fbf^{H} \K \Fbf  \M$, and the function $\gbf$ maps $\ubf_i{(t)}$ to $\ubf_i{(t+1)}$ for $t=0,\cdots, T-1 $ defined in \eqref{eq:funcg}.

Our proposed reconstruction network is composed of a prescribed $T$ phases, the initial input of the network is designed as $\ubf{(0)} =  \gbf_0(\fbf, \theta(0)):= \Fbf^{H}(\fbf+ \K_0(\fbf))$ where $\K_0$ is a CNN operator applied to $\fbf$ in residual learning to interpolate the missing data and generate a pseudo full k-space. 
With the chosen initial $\{ \ubf_i{(0)}\}$ and partial k-space data $\{\fbf_i\}$ as input, the network performs the update \eqref{eq:scheme} in the $t$-th phase for $t=0,\cdots, T-1$  and finally the entire network reconstructs coil-images $\ubf{(T)}$ and $ \J (\Bar{\ubf}{(T)}) $, which is the final single-body image reconstructed as a by-product (complex-valued). 

Fig. \ref{fig:all-phases} displays the flowchart of our entire network procedure. The initial reconstruction suppresses artifacts caused by undersampling.  We display the flowchart of these CNN structures of each phase in Fig. \ref{fig:t+1phase}. %
We apply complex-valued convolutions where multiplications between complex numbers are performed and use componentwise complex-valued activation function $\C$ReLU($a+ib$) = ReLU($a$) + $i$ReLU($b$) as suggested in \cite{cole2020analysis}. 
The Landweber update step \eqref{eq:schemeb} plays a role in increasing communication between image space and k-space. The learnable step size $\rho_t$ controls the speed and stability of the convergence. In the image space denoising step \eqref{eq:schemeu-bar}, the operator $\J$ extracts feature across all the channels so that spatial resolution is improved and tissue details are recovered in the channel-combined image. The CNN $\M$ carrying channel-integration $\J$ proceed $T$ times with shared weights in every two phases therefore the spatial features across channels are learned in an efficient way. However, the oscillatory artifacts could be misinterpreted as real features, which might be sharpened.  In the k-space denoising update step \eqref{eq:schemeu}, the k-space network $\K$ is compatible with low-frequency information, so it releases the high-frequency artifacts and recovers the structure of the image \cite{doi:10.1002/mrm.27201, wang2020ikwi}. Therefore 
iterating the networks $\M$ and $\K$ in different domains with their individual effects,  the performance compensates and the shortcomings of both networks offset each other. We evaluate the effect of hybrid domain reconstruction in ablation studies. 
Furthermore, the prescribed denoising networks in \eqref{eq:schemeu-bar} and \eqref{eq:schemeu} refine and update coil-images in each iteration. This iterative procedure triggers the reconstruction quality of $ \J (\Bar{\ubf}{(t)}) $ get successively enhancement. 
 
\begin{figure}[t]
    \centering
    \includegraphics[width=1\linewidth]{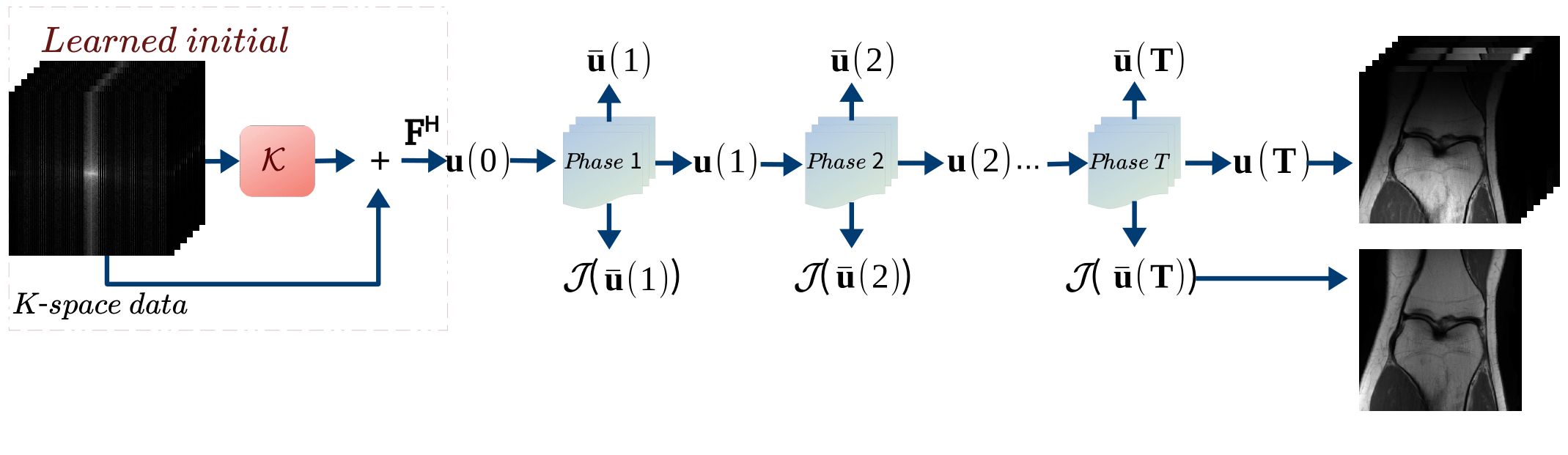}
    \caption{The proposed framework paradigm for  all  phases.}
    \label{fig:all-phases}
\end{figure}
\begin{figure}[t]
    \centering
    \includegraphics[width=1\linewidth]{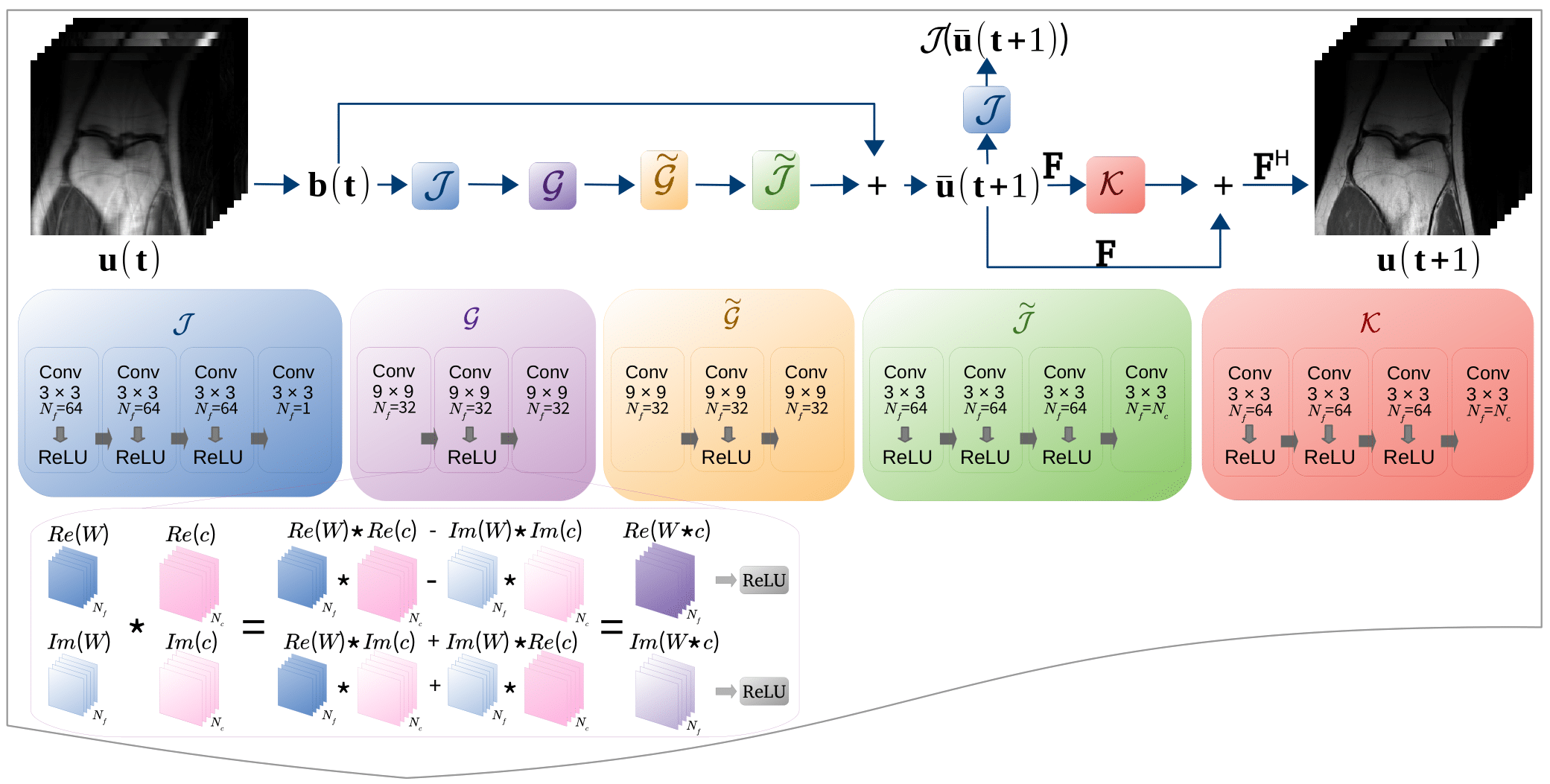}
    \caption{The proposed framework paradigm at $ t+1$-th phase.
   $W = Re(W) + i  Im(W)$ representing complex convolution filter kernels.}
    \label{fig:t+1phase}
\end{figure}

\subsection{Network training from the view of the Method of Lagrangian Multipliers (MLM)}\label{MLM}
The network parameters to be solved from \eqref{eq:bilevel} are $\Theta = \{\theta(t): t = 0,\cdots,T\}$, where $\theta(t)=\{\rho_{t},\J_{t},\G_{t},\tilde{ \G}_{t}, \tilde{ \J}_{t}, \K_{t} \}$ for $t =1, \cdots, T$ and $\theta(0)=\K_{0} $.

The control problem \eqref{eq:bilevel} can be solved by using MLM. The corresponding Lagrangian function is
\begin{equation}
\label{eq:lag}
\begin{aligned}
L(\textbf{U},\Theta;\Lambda) = & \ell(\ubf{(T)}) +  \langle \lambda(0) , \ubf{(0)} - \gbf_0 \big(\fbf , \theta(0) \big) \rangle \\
&+ \sum\nolimits^{T}_{t=1} \langle \lambda(t), \ubf{(t)} - \gbf\big(\ubf{(t-1)},  \theta(t) \big) \rangle,
\end{aligned}
\end{equation}
where $\textbf{U}=(\ubf{(0)}, \cdots, \ubf{(T)})^{\intercal}$ is the collection of all the states $\ubf{(t)}$,  $\Lambda = (\lambda(0), \cdots, \lambda(T))^{\intercal}$ are Lagrangian multipliers of \eqref{eq:bilevel}. 
The algorithm proceed in the iterative scheme to update $\Theta^k$, for each training epoch  $k=0, \cdots, K$, $\Theta^{k}=(  \theta^{k}(0),\cdots, \theta^{k}(T))^{\intercal}$, $\textbf{U}^k =( \ubf^{k}{(0)}, \cdots,  \ubf^{k}{(T)})^{\intercal}$, and $\Lambda^k = (\lambda^k(0), \cdots,  \lambda^k(T))^{\intercal}$ . 

If $(\textbf{U}^*, \Theta^*;\Lambda^*)$ minimizes Lagrange function \eqref{eq:lag}, by the first order optimality condition, we have
\begin{subnumcases}{}
 \partial_{\textbf{U}}  L (\textbf{U}^*, \Theta^*; \Lambda^*) = 0 \label{dU}\\
 \partial_{\Theta} L (\textbf{U}^*, \Theta^*; \Lambda^*) = 0 \label{dTheta}\\
 \partial_{\Lambda}  L (\textbf{U}^*, \Theta^*; \Lambda^*) = 0  \label{dLambda}
\end{subnumcases}
\textbullet Fix $\Theta = \Theta^k$, define $(\textbf{U}^k,\Lambda^k):= \argmin_{\textbf{U},\Lambda}  L(\textbf{U},\Theta^k;\Lambda) $, then by the first order optimality condition for minimizing $L$ w.r.t $ \lambda(t)$ for $ t=0,\cdots,T$,  $ (\textbf{U}^k,\Lambda^k)$  should satisfy 
\begin{equation}
\begin{aligned}
&\partial_{\lambda(0)}\langle \lambda(0), \ubf{(0)} - \gbf_0(\fbf , \theta^k(0)) \rangle \Big|_{(\ubf^k{(0)}, \lambda^k(0))} =0  \\
&\implies \ubf^k{(0)} = \gbf_0(\fbf , \theta^k(0)) \label{eq:ut0}\\
\end{aligned}
\end{equation}
\begin{equation}
\begin{aligned}
&\partial_{\lambda(t)}\langle \lambda(t), \ubf{(t)} - \gbf(\ubf{(t-1)},\theta^k(t) ) \rangle \Big|_{(\ubf^k{(t)}, \lambda^k(t))} =0 \\
&\implies \ubf^k{(t)} = \gbf(\ubf^k{(t-1)}, \theta^k(t)), \ t = 1,\cdots, T. \label{eq:ut1} 
\end{aligned}
\end{equation}

Then by the first order optimality condition for minimizing $L$ w.r.t  $\ubf{(t)}$, for $t=T$, we get
\begin{equation}
\begin{aligned}
&\partial_{\ubf{(T)}} [\ell(\ubf{(T)}) + \langle \lambda(t), \ubf{(T)} \rangle ] \Big|_{(\ubf^k{(t)}, \lambda^k(t))} = 0\\ 
&\implies \lambda^k(T) = - \partial_{\ubf{(T)}}  \ell(\ubf^k{(T)}), \label{eq:lambdaT} 
\end{aligned}
\end{equation}
for $t = 0,\cdots,T-1$:
\begin{small}     
   \begin{eqnarray}     
&\partial_{\ubf{(t)}}  [\langle \lambda(t), \ubf{(t)} \rangle- \langle \lambda(t+1), \gbf(\ubf{(t)}, \theta^k{(t+1)}) \rangle] \Big|_{(\ubf^k{(t)}, \lambda^k(t))}  = 0 \notag\\ 
&\implies \lambda^k(t) = \langle \lambda^k(t+1), \partial_{\ubf{(t)}}  \gbf(\ubf^k{(t)}, \theta^k{(t+1)}) \rangle,  \label{eq:lambdat}        
   \end{eqnarray}              
\end{small}    
\textbullet Fix  $(\textbf{U}^k,\Lambda^k) $ for updating $\Theta$,  we compute the gradient $\partial_{\Theta}L(\textbf{U}^k,\Theta; \Lambda^k)$:
\begin{subequations}
\label{eq:dtheta}
\begin{align}
&\partial_{\theta(t)}  L(\textbf{U}^k,\theta(t);\Lambda^k)  = \partial_{\theta(t)} [ -\langle \lambda^k(t), \gbf(\ubf^k{(t-1)}, \theta(t)) \rangle ] \notag\\
& =  -\langle \lambda^k(t), \partial_{\theta(t)} \ \gbf(\ubf^k{(t-1)}, \theta(t)) \rangle, t = 1,\cdots, T, \label{eq:dtheta1}\\
&\partial_{\theta(0)}  L(\textbf{U}^k,\theta(0);\Lambda^k) =  -\langle \lambda^k(0), \partial_{\theta(0)}  \gbf_0(\fbf, \theta(0)) \rangle.\label{eq:dtheta0}
\end{align}
\end{subequations} 
\begin{theorem}\label{thm}
$\partial_{\Theta}L(\textbf{U}^k,\Theta;\Lambda^k) = \partial_{\Theta} \ell(\ubf^k{(T)}(\Theta) )$.
\end{theorem}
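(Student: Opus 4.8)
The plan is to verify the identity at the current iterate $\Theta^k$ by showing that the explicit expression for $\partial_{\Theta}L(\textbf{U}^k,\Theta;\Lambda^k)$ recorded in \eqref{eq:dtheta} coincides with the chain-rule (back-propagation) gradient of $\ell(\ubf^k{(T)}(\Theta))$ through the forward dynamics \eqref{eq:ut0}--\eqref{eq:ut1}. The right-hand side is a \emph{total} derivative: $\ubf{(T)}$ depends on $\theta(t)$ directly through the $t$-th application of $\gbf$ and then indirectly through every later state. The left-hand side is a \emph{partial} derivative in which the forward states $\textbf{U}^k$ and the multipliers $\Lambda^k$ are frozen. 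The bridge between the two is the observation that the adjoint multipliers $\lambda^k(t)$ are exactly the back-propagated sensitivities of the loss with respect to the intermediate states.

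The key step is a backward induction on $t$. Writing $A_t := \partial_{\ubf{(t)}}\gbf(\ubf^k{(t)},\theta^k{(t+1)})$ for the Jacobian of the state map at step $t$, I would read the recursion \eqref{eq:lambdat} as $\lambda^k(t) = A_t^{\top}\lambda^k(t+1)$ together with the terminal condition \eqref{eq:lambdaT}, $\lambda^k(T) = -\partial_{\ubf{(T)}}\ell(\ubf^k{(T)})$. Unrolling gives the closed form
\[
\lambda^k(t) = -\big(A_{T-1}A_{T-2}\cdots A_t\big)^{\top}\partial_{\ubf{(T)}}\ell = -\Big(\tfrac{\partial \ubf^k{(T)}}{\partial \ubf{(t)}}\Big)^{\top}\partial_{\ubf{(T)}}\ell ,
\]
so that $-\lambda^k(t)$ is precisely the gradient of $\ell$ with respect to $\ubf{(t)}$ along the trajectory. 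The care required here is purely bookkeeping: the order of the matrix factors and the transposition must match the convention under which \eqref{eq:lambdat} is written, and for the complex-valued maps of Section \ref{subsec:g} the Jacobians are to be read in the Wirtinger sense used implicitly throughout.

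With this identity the conclusion is one line of chain rule. Because $\theta(t)$ enters the forward map only at step $t$, one has $\tfrac{d\ubf^k{(T)}}{d\theta(t)} = \big(\tfrac{\partial \ubf^k{(T)}}{\partial \ubf{(t)}}\big)\,\partial_{\theta(t)}\gbf(\ubf^k{(t-1)},\theta(t))$, whence
\[
\partial_{\theta(t)}\ell(\ubf^k{(T)}) = \big(\partial_{\theta(t)}\gbf\big)^{\top}\Big(\tfrac{\partial \ubf^k{(T)}}{\partial \ubf{(t)}}\Big)^{\top}\partial_{\ubf{(T)}}\ell = -\big\langle \lambda^k(t),\,\partial_{\theta(t)}\gbf(\ubf^k{(t-1)},\theta(t))\big\rangle ,
\]
which is exactly \eqref{eq:dtheta1}. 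The initial case $t=0$ is identical with $\gbf$ replaced by $\gbf_0$ and $\ubf{(t-1)}$ by $\fbf$, reproducing \eqref{eq:dtheta0}; stacking the components over $t=0,\dots,T$ then yields the claimed equality of the full gradients.

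I expect the only genuine obstacle to be conceptual rather than computational, namely justifying that the implicit dependence of the states on $\Theta$ may be dropped when differentiating $L$. The cleanest way to see this is an envelope argument that also motivates the whole construction: along the forward trajectory every constraint residual $\ubf^k{(t)}-\gbf(\ubf^k{(t-1)},\theta^k(t))$ vanishes, so $L(\textbf{U}(\Theta),\Theta;\Lambda^k)\equiv \ell(\ubf{(T)}(\Theta))$ as functions of $\Theta$. Differentiating and splitting into explicit and implicit parts gives $\partial_{\Theta}\ell = \partial_{\textbf{U}}L\cdot \tfrac{d\textbf{U}}{d\Theta} + \partial_{\Theta}L$, and the first term drops out at $(\textbf{U}^k,\Lambda^k)$ precisely because the adjoint equations \eqref{eq:lambdaT}--\eqref{eq:lambdat} are the stationarity condition $\partial_{\textbf{U}}L=0$ from \eqref{dU}. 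This is the reason that computing $\Lambda^k$ by the backward pass makes the frozen partial derivative agree with the true total derivative, i.e.\ why the Lagrangian update is back-propagation.
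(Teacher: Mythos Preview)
Your proposal is correct and follows essentially the same strategy as the paper's proof: both establish by backward induction that $\lambda^k(t) = -\partial_{\ubf{(t)}}\ell(\ubf^k{(T)})$ from the terminal condition \eqref{eq:lambdaT} and the adjoint recursion \eqref{eq:lambdat}, and then apply the chain rule to identify \eqref{eq:dtheta1}--\eqref{eq:dtheta0} with $\partial_{\theta(t)}\ell(\ubf^k{(T)})$. Your explicit Jacobian unrolling and the envelope-theorem reading at the end are helpful glosses, but the underlying argument is the same as the paper's.
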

\begin{proof}
First we show the following holds for $t=0,\cdots, T$:
\begin{equation}
\label{eq:everyt}
\lambda^k(t)  = - \partial_{\ubf{(t)}}\ell(\ubf^k{(T)}) 
\end{equation}

From \eqref{eq:lambdaT}, \eqref{eq:everyt} is true when $ t=T$.
Suppose \eqref{eq:everyt} true for $ t=\tau \in \{1,\cdots ,T\} $.
From \eqref{eq:lambdat}, we have
\begin{subequations}
\begin{align}
\lambda^k(\tau-1) & = \langle \lambda^k(\tau), \partial_{\ubf{(\tau-1)}} \ \gbf(\ubf^k{(\tau-1)}, \theta^k(\tau)) \rangle  \notag\\
& = \langle - \partial_{\ubf{(\tau)}} \ell(\ubf^k{(T)}), \partial_{\ubf{(\tau-1)}} \ubf^k{(\tau)}  \rangle  \\
& = - \partial_{\ubf{(\tau-1)}}\ell(\ubf^k{(T)})\label{eq:lambda_T-1}
\end{align}
\end{subequations}
Thus, \eqref{eq:everyt} holds for $ t= \tau-1$. By the principle of induction, \eqref{eq:everyt} is ture for all $t=0, \cdots, T$.

Hence, \eqref{eq:dtheta1} reduces to
\begin{subequations}
\label{eq:dL}
\begin{align}
\partial_{\theta(t)} & L(\textbf{U}^k,\theta(t);\Lambda^k) =  -\langle \lambda^k(t), \partial_{\theta(t)} \ \gbf(\ubf^k{(t-1)}, \theta(t) ) \rangle \notag \\
& =  \langle \partial_{\ubf{(t)} } \ell(\ubf^k{(T)})  , \partial_{\theta(t)} \ubf^k{(t)} \rangle  \\
& = \partial_{\theta(t)} \ell ( \ubf^k{(T)} ), \ t = 1,\cdots, T. 
\end{align}
\end{subequations} 
Also \eqref{eq:dtheta0} gives $\partial_{\theta(0)} L(\textbf{U}^k,\theta(0);\Lambda^k) 
= \partial_{\theta(0)} \ell ( \ubf^k{(T)} )$.
Therefore, we derive 
$ \partial_{\Theta} L(\textbf{U}^k,\Theta;\Lambda^k) = \partial_{\Theta} \ell ( \ubf^k{(T)}(\Theta))$.
\end{proof}
This theorem further implies that:
Since the gradients are the same,  applying SGD Algorithms or its variance such as Adam \cite{kingma2014adam} to minimize loss function $ \ell$ is equivalent to perform the same algorithm on $L$. Network training algorithm using MLM can be summarized in  Algorithm \ref{alg:1}. 

\begin{algorithm}
\caption{Network Training by MLM}\label{alg:1}
\textbf{Hyperparameter:} $K$ (\#Iterations)\\
\textbf{Initialize:} Initial guess $\theta^{0}(t) \in \Theta^0, t =0,\cdots, T$\\
\For{$k = 0$ \KwTo $K-1$}{
Set $ \ubf^k{(0)} = \gbf_0(\fbf , \theta(0)) $\\
\For{$t=1$ \KwTo $T$}{
$ \ubf^k{(t)} = \gbf(\ubf^k{(t-1)}, \theta^{k}(t))$ }
Set $ \lambda^k(T) =   - \partial_{\ubf}  \ell(\ubf^k{(T)})$ \\
\For{$t = T-1$ \KwTo  $0$ }{$\lambda^k(t) = \langle \lambda^k{(t+1)}, \partial_{\ubf}  \gbf(\ubf^k{(t)}, \theta^{k}(t+1)) \rangle$ }
\For{$t = 0$ \KwTo  $T$ }{
Set $\theta^{k+1}(t) = \argmin_{\theta(t)} L (\textbf{U}^k, \theta(t);\Lambda^k) $. }
}
\textbf{Output:}  $ \theta^{K}(t) , t=0,\cdots,T $.
\end{algorithm}
\section{Experimental Results}
\label{sec:experiment}
\subsection{Data Set}

The data in our experiments was acquired by a 15-channel knee coil array with two pulse sequences: a proton density weighting with (FSPD) and without (PD) fat suppression in the coronal direction from \url{https://github.com/VLOGroup/mri-variationalnetwork}. We used a regular Cartesian sampling mask with 31.56\% sampling ratio as shown in the lower-right of Fig. \ref{PD}. Each of the two sequences data includes images of 20 patients, we select 27-28 central image slices from 19 patients, which amount to 526 images each of size $ 320 \times 320$ as the training dataset, and 15 central image slices are picked from one patient that is not included in the training data set as testing dataset. 

\subsection{Implementation}
\label{Implementation}
 We evaluate classical methods GRAPPA \cite{griswold2002generalized}, SPIRiT \cite{doi:10.1002/mrm.22428}, and deep learning methods VN \cite{doi:10.1002/mrm.26977}, De-AliasingNet \cite{10.1007/978-3-030-32248-9_4}, DeepcomplexMRI \cite{WANG2020136} and Adaptive-CS-Net \cite{pezzotti2020adaptive} over the 15 testing  Coronal  FSPD and Coronal  PD knee images with regular Cartesian sampling in terms of PSNR, structural similarity (SSIM) \cite{wang2004image} and relative error RMSE.
The following equations are computations of SSIM, PSNR and RMSE between  reconstruction $\vbf = |\J(\Bar{\ubf})| $ and ground truth single-body image $\vbf^*$:
\begin{equation}
    SSIM = \frac{(2\mu_{\vbf} \mu_{\vbf^*} + C_1)(2\sigma_{\vbf \vbf^*} + C_2)}{(\mu_{\vbf}^2 + \mu_{\vbf^*}^2+C_1)( \sigma_{\vbf}^2 + \sigma_{\vbf^*}^2 + C_2)},
\end{equation}
 where $\mu_{\vbf}, \mu_{\vbf^*}$ are local means of pixel intensity, $\sigma_{\vbf}, \sigma_{\vbf^*}$ denote the standard deviation and $\sigma_{\vbf \vbf^*} $ 
is covariance between $\vbf$ and $\vbf^* $, $ C_1 = (k_1 L)^2, C_2 = (k_2 L)^2$ are two constants that avoid denominator to be zero, and $ k_1 =0.01, k_2 =0.03$. $L$ is the largest pixel value of the magnitude of coil-images.
\begin{equation}
    PSNR = 20\log_{10} \big(  \max(\abs{\vbf^*})  \big/ \frac{1}{N}\| \vbf^* - |\J(\Bar{\ubf})|  \|^2 \big),
\end{equation}
where $N$ is the total number of pixels in the magnitude of ground truth.
\begin{equation}
    RMSE = \| \vbf^* - |\J(\Bar{\ubf})|  \| / \| \vbf^* \|.
\end{equation}
The relative error between the multi-coil reconstruction $\ubf$ and the ground truth $\ubf^*$ is defined as
\begin{equation}
    RMSE = \sqrt{  \sum\nolimits^{c}_{i=1} \| \ubf^*_i - \ubf_i\|^2 / \sum\nolimits^{c}_{i=1} \| \ubf^*_i  \|^2  }.
\end{equation}

All the experiments are implemented and tested in TensorFlow \cite{abadi2016tensorflow} on a Windows workstation with Intel Core i9 CPU at 3.3GHz and an Nvidia GTX-1080Ti GPU with 11GB of graphics card memory. The parameters in proposed networks are using Xavier initialization \cite{glorot2010understanding} to initialize $\theta^{0}(t) , t=0,\cdots,T $. Indeed, solving line 13 in the Algorithm \ref{alg:1} using any stochastic gradient descent algorithm such as Adam \cite{kingma2014adam} has the same performance as minimizing loss function w.r.t $\theta(t)$ using the same algorithm to update $ \theta^{k+1}(t)$, because of the equivalence of the gradients: $ \partial_{\Theta} L(\textbf{U}^k,\Theta;\Lambda^k) = \partial_{\Theta} \ell ( \ubf^k{(T)}(\Theta))$, which is proved by Theorem \ref{thm}. TensorFlow provides optimized APIs for automatic differentiation which helps to build highly performant input pipelines and keeps high GPU utilization. In order to  implement a more stable gradient calculation, 
we replace line 13 by minimizing $\ell$ with the Adam algorithm. The network was trained with total epochs $K=700$ to update $\Theta^k$. We apply exponentially decay learning rate 0.0001, $\beta_1 = 0.9, \beta_2 = 0.999, \epsilon =10^{-8}$ and mini-batch size of 2 is used in Adam optimizer.  The initial step size is set to $\rho_0 =1$ for both real and imaginary parts and we choose  $\gamma = 10^{-3}, \eta = 10^{-4}$ in \eqref{eq:loss}. The proposed network was implemented with $T=4$ and parameters trained in the network $\M$ are shared for every two phases.

\subsection{Comparison with existing methods}\label{sub:compare}

The average numerical performance  with standard deviations of the proposed method and several state-of-the-art methods are summarized in Table \ref{tab:result}. The proposed method achieves the best reconstruction quality in terms of PSNR/SSIM/RMSE. Our method is parameter efficient due to the network $\M$ share parameters in every two phases so that the entire network reduces more than 1/4 of learnable parameters. We listed the trained parameter numbers and inference time in table \ref{tab:result}.  Fig. \ref{fig:phase_images} and Table \ref{tab:phase_result} indicate that reconstruction performance get improved progressively as $t$ increases.
\begin{figure}
    \centering
\includegraphics[width=0.24\linewidth]{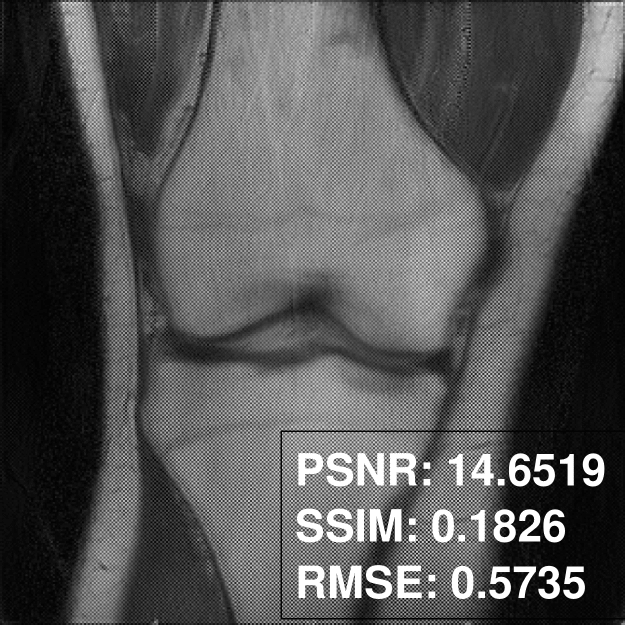}
\includegraphics[width=0.24\linewidth]{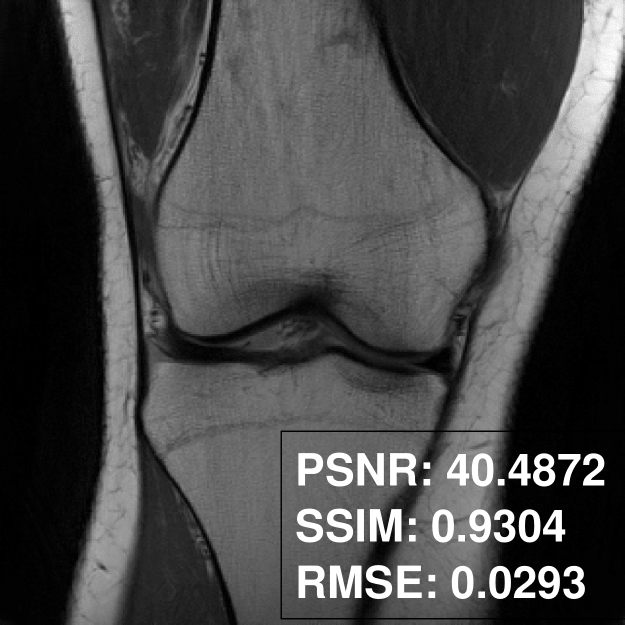}
\includegraphics[width=0.24\linewidth]{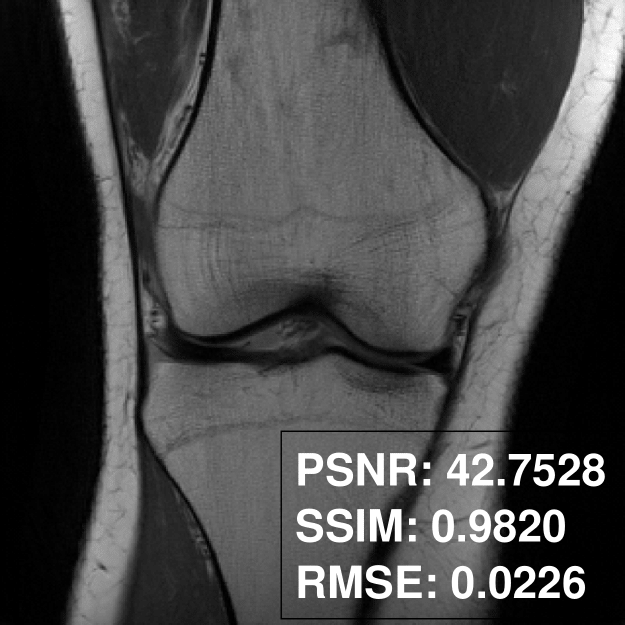}
\includegraphics[width=0.24\linewidth]{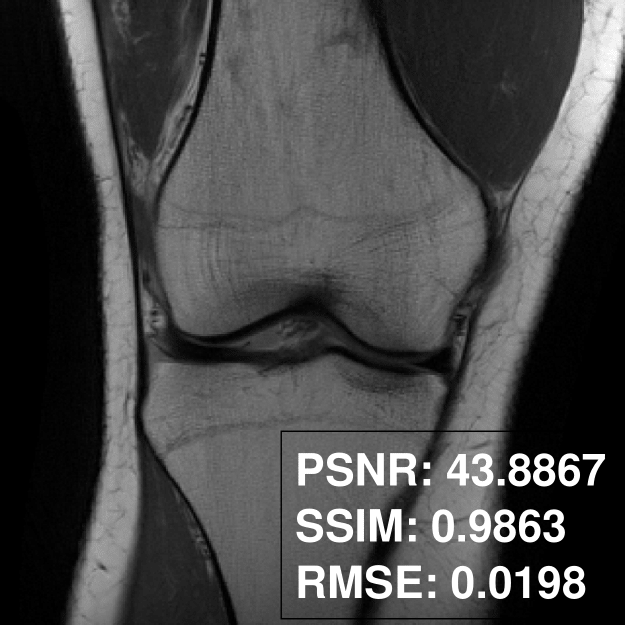}
\caption{ Reconstructed channel-combined images $ \J( \bar{\ubf}{(t)})$ for $ t =1,2,3,4$ of proposed method.}
    \label{fig:phase_images}
\end{figure} 
\begin{table}
\caption{Tested Average PSNR for FSPD data and PD data for each phase.}
\centering
\resizebox{\linewidth}{11mm}{ \begin{tabular}{ccc}
\toprule
Phase number  &  PSNR of FSPD data  & PSNR of PD data  \\
\midrule
1 & 16.2007 & 15.5577\\
2 & 33.7201 & 40.5228\\
3 & 36.9523 & 42.6623\\
4 & 40.7101 & 44.8120\\
 \bottomrule
\end{tabular}}
\label{tab:phase_result}
\end{table}
GRAPPA and SPIRiT adopted calibration kernel size $ 5\times 5$ and Tikhonov regularization in the calibration was set to be 0.01. Tikhonov regularization in the reconstruction was set as $10^{-3}$ for implementing SPIRiT, which took 30 iterations.
In training and testing of VN and DeepcomplexMRI, the network and parameter settings are used as stated in their paper and code. We modified De-AliasingNet by increasing filter numbers to be 64 in 20 iterations to improve the performance for fair competition. In the CDF-Net, we set 128 channels as an initial layer with a depth of 4 in all the three Frequency-Informed U-Nets. We choose $ \sigma$ to be  Softplus activation function. We implemented Adaptive-CS-Net with a total of 15 reconstruction blocks including 2 of 16 filters with size $ 3\times3$, 3 of 32 filters with size $5\times5$, and 2 of 64 filters with size $5\times5$, the blocks are mixed with 2 and 3 scales, which was controlled as the largest tolerance of our GPU capacity. We input the data consistency as prior knowledge of training data to the network, neighboring slices with the center slice are used as input.  For fair competitions, CDF-Net and Adaptive-CS-Net have employed complex convolutions and activation $\C$ReLU.

VN applied precalculated coil sensitivities, and output full-body image. De-AliasingNet, DeepcomplexMRI, CDF-Net, and Adaptive-CS-Net both output multi-coil images and do not require coil sensitivity maps. DeepcomplexMRI uses the adaptive coil combination method \cite{Walsh2000682}, the other networks all applied RSS.  CDF-Net and proposed network perform cross-domain reconstruction, other learning-based methods perform on image domain.
Comparison between referenced pMRI  methods and proposed methods are shown in Fig.~\ref{PD} for PD images and FSPD images are in Fig.~\ref{FSPD}.
We observe the evidence that deep learning-based methods significantly outperform classical methods GRAPPA and SPIRiT in reconstruction accuracy. 
In learning-based methods, the tested images from VN and De-AliasingNet are more blurry than other ones and lost sharp details in some complicated tissue, other methods display only slight differences in the detail.

\begin{table*}
\caption{Quantitative evaluations of the reconstructions on the Coronal FSPD \& PD data and reconstruction time for each of the referenced methods. Time (in seconds) refers to the testing time for each method.} 
\centering
\resizebox{\linewidth}{18mm}{ 
\begin{tabular}{l|ccc|ccc|cc}
\toprule
& & FSPD data & & & PD data & & \\
Method   & PSNR                  & SSIM               & RMSE       & PSNR                & SSIM                & RMSE    &   Time & Parameters\\\midrule
GRAPPA~\cite{griswold2002generalized}   & 24.9251$\pm$0.9341    & 0.4827$\pm$0.0344  & 0.2384$\pm$0.0175 & 30.4154$\pm$0.5924  & 0.7489$\pm$0.0207   & 0.0984$\pm$0.0030 &  280s  & N/A\\
SPIRiT~\cite{doi:10.1002/mrm.22428}  & 28.3525$\pm$1.3314    & 0.6509$\pm$0.0300  & 0.1614$\pm$0.0203 & 32.0011$\pm$0.7920  & 0.7979$\pm$0.0306   & 0.0824$\pm$0.0082 &  43s & N/A\\
VN~\cite{doi:10.1002/mrm.26977} & 30.2588$\pm$1.1790    & 0.7141$\pm$0.0483  & 0.1358$\pm$0.0152 & 37.8265$\pm$0.4000  & 0.9281$\pm$0.0114   & 0.0422$\pm$0.0036 & 0.16s & 0.13 M\\
De-AliasingNet~\cite{10.1007/978-3-030-32248-9_4} & 36.1017$\pm$1.2981    & 0.8941$\pm$0.0269 & 0.0697$\pm$0.0119 & 41.2151$\pm$0.7872  & 0.9711$\pm$0.0033   & 0.0285$\pm$0.0015 & 0.92s & 0.34 M\\ 
DeepcomplexMRI~\cite{WANG2020136} & 36.5706$\pm$1.0215   & 0.9008$\pm$0.0190 & 0.0654$\pm$0.0049 & 41.5756$\pm$0.6271  & 0.9679$\pm$0.0031   & 0.0274$\pm$0.0018 & 1.04s & 1.45 M\\
CDF-Net~\cite{10.1007/978-3-030-59713-9_41} & 40.0405$\pm$1.5518 & 0.9520$\pm$0.0182 & 0.0443$\pm$0.0070 & 43.7943$\pm$1.9888 & 0.9879$\pm$0.0026 & 0.0215$\pm$0.0042 & 0.47s & 3.44 M\\
Adaptive-CS-Net~\cite{pezzotti2020adaptive} & 40.1846$\pm$1.4780 & 0.9534$\pm$0.0175 & 0.0435$\pm$0.0065 & 44.1131$\pm$1.5596 & 0.9878$\pm$0.0026 & 0.0206$\pm$0.0027 & 1.57s & 5.59 M\\
Proposed & \textbf{40.7101$\pm$1.5357} & \textbf{0.9619$\pm$0.0144} & \textbf{0.0408$\pm$0.0051} & \textbf{44.8120$\pm$1.3185} & \textbf{0.9886$\pm$0.0023} & \textbf{0.0189$\pm$0.0018} & 0.52s & 2.92 M\\
 \bottomrule
\end{tabular}}
\label{tab:result}
\end{table*}
\begin{figure*}
\includegraphics[width=0.10\linewidth, angle=180]{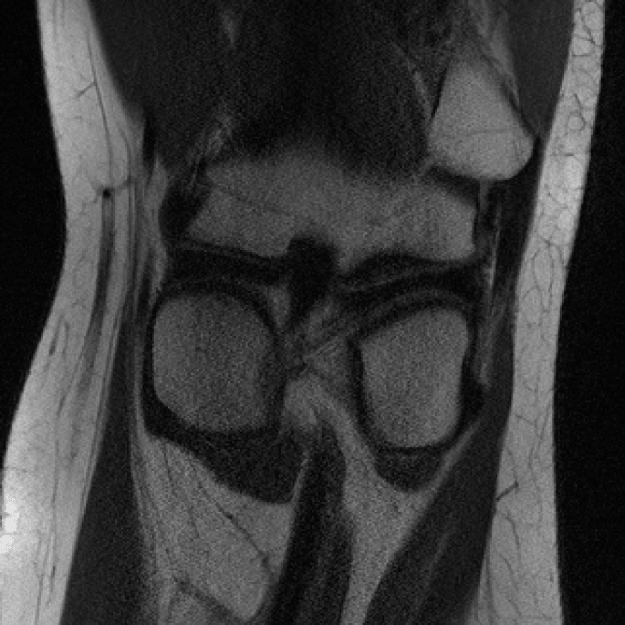}
\includegraphics[width=0.10\linewidth, angle=180]{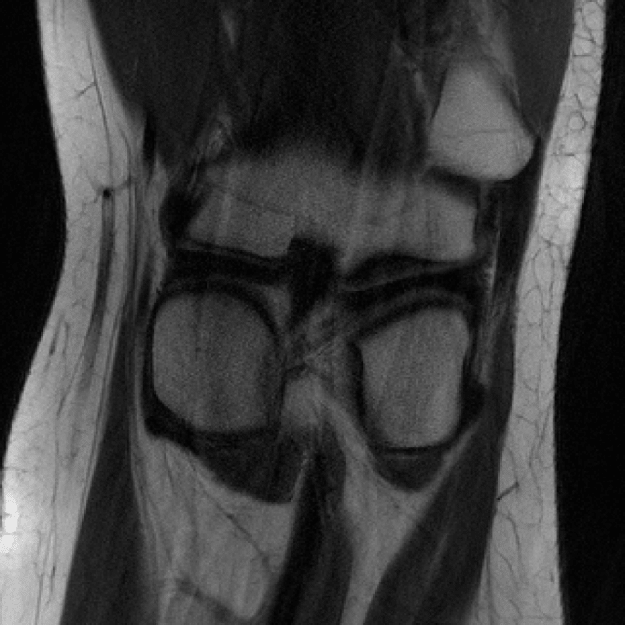}
\includegraphics[width=0.10\linewidth, angle=180]{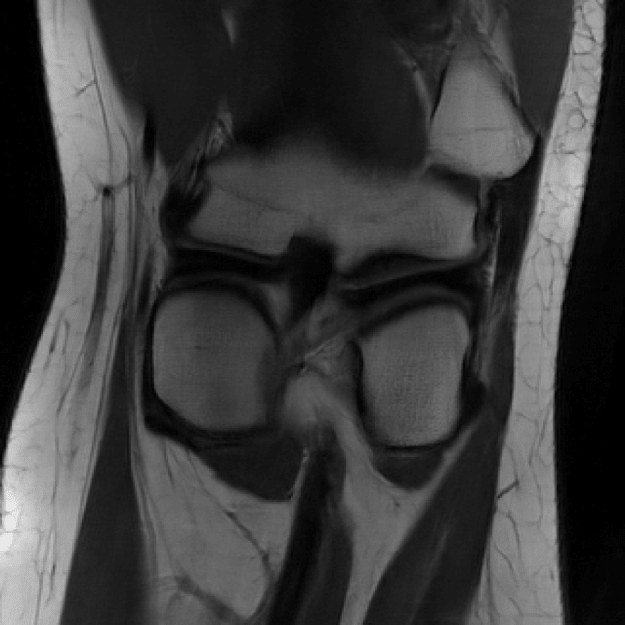}
\includegraphics[width=0.10\linewidth, angle=180]{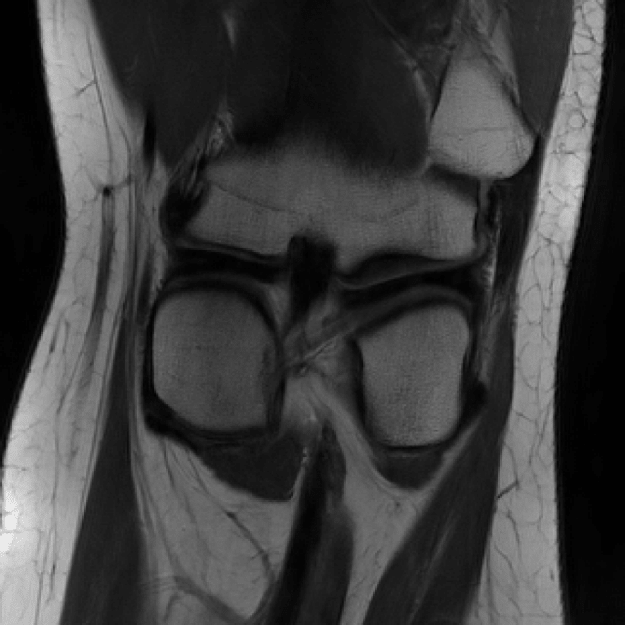}
\includegraphics[width=0.10\linewidth, angle=180]{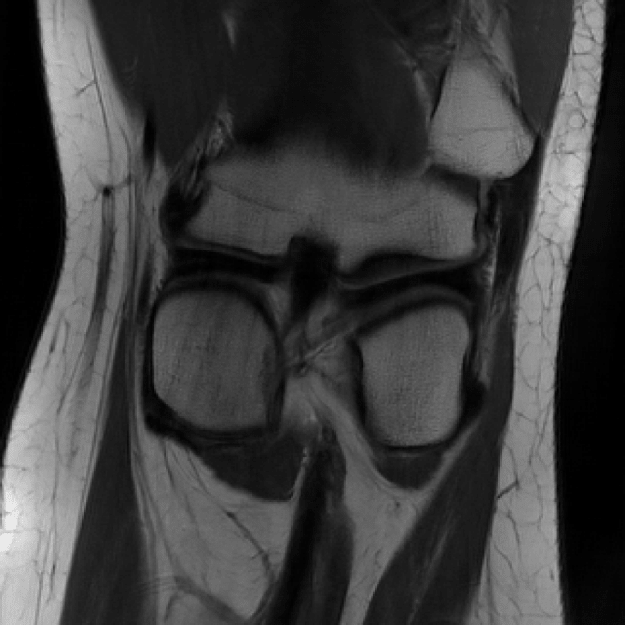}
\includegraphics[width=0.10\linewidth, angle=180]{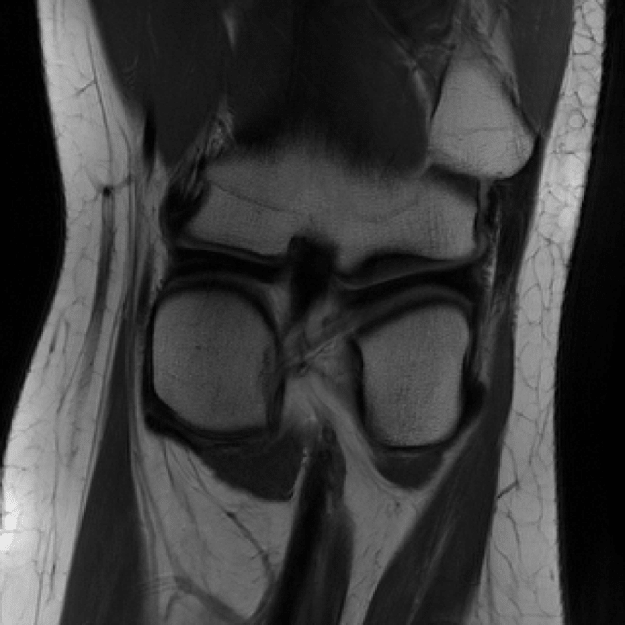}
\includegraphics[width=0.10\linewidth, angle=180]{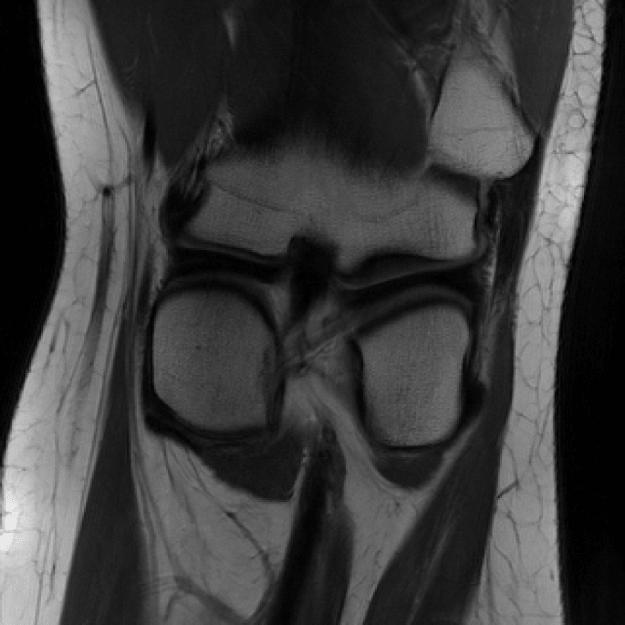}
\includegraphics[width=0.10\linewidth, angle=180]{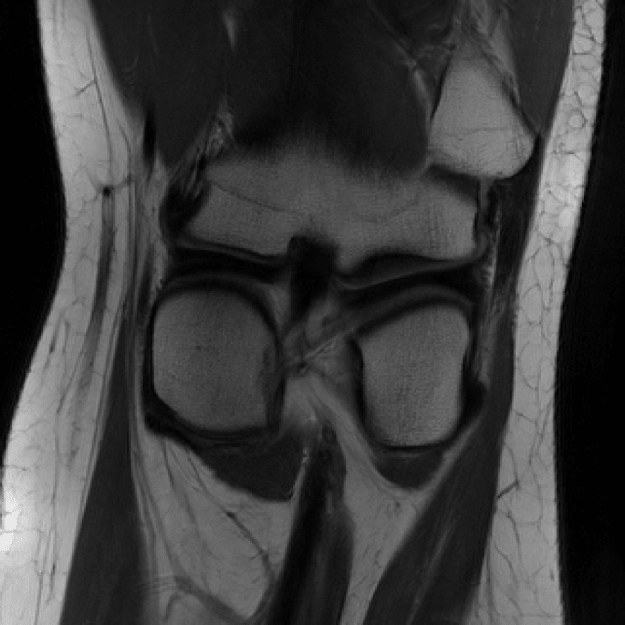}
\includegraphics[width=0.10\linewidth, angle=180]{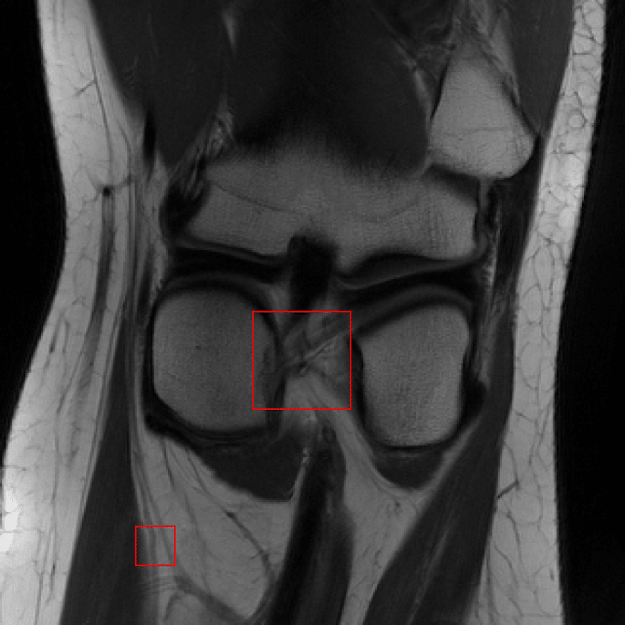}\\
\includegraphics[width=0.10\linewidth, angle=180]{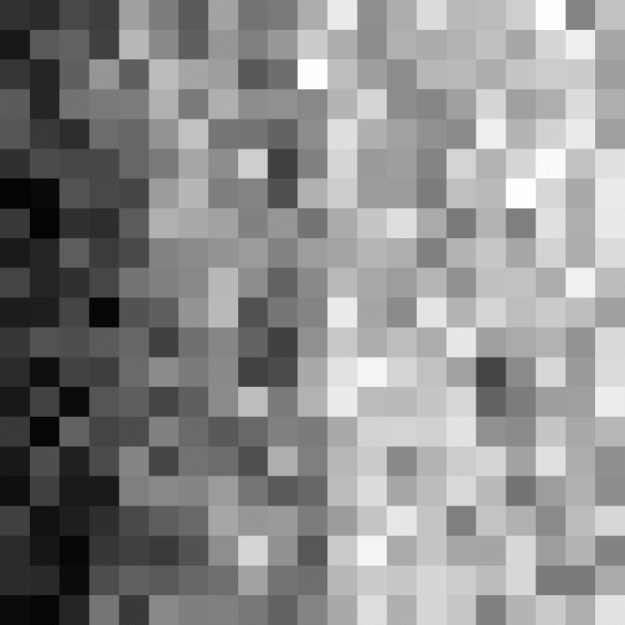}
\includegraphics[width=0.10\linewidth, angle=180]{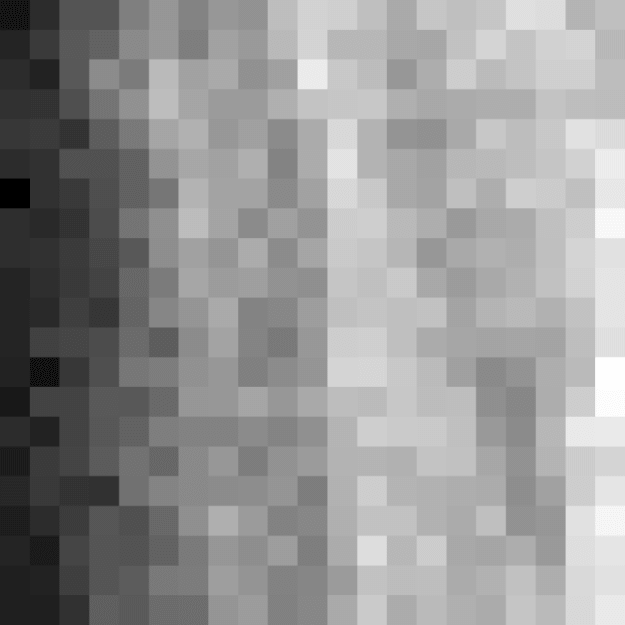}
\includegraphics[width=0.10\linewidth, angle=180]{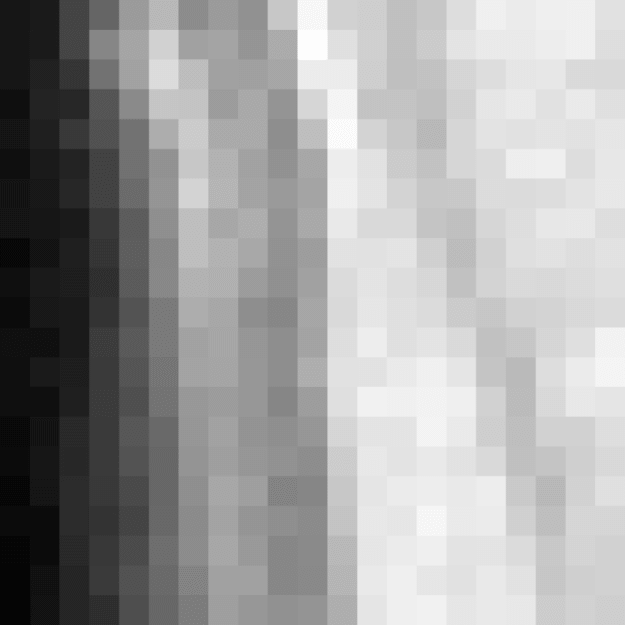}
\includegraphics[width=0.10\linewidth, angle=180]{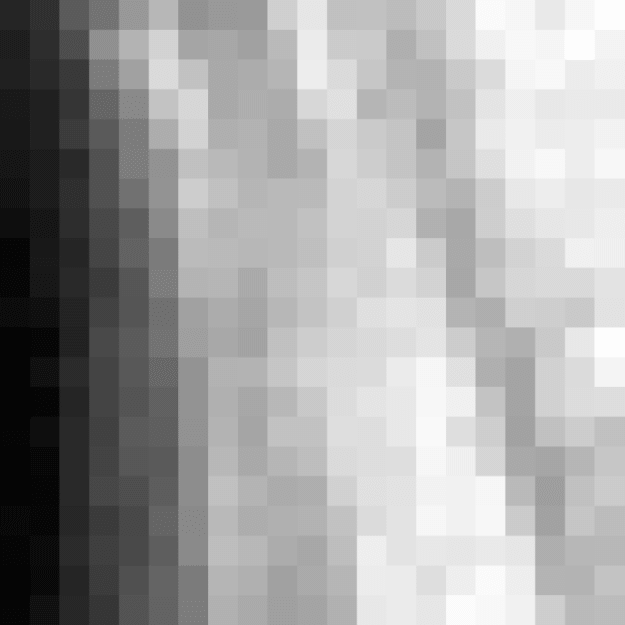}
\includegraphics[width=0.10\linewidth, angle=180]{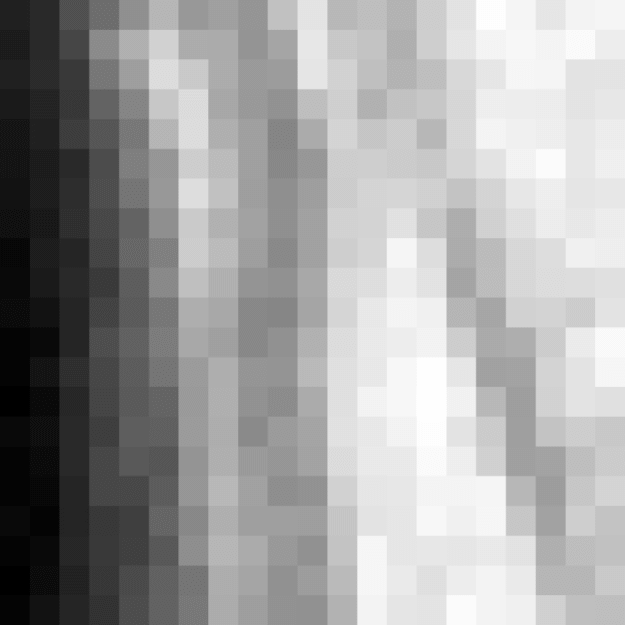}
\includegraphics[width=0.10\linewidth, angle=180]{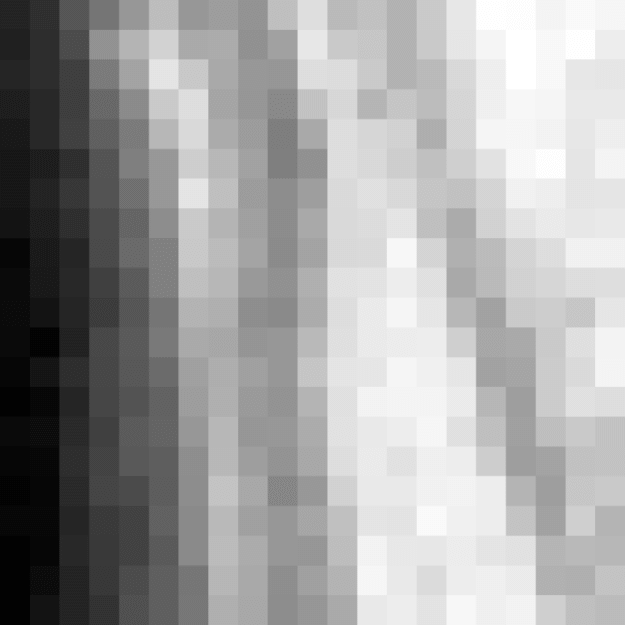}
\includegraphics[width=0.10\linewidth, angle=180]{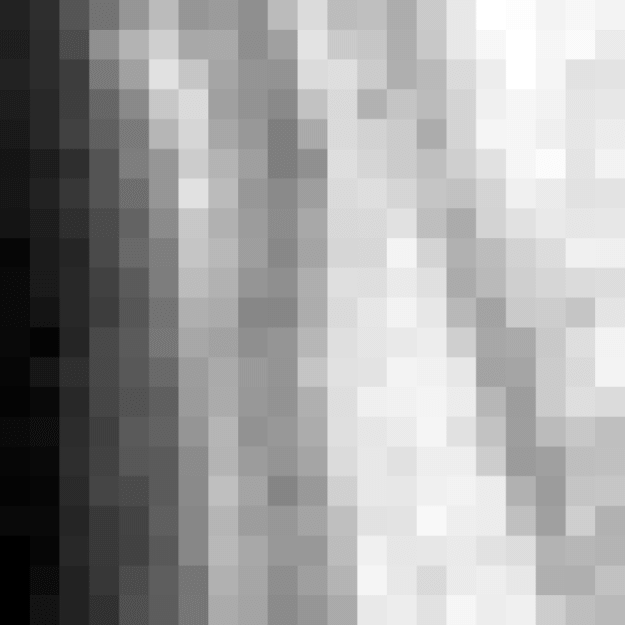}
\includegraphics[width=0.10\linewidth, angle=180]{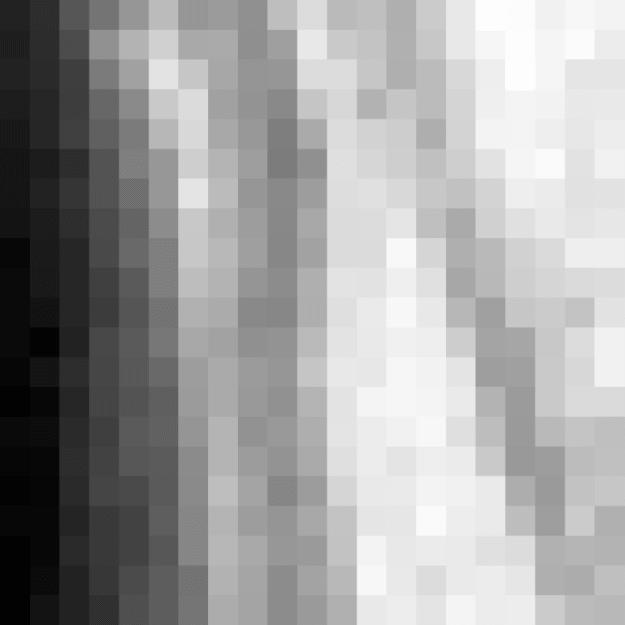}
\includegraphics[width=0.10\linewidth, angle=180]{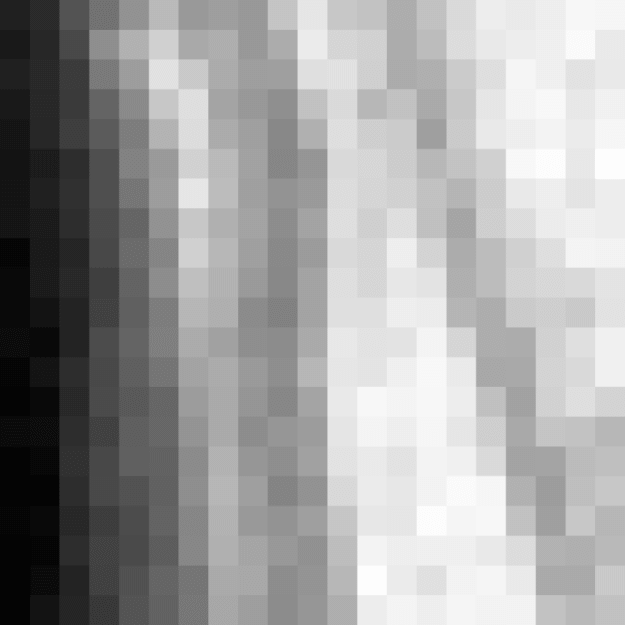}\\
\includegraphics[width=0.10\linewidth, angle=180]{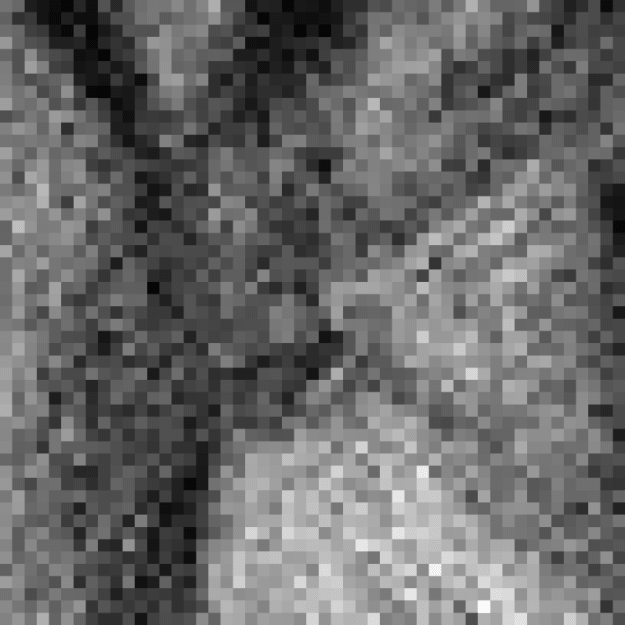}
\includegraphics[width=0.10\linewidth, angle=180]{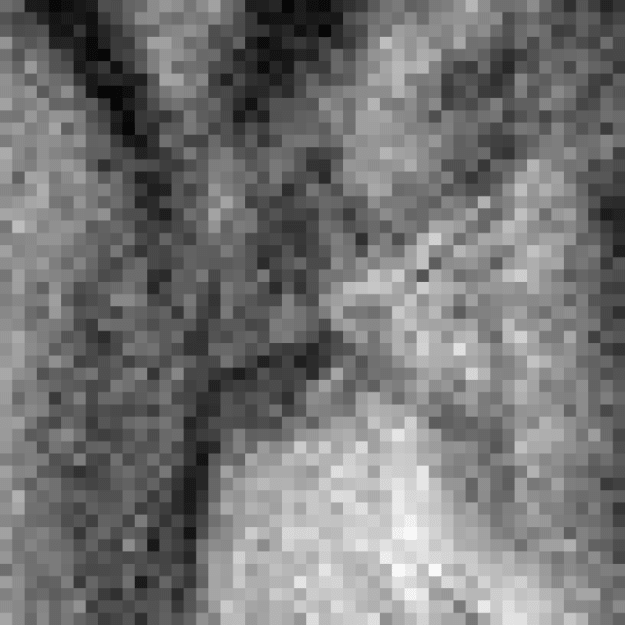}
\includegraphics[width=0.10\linewidth, angle=180]{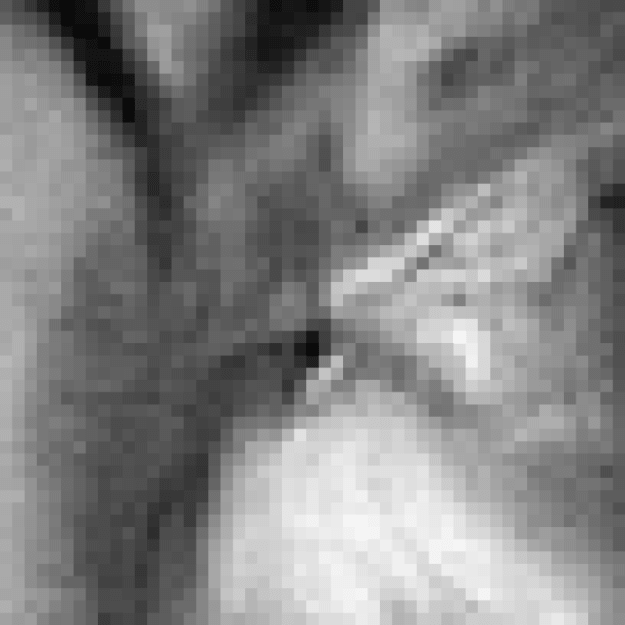}
\includegraphics[width=0.10\linewidth, angle=180]{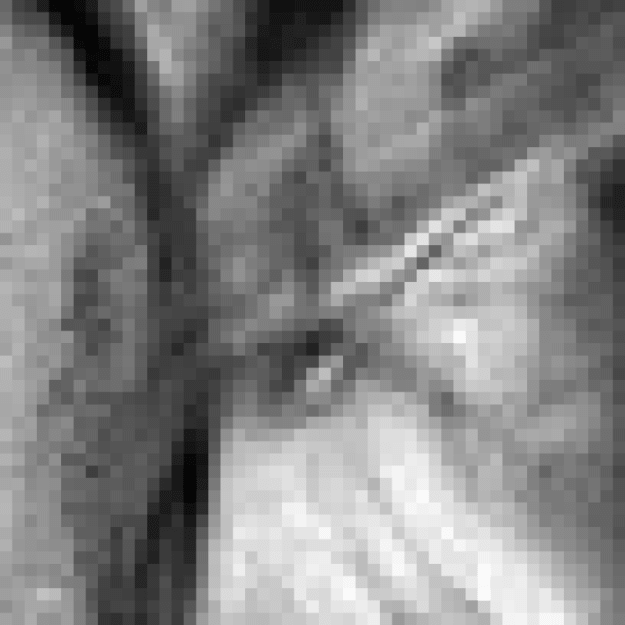}
\includegraphics[width=0.10\linewidth, angle=180]{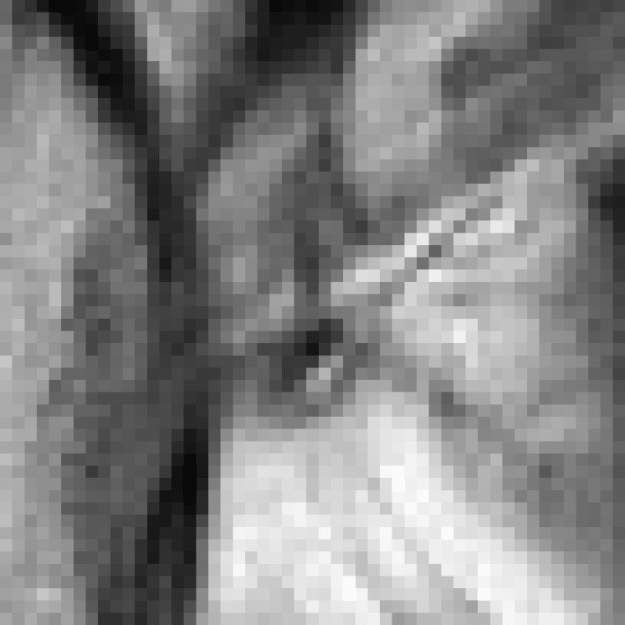}
\includegraphics[width=0.10\linewidth, angle=180]{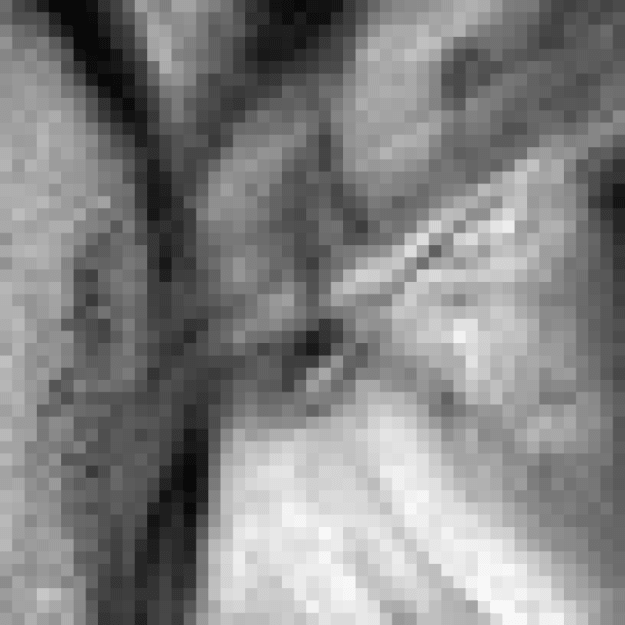}
\includegraphics[width=0.10\linewidth, angle=180]{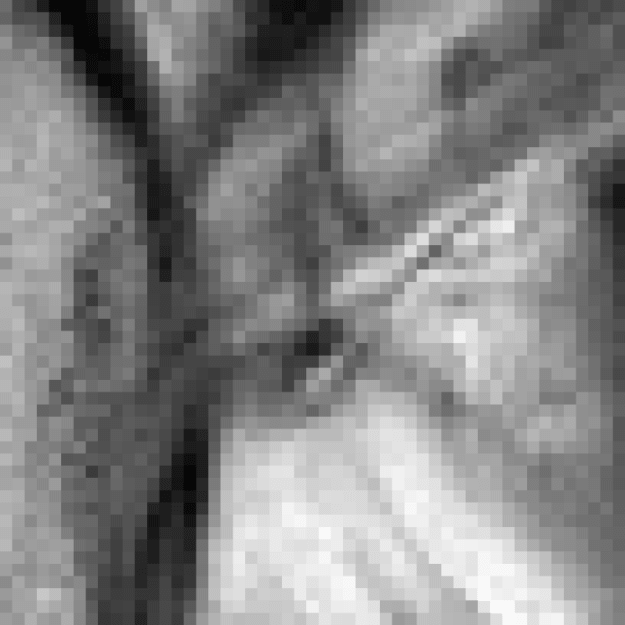}
\includegraphics[width=0.10\linewidth, angle=180]{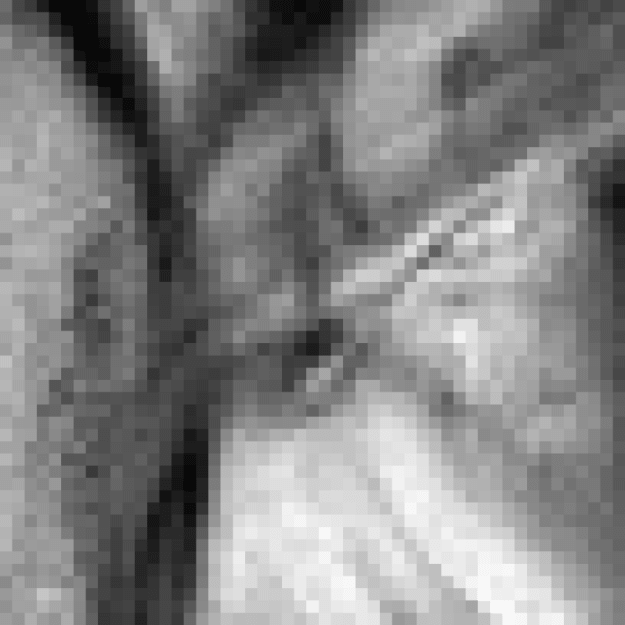}
\includegraphics[width=0.10\linewidth, angle=180]{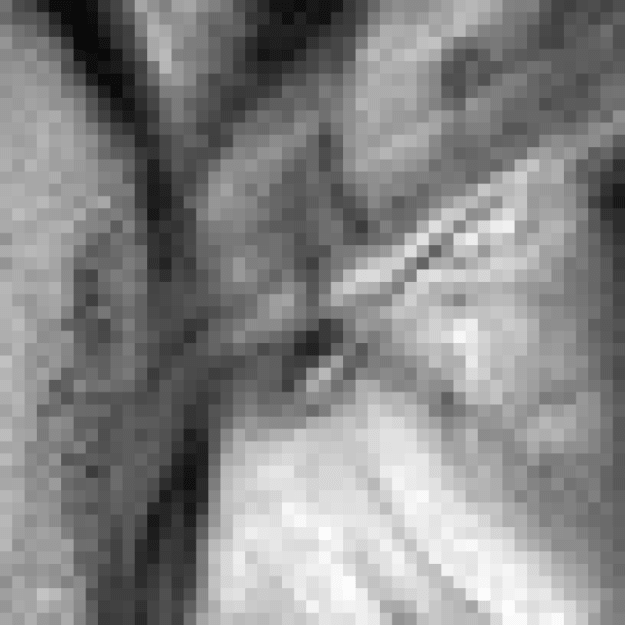}\\
\includegraphics[width=0.10\linewidth, angle=180]{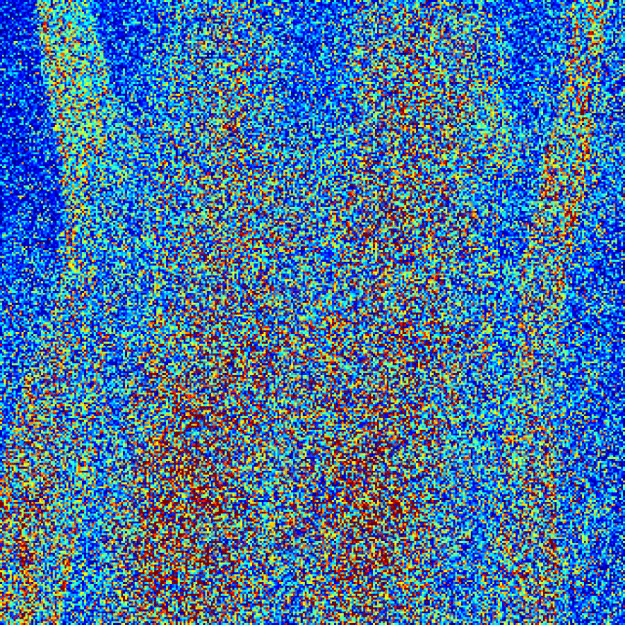}
\includegraphics[width=0.10\linewidth, angle=180]{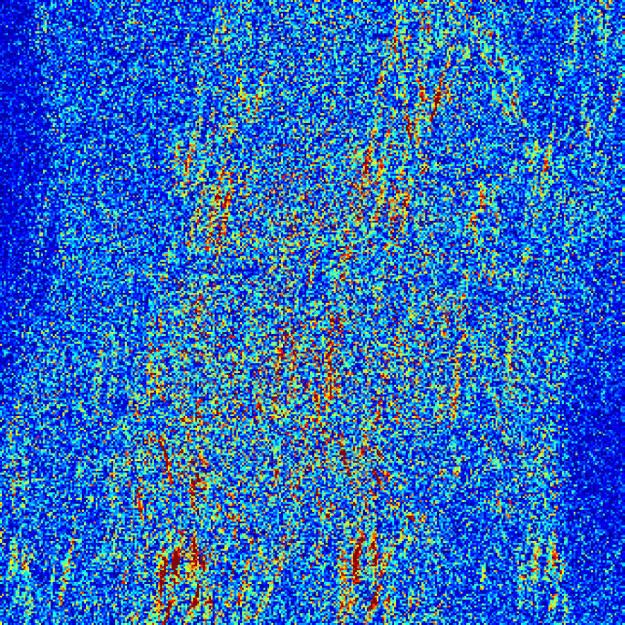}
\includegraphics[width=0.10\linewidth, angle=180]{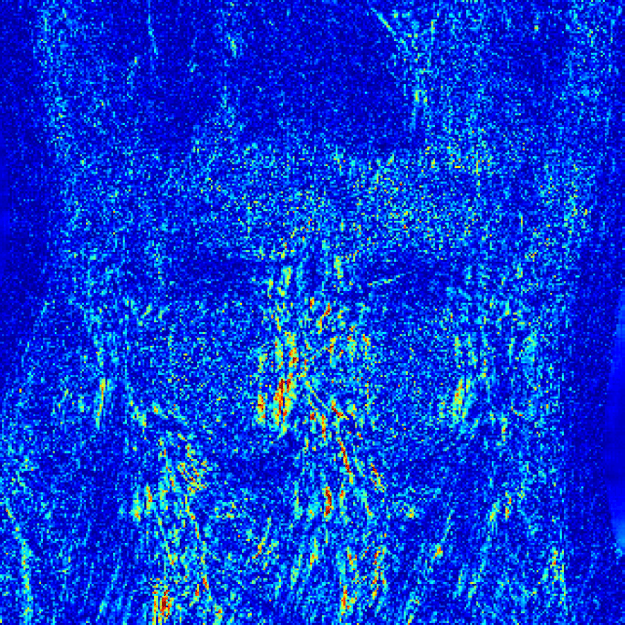}
\includegraphics[width=0.10\linewidth, angle=180]{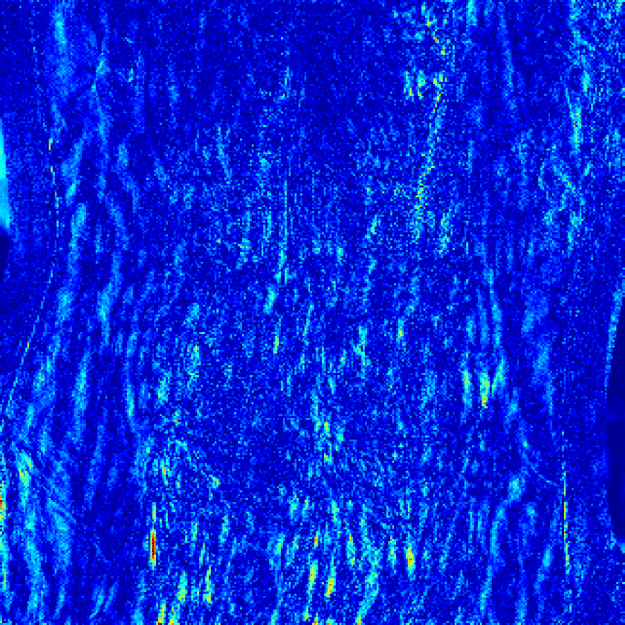}
\includegraphics[width=0.10\linewidth, angle=180]{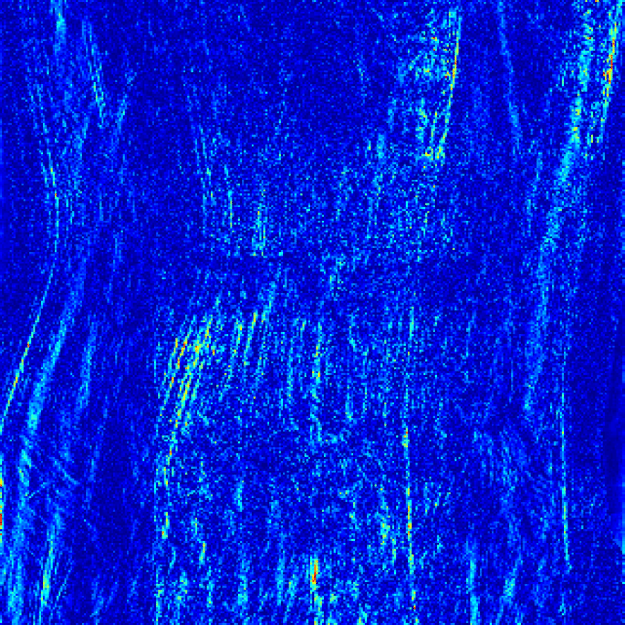}
\includegraphics[width=0.10\linewidth, angle=180]{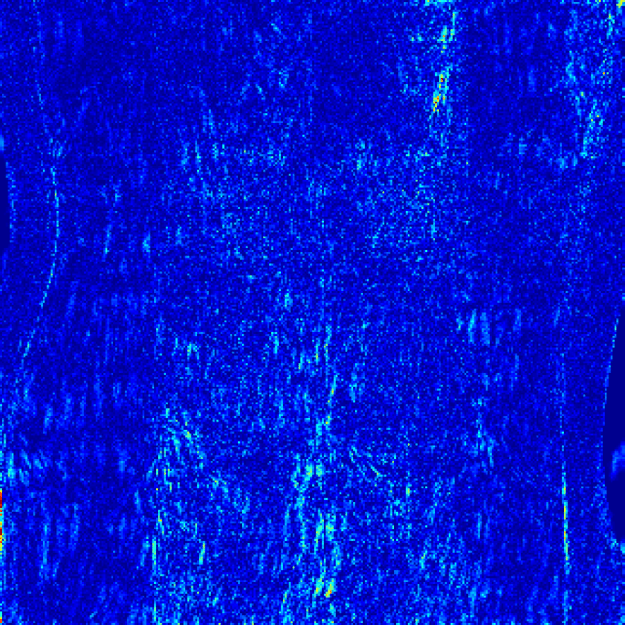}
\includegraphics[width=0.10\linewidth, angle=180]{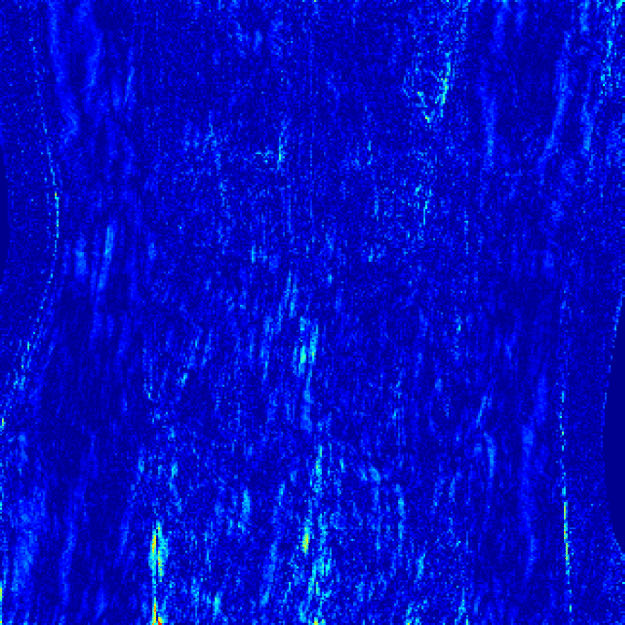}
\includegraphics[width=0.10\linewidth, angle=180]{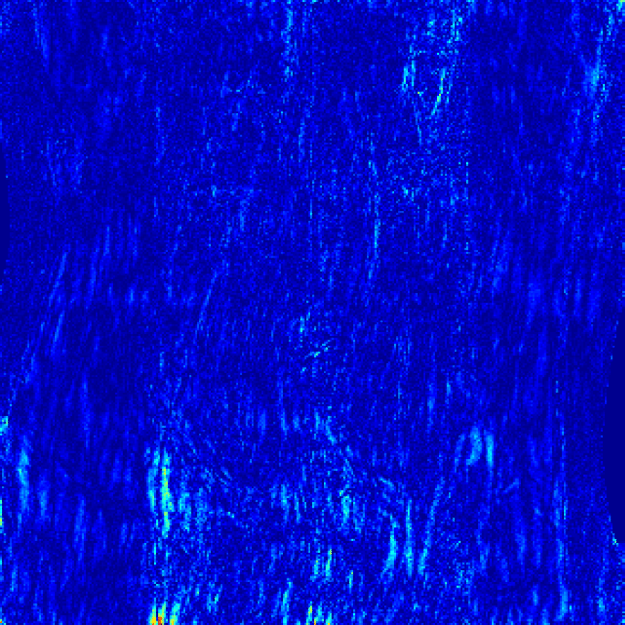}
\includegraphics[width=0.10\linewidth, angle=180]{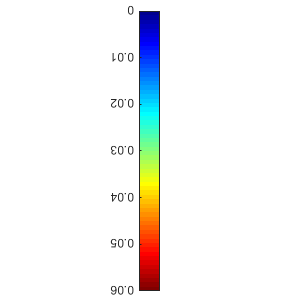}\\
\includegraphics[width=0.10\linewidth]{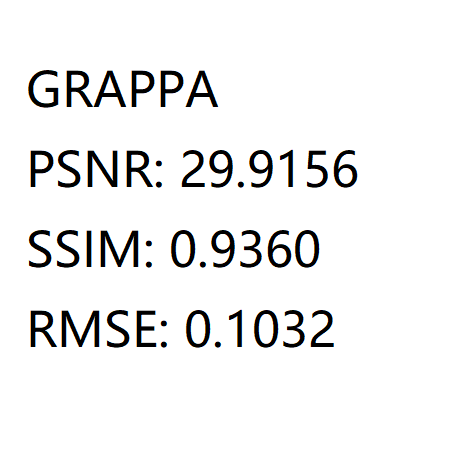}
\includegraphics[width=0.10\linewidth]{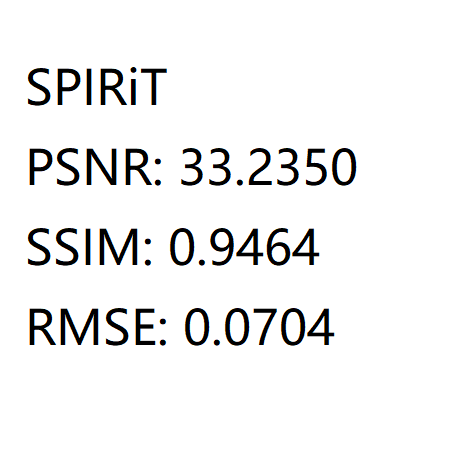}
\includegraphics[width=0.10\linewidth]{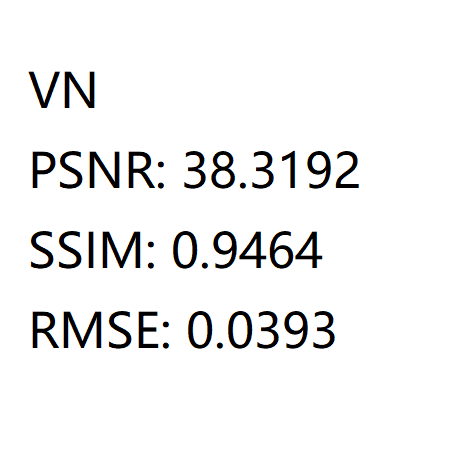}
\includegraphics[width=0.10\linewidth]{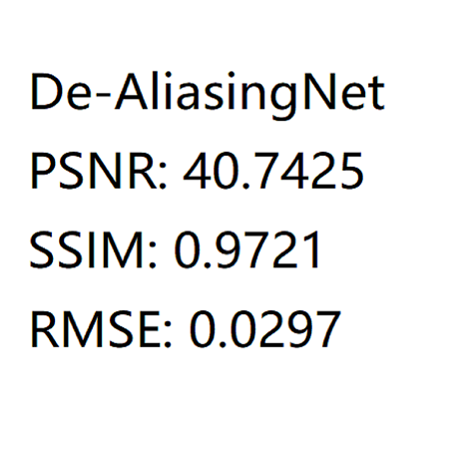}
\includegraphics[width=0.10\linewidth]{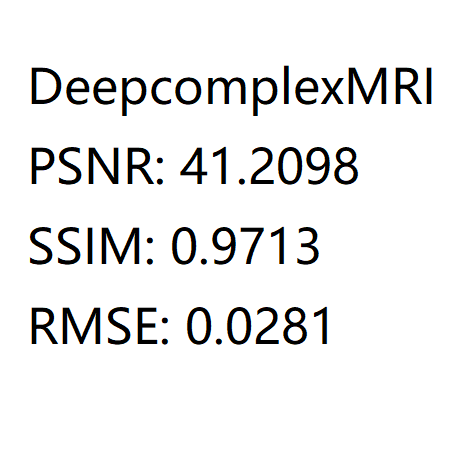}
\includegraphics[width=0.10\linewidth]{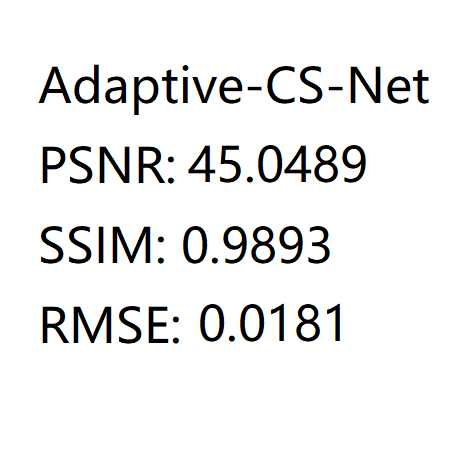}
\includegraphics[width=0.10\linewidth]{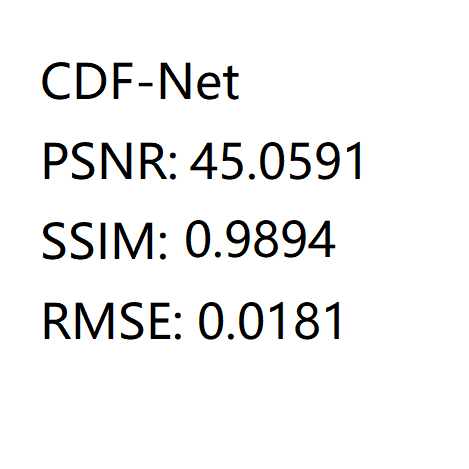}
\includegraphics[width=0.10\linewidth]{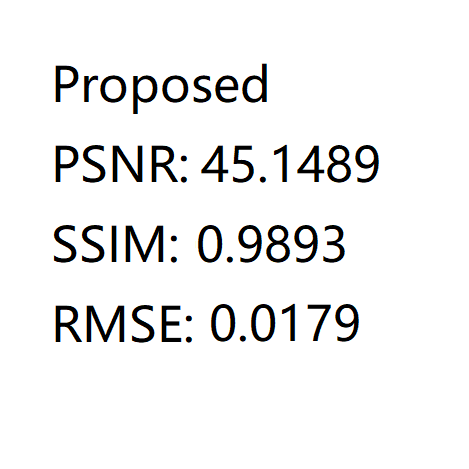}
\includegraphics[width=0.10\linewidth]{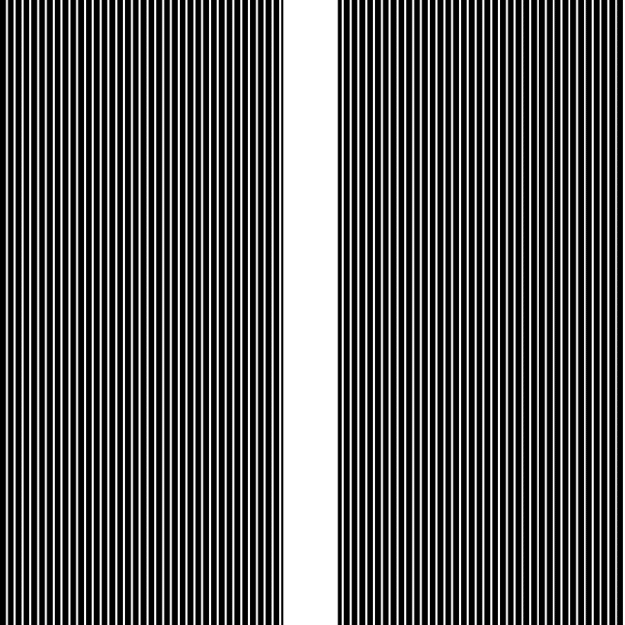}
\caption{Qualitative comparison results of reconstruction methods on the Coronal PD knee image. The top row shows reconstructed images and the referenced image, the second and third-row are corresponding zoomed-in ROIs of the red box area (draw on the rightmost ground truth image), the fourth row shows corresponding pointwise absolute error maps and the last row shows corresponding values and regular Cartesian sampling (31.56\% rate) mask.} 
\label{PD}
\end{figure*}
\begin{figure*}
\centering
\includegraphics[width=0.075\linewidth, angle=180]{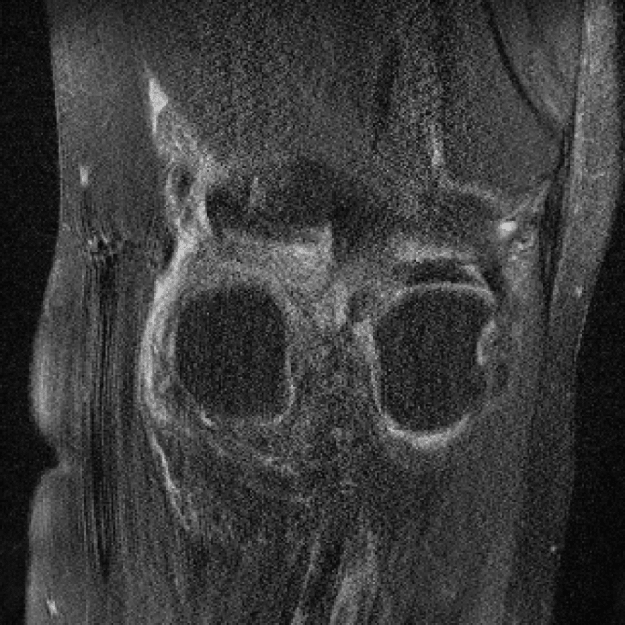}
\includegraphics[width=0.075\linewidth, angle=180]{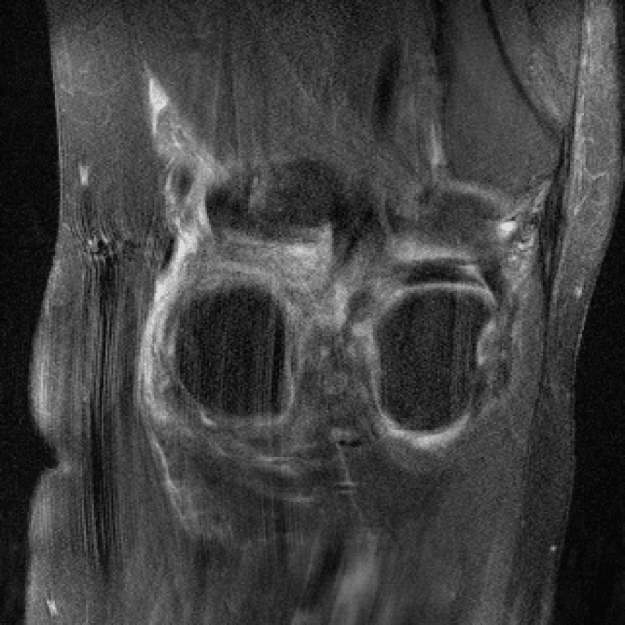}
\includegraphics[width=0.075\linewidth, angle=180]{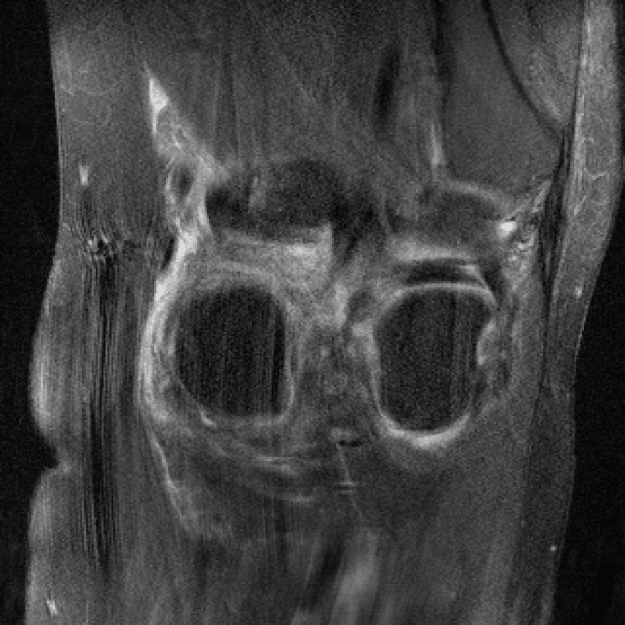}
\includegraphics[width=0.075\linewidth, angle=180]{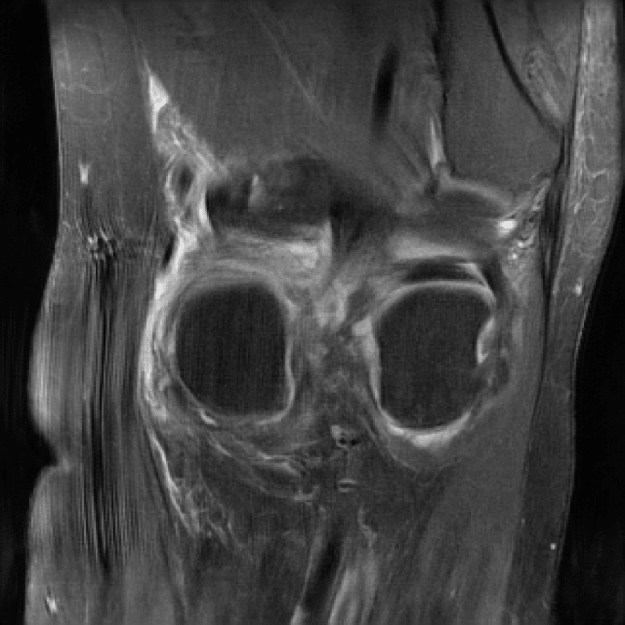}
\includegraphics[width=0.075\linewidth, angle=180]{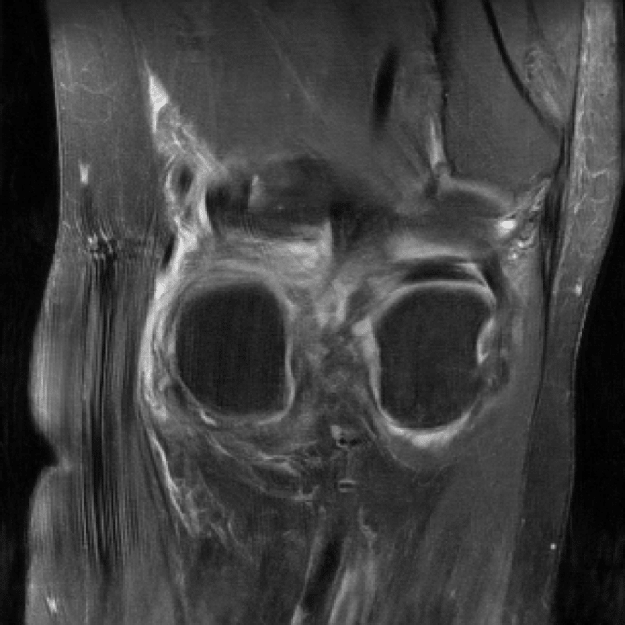}
\includegraphics[width=0.075\linewidth, angle=180]{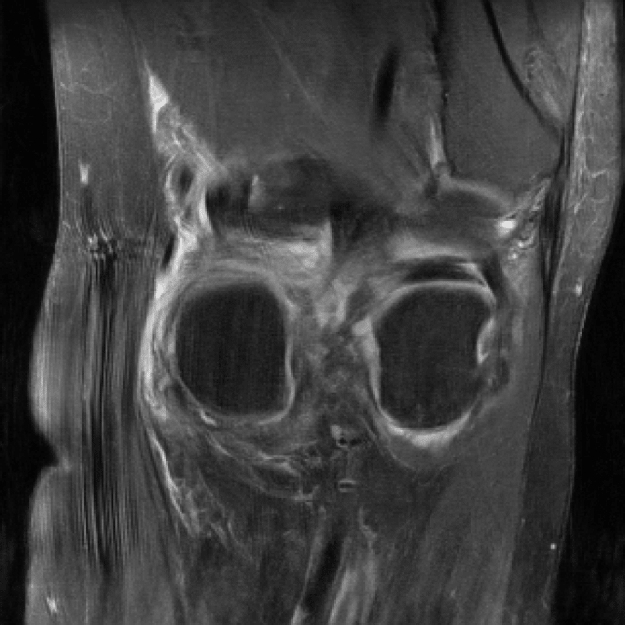}
\includegraphics[width=0.075\linewidth, angle=180]{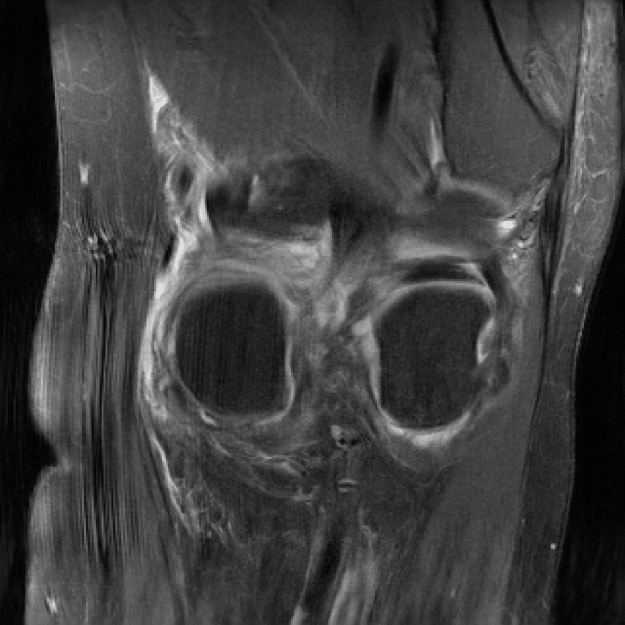}
\includegraphics[width=0.075\linewidth, angle=180]{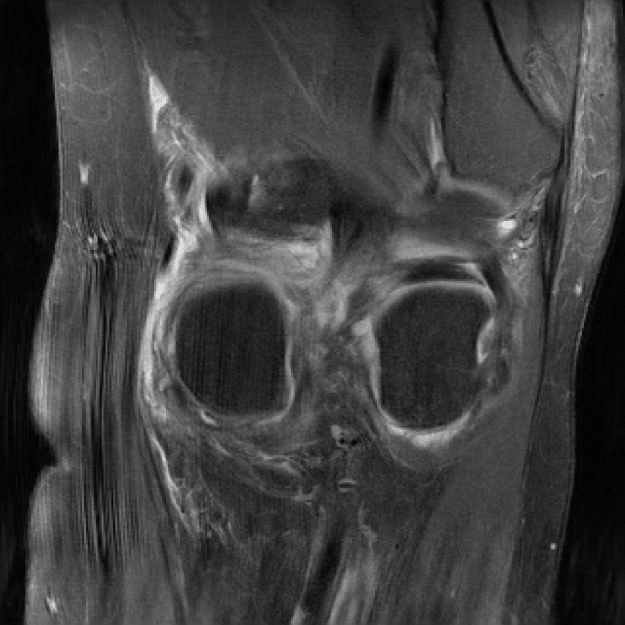}
\includegraphics[width=0.075\linewidth, angle=180]{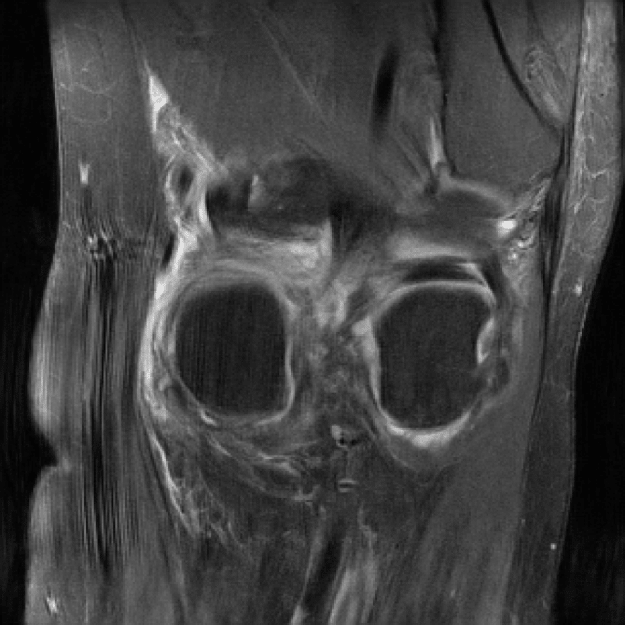}
\includegraphics[width=0.075\linewidth, angle=180]{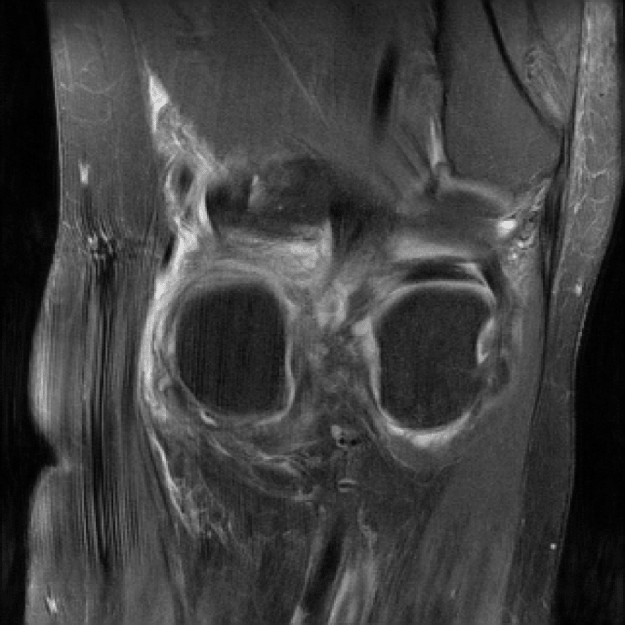}
\includegraphics[width=0.075\linewidth, angle=180]{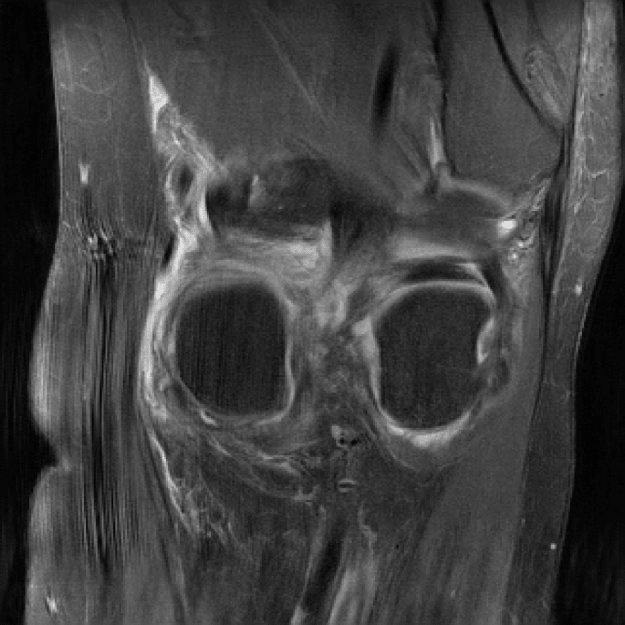}
\includegraphics[width=0.075\linewidth, angle=180]{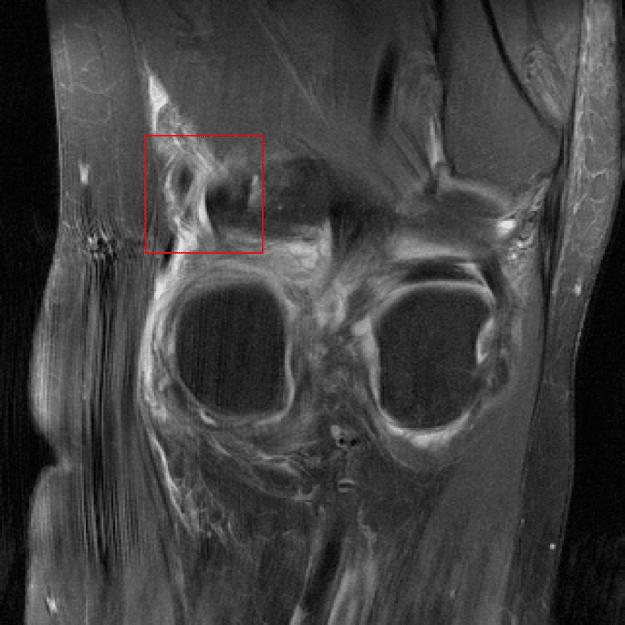}\\
\includegraphics[width=0.075\linewidth, angle=180]{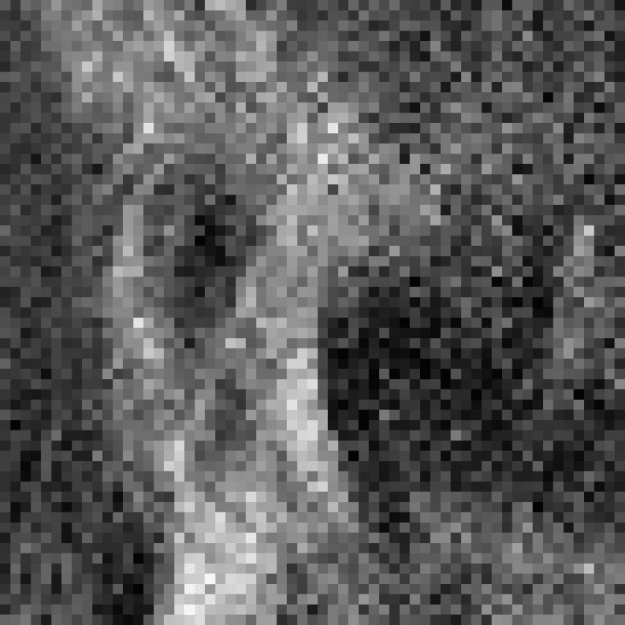}
\includegraphics[width=0.075\linewidth, angle=180]{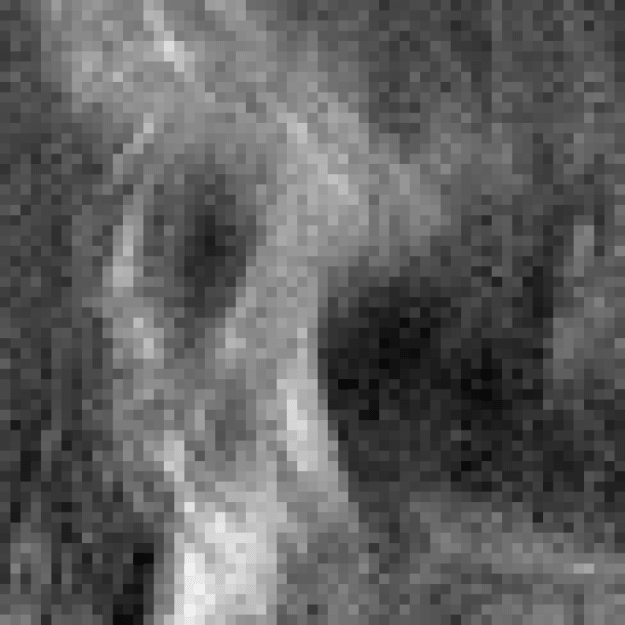}
\includegraphics[width=0.075\linewidth, angle=180]{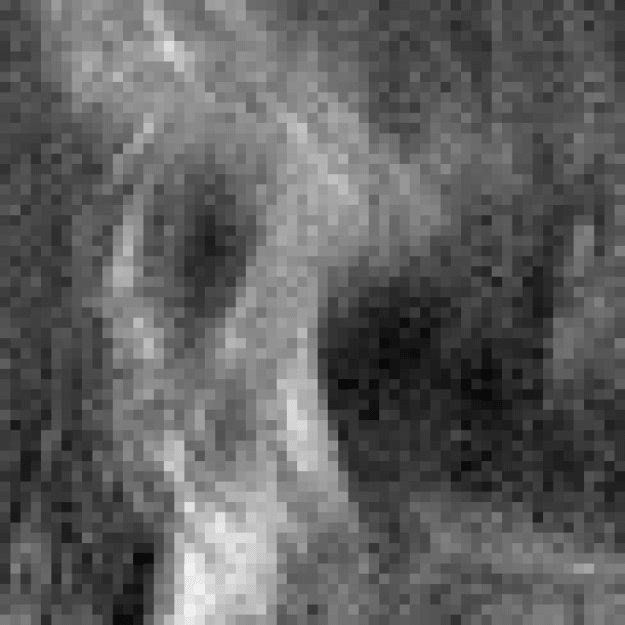}
\includegraphics[width=0.075\linewidth, angle=180]{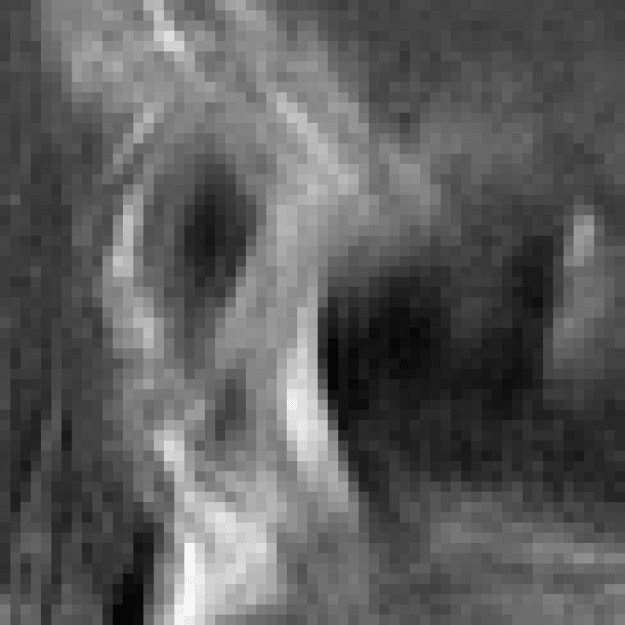}
\includegraphics[width=0.075\linewidth, angle=180]{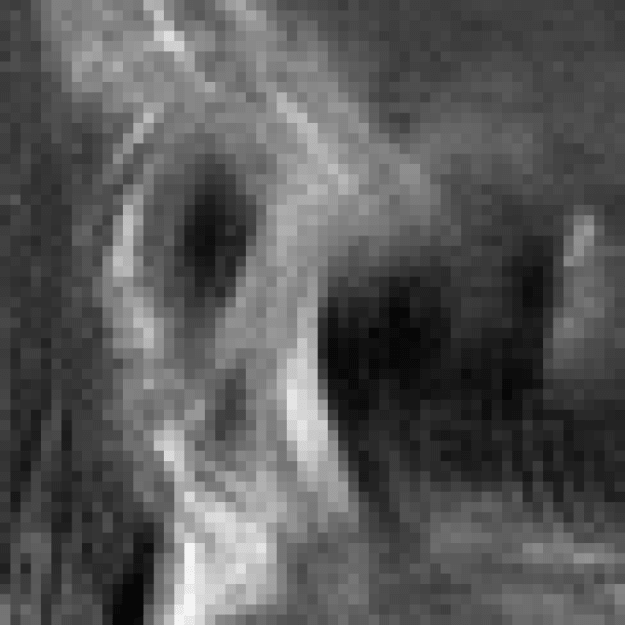}
\includegraphics[width=0.075\linewidth, angle=180]{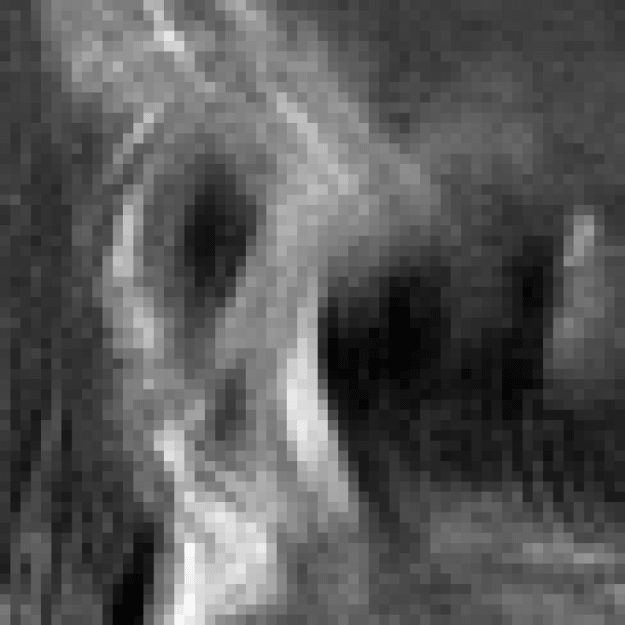}
\includegraphics[width=0.075\linewidth, angle=180]{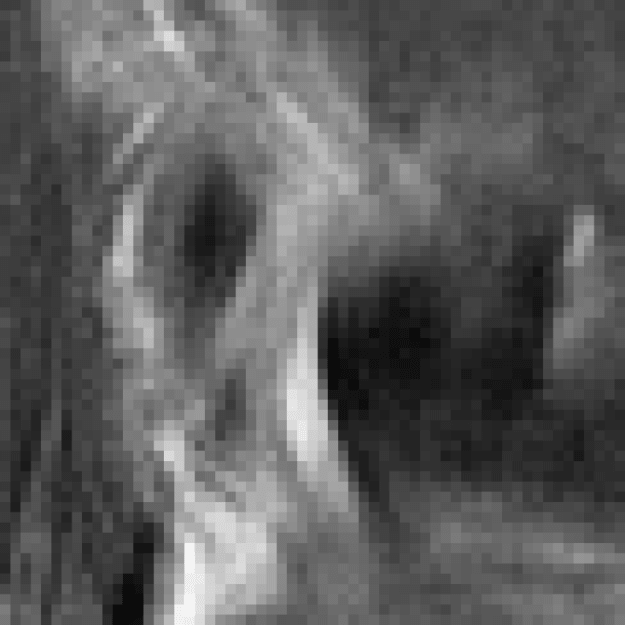}
\includegraphics[width=0.075\linewidth, angle=180]{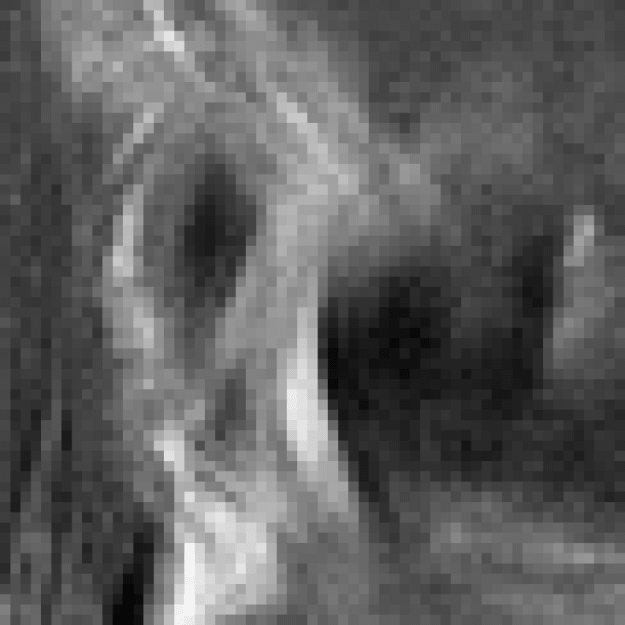}
\includegraphics[width=0.075\linewidth, angle=180]{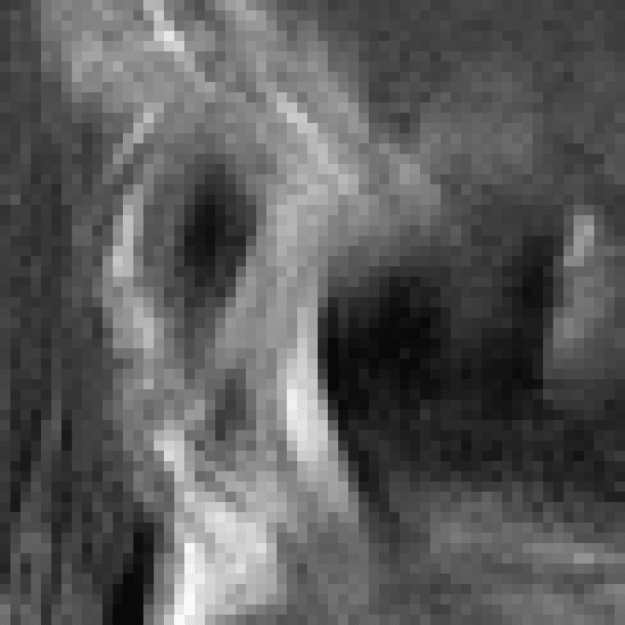}
\includegraphics[width=0.075\linewidth, angle=180]{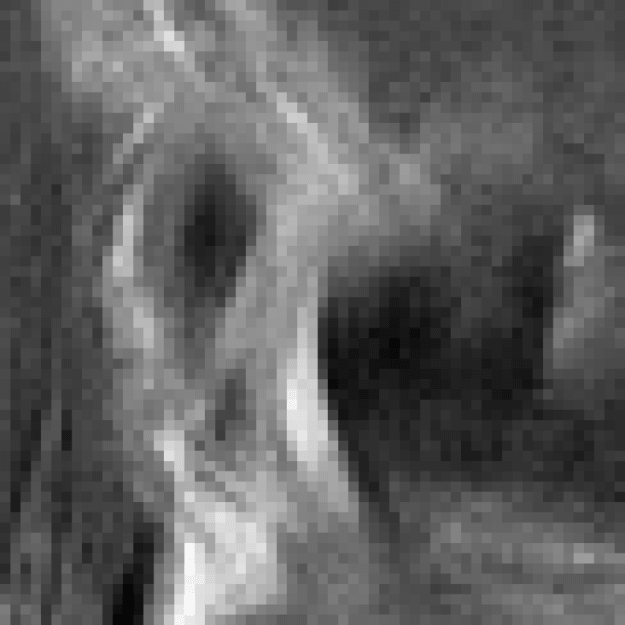}
\includegraphics[width=0.075\linewidth, angle=180]{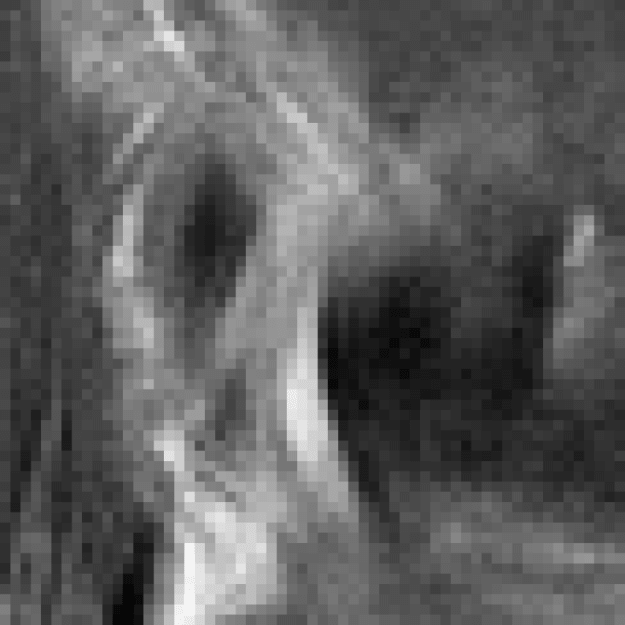}
\includegraphics[width=0.075\linewidth, angle=180]{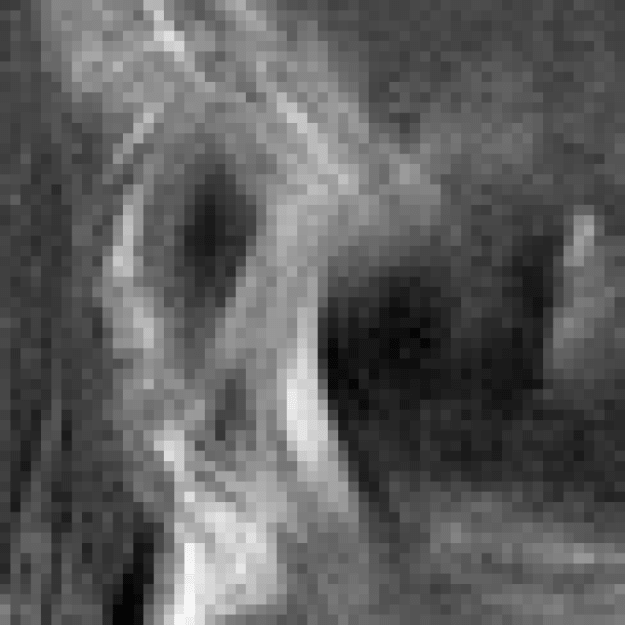}\\
\includegraphics[width=0.075\linewidth, angle=180]{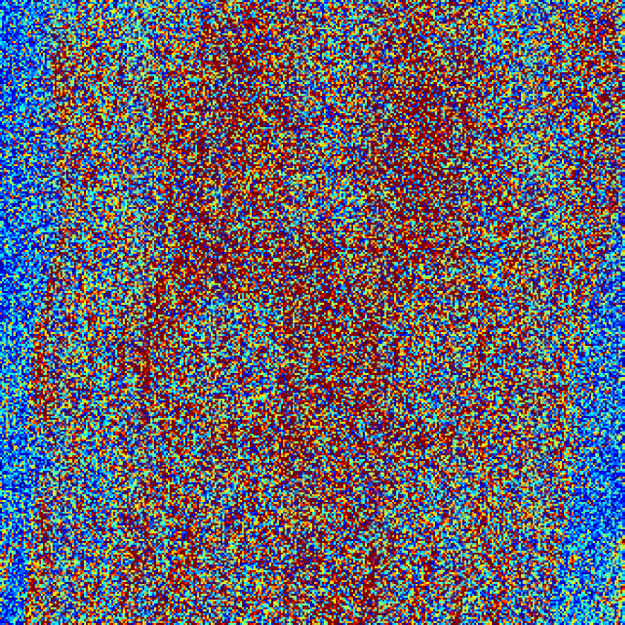}
\includegraphics[width=0.075\linewidth, angle=180]{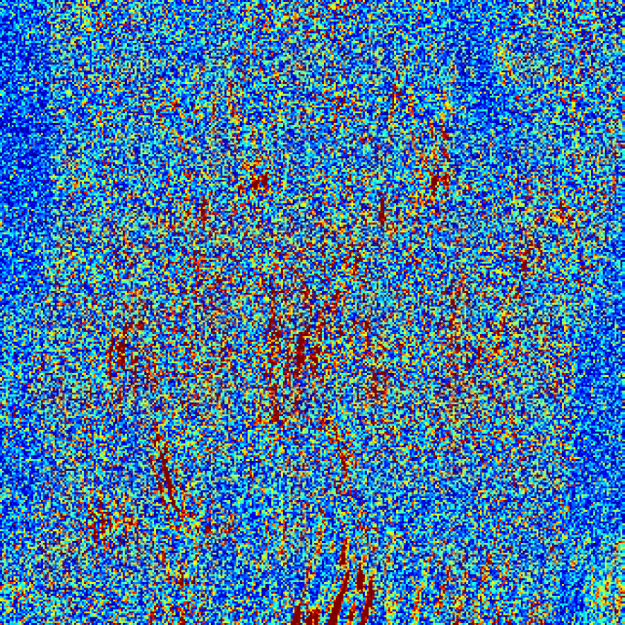}
\includegraphics[width=0.075\linewidth, angle=180]{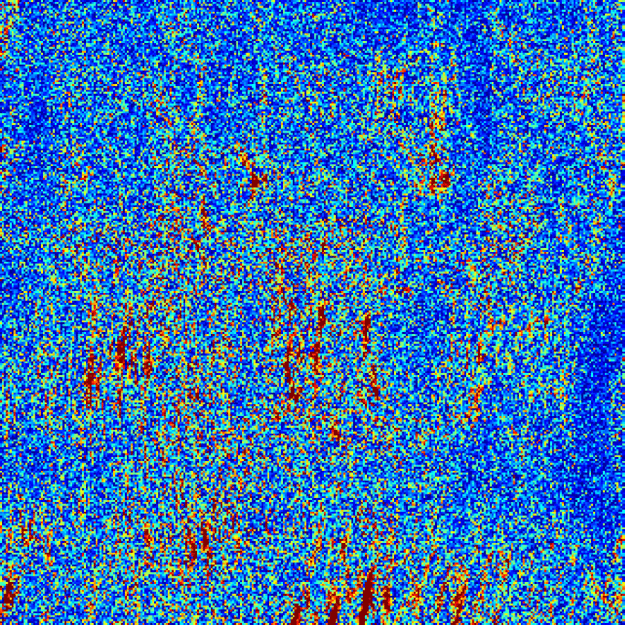}
\includegraphics[width=0.075\linewidth, angle=180]{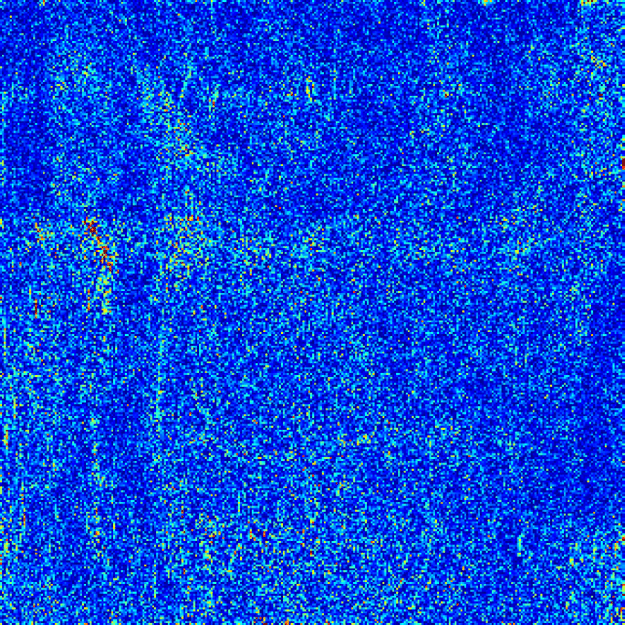}
\includegraphics[width=0.075\linewidth, angle=180]{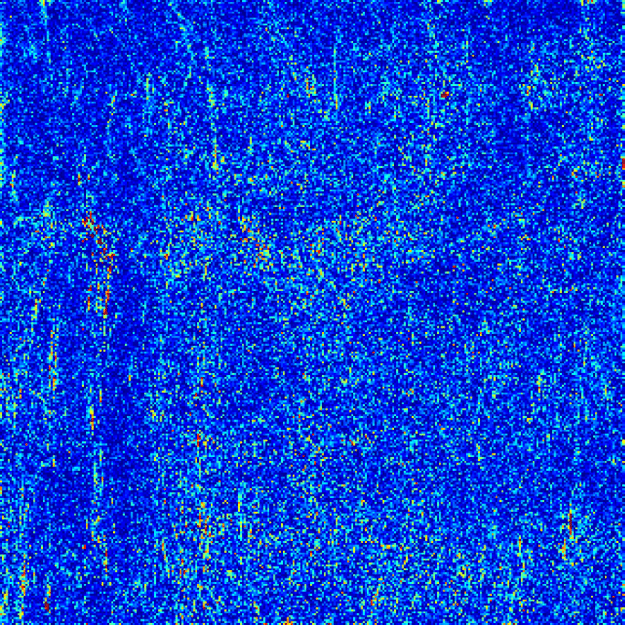}
\includegraphics[width=0.075\linewidth, angle=180]{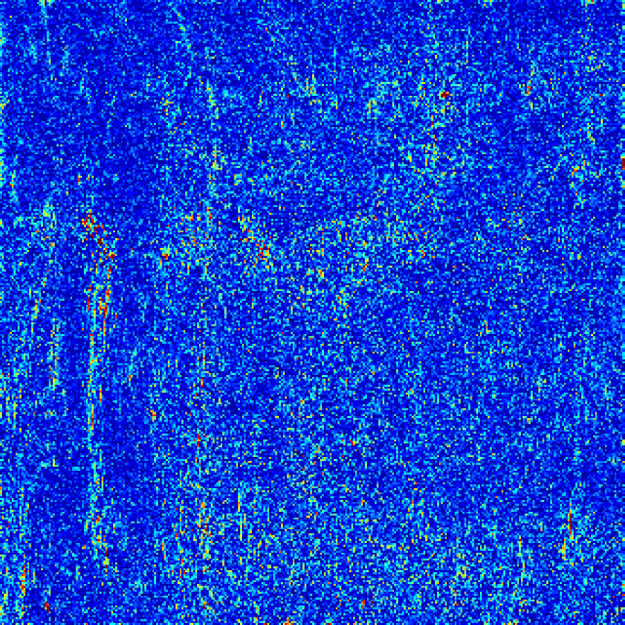}
\includegraphics[width=0.075\linewidth, angle=180]{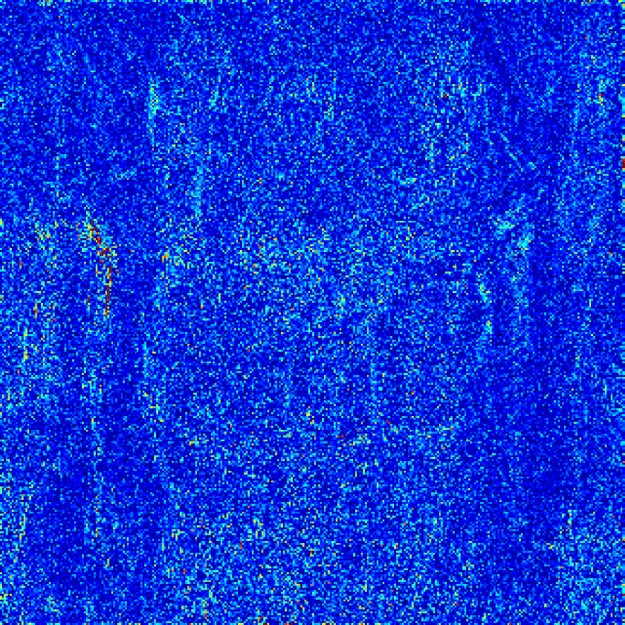}
\includegraphics[width=0.075\linewidth, angle=180]{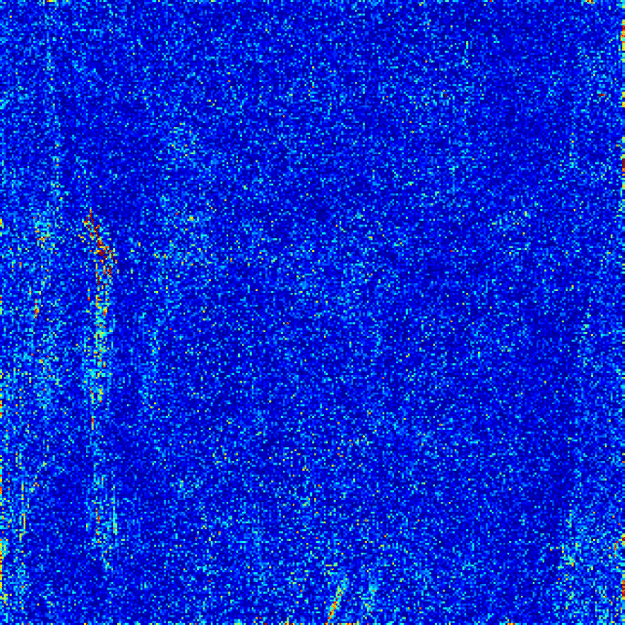}
\includegraphics[width=0.075\linewidth, angle=180]{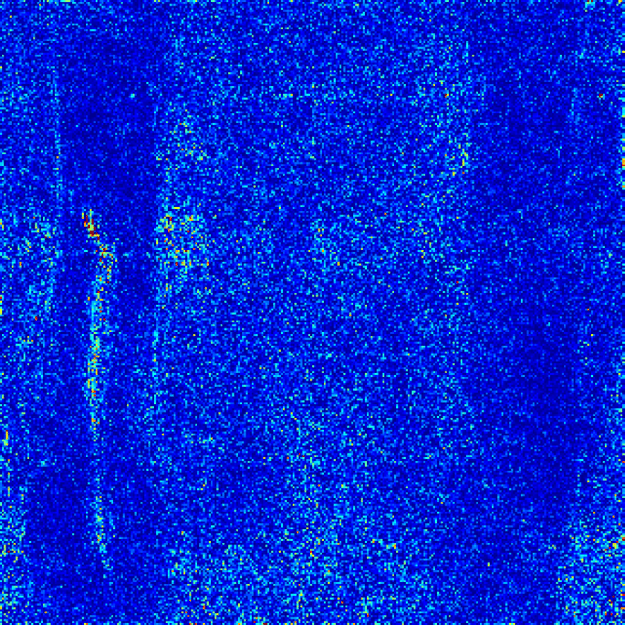}
\includegraphics[width=0.075\linewidth, angle=180]{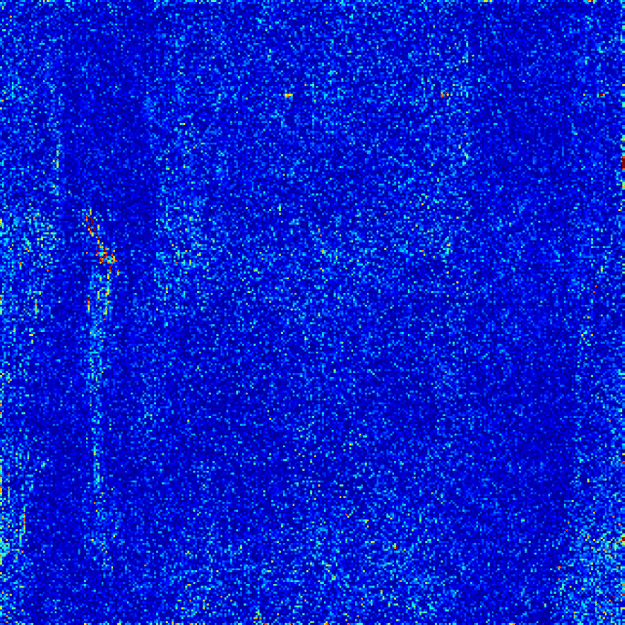}
\includegraphics[width=0.075\linewidth, angle=180]{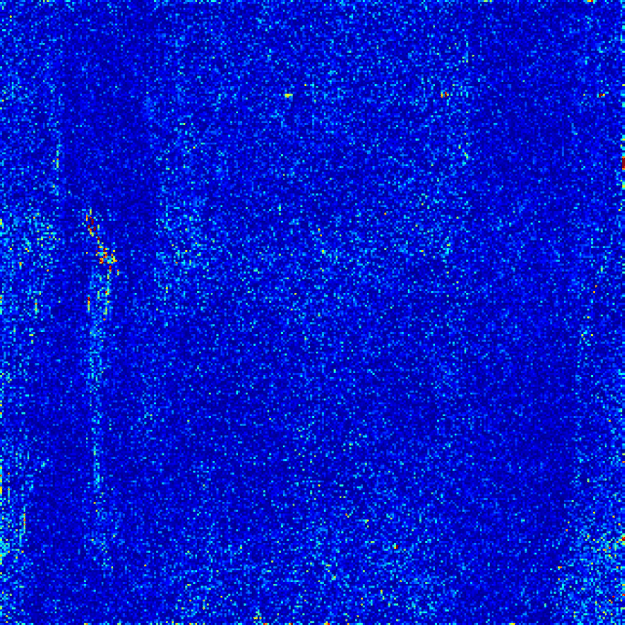}
\includegraphics[width=0.075\linewidth, angle=180]{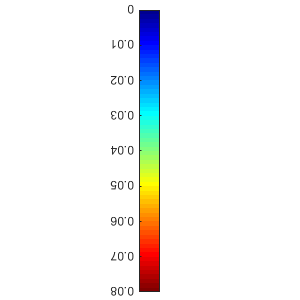}\\
\includegraphics[width=0.075\linewidth]{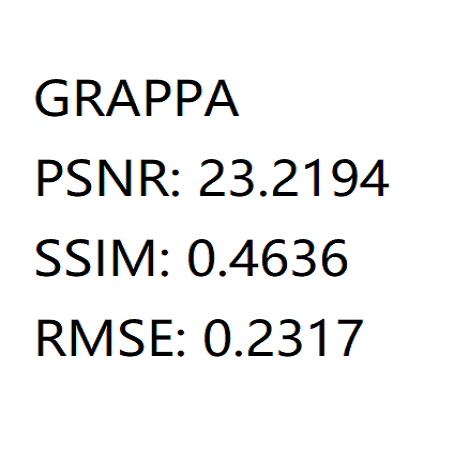}
\includegraphics[width=0.075\linewidth]{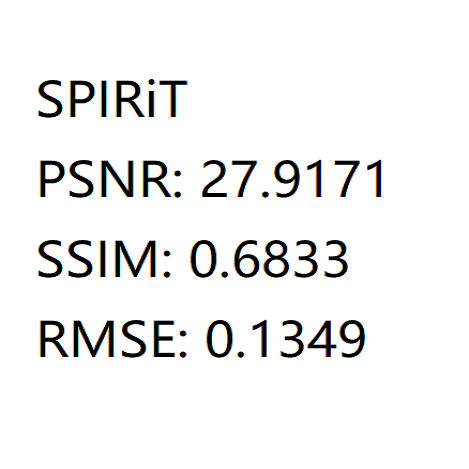}
\includegraphics[width=0.075\linewidth]{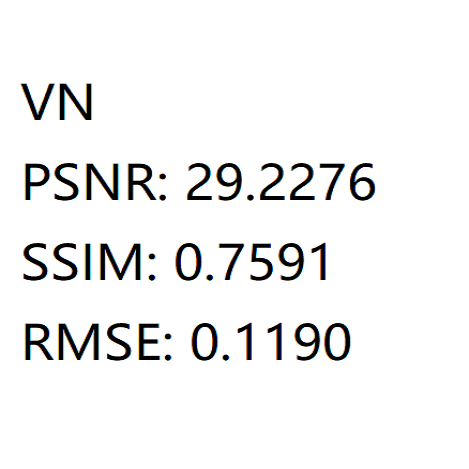}
\includegraphics[width=0.075\linewidth]{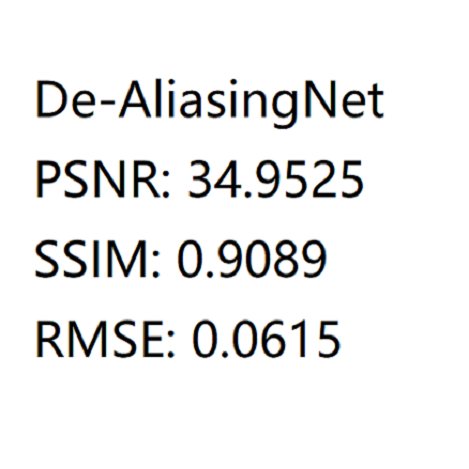}
\includegraphics[width=0.075\linewidth]{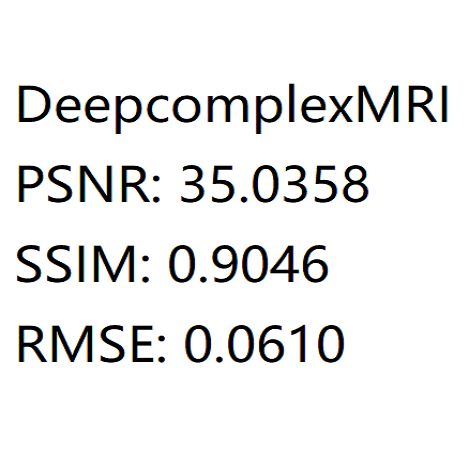}
\includegraphics[width=0.075\linewidth]{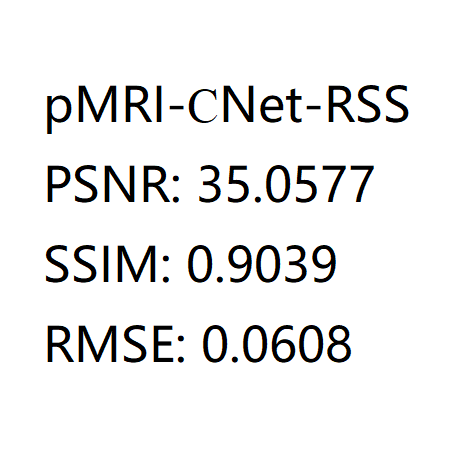}
\includegraphics[width=0.075\linewidth]{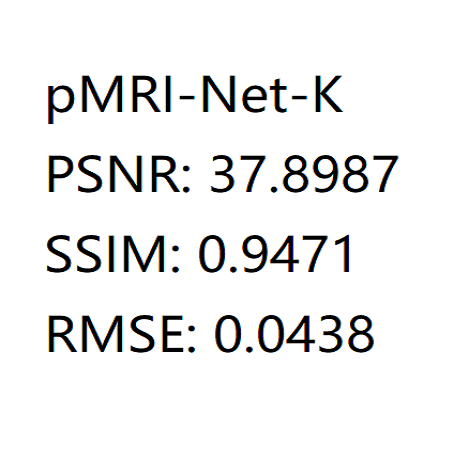}
\includegraphics[width=0.075\linewidth]{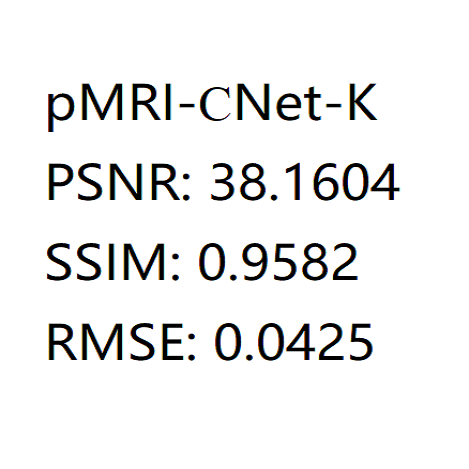}
\includegraphics[width=0.075\linewidth]{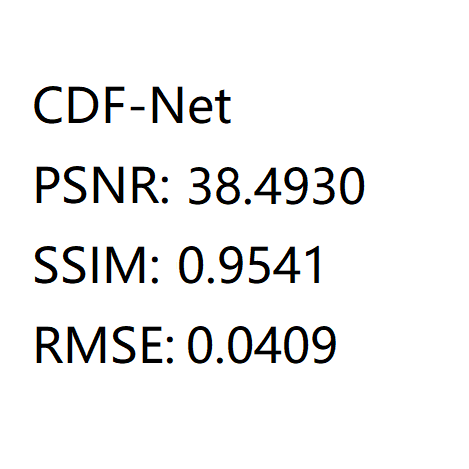}
\includegraphics[width=0.075\linewidth]{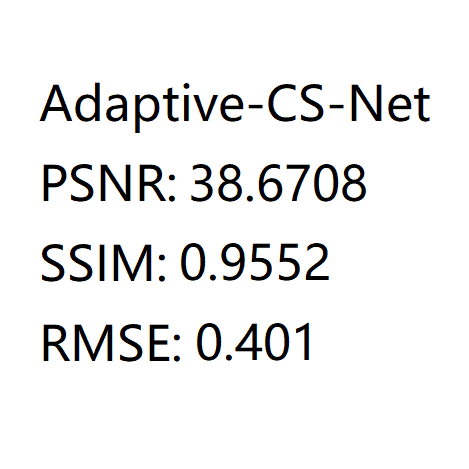}
\includegraphics[width=0.075\linewidth]{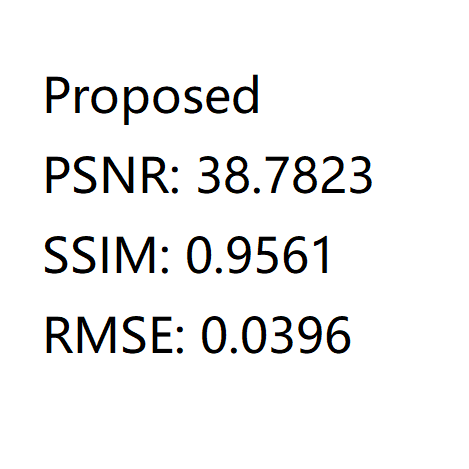}
\includegraphics[width=0.075\linewidth]{mask.png}
\caption{Qualitative comparison results of reconstruction methods on the Coronal FSPD knee image. From top to bottom: reconstructed images of different methods and the reference image, corresponding zoomed-in ROIs, corresponding pointwise absolute error maps and color bar, corresponding evaluation metric values, and regular Cartesian sampling (31.56\% rate) mask. }
\label{FSPD}
\end{figure*}

\begin{table}
\caption{Mean square error between $\ubf$ and $\ubf^* $ for PD data comparing to the methods that reconstruct multi-coil images. }
\centering
 \begin{tabular}{ccc}
\toprule
Method   & RMSE between $\ubf$ and $  \ubf^*$   \\
\midrule
De-AliasingNet~\cite{10.1007/978-3-030-32248-9_4} & 0.1573\\
DeepcomplexMRI~\cite{WANG2020136} & 0.1253\\
CDF-Net~\cite{10.1007/978-3-030-59713-9_41} & 0.1196\\
Adaptive-CS-Net~\cite{pezzotti2020adaptive}  & 0.1096\\
pMRI-$\C$Net-K~\eqref{eq:loss_ablation}&  0.0538\\
Proposed~\eqref{eq:loss}   & 0.0505 \\
 \bottomrule
\end{tabular}
\label{tab:ui_mse}
\end{table}
\subsection{Ablation Studies}\label{subsec:Ablation}
The experiments introduced in this section implemented only in the image domain, in which \eqref{eq:scheme} is replaced by:
\begin{subequations}\label{eq:bu_ablation}
\begin{align}
\bbf_i{(t)} & =  \ubf_i{(t)} - \rho_t \Fbf^{H}  \Pbf^{\top} (\Pbf \Fbf \ubf_i{(t)} - \fbf_i), \label{eq:b_ablation}  \\
\ubf_i{(t+1)} & = \bbf_i{(t)} + \tilde{\J} \circ \tilde{\G} \circ \soft_{\alpha_t} ( \G \circ \J (\bbf_i{(t)})). \label{eq:u_ablation}
\end{align}
\end{subequations}
$ \soft_{\alpha_t}$ represents soft shrinkage operator, we set $ \alpha_0 = 0 $ for both real and imaginary parts. Details for this model was explained in the conference paper \cite{10.1007/978-3-030-61598-7_2}. The proximal operator is learned in residual update \eqref{eq:u_ablation} and only performs in the image domain.

We conduct a series of experiments to test the effects of several important components in the network proposed. Table \ref{tab:result_our} displays the comparison of proposed ablation study. All the experiments in Table \ref{tab:result_our} were implemented by \eqref{eq:bu_ablation} except for the last one ``proposed'' method which unrolled \eqref{eq:scheme}.
\begin{table}
    \centering
    \caption{Labels of the variations of the proposed pMRI reconstruction network.} 
    \begin{tabular}{ll}
    \toprule
    Label     &  Meaning \\
    \midrule
    -Net & Real-valued convolution/activation \\
    -$\C$Net & Complex-valued convolution/activation \\
    -RSS & Using root of sum of squares in place of $\J$\\
    -ZF & Zero-filling as initial $\ubf{(0)}$  \\
    -SP & SPIRiT as initial $\ubf{(0)}$ \\
    -K & $\Fbf^{H}(\fbf + \K(\fbf))$ as initial $\ubf{(0)}$ \\
    \bottomrule
    \end{tabular}
    \label{tab:labels}
\end{table}
To specify which of the components are employed, we append the labels shown in Table \ref{tab:labels} with different variations of the ablated pMRI networks.  For instance, the network with real-valued convolution and activation function with zero-filled initialization is denoted by pMRI-Net-ZF, etc.
\begin{table*}
\caption{Quantitative evaluations of the reconstructions on the Coronal FSPD \& PD data using different variations of the proposed methods (Labels of variations are explained in Tabel \ref{tab:labels}. The experiments without being labeled loss functions are trained with \eqref{eq:loss_body_ablation}).}
\begin{center}
\resizebox{\textwidth}{20mm}{ 
\begin{tabular}{l|ccc|ccc|cc}
\toprule
&& FSPD data & & & PD data \\
Method   & PSNR      & SSIM      & RMSE       & PSNR     & SSIM     & RMSE     & Phases T   & Parameters \\\midrule
pMRI-$\C$Net-RSS & 36.4887$\pm$0.9787 & 0.9002$\pm$0.0197 &  0.0661$\pm$0.0051  & 41.2897$\pm$0.8430  & 0.9281$\pm$0.0357 & 0.0285$\pm$0.0037 & 5 & 5.03 M\\
pMRI-Net-ZF & 37.8475$\pm$1.2086 & 0.9212$\pm$0.0236 & 0.0568$\pm$0.0069 & 42.4333$\pm$0.8785 & 0.9793$\pm$0.0023 & 0.0249$\pm$0.0024 & 5 & 5.03 M \\
pMRI-Net-SP & 38.0205$\pm$0.8125 & 0.9291$\pm$0.0183 & 0.0555$\pm$0.0057 & 42.7435$\pm$0.4856 & 0.9754$\pm$0.0047 & 0.0239$\pm$0.0019 & 5 & 5.03 M \\
pMRI-$\C$Net-ZF &  38.1157$\pm$1.3776 &  0.9277$\pm$0.0257 & 0.0552$\pm$0.0085 & 42.7859$\pm$1.1285  & 0.9818$\pm$0.0026 & 0.0241$\pm$0.0045 & 5 & 5.03 M\\
pMRI-$\C$Net-SP & 38.3239$\pm$1.1305 & 0.9282$\pm$0.0269  & 0.0539$\pm$0.0075 & 42.8924$\pm$0.9336 & 0.9760$\pm$0.0054 & 0.0237$\pm$0.0034 & 5 & 5.03 M\\
pMRI-Net-K & 38.8717$\pm$1.1330  & 0.9389$\pm$0.0209 & 0.0504$\pm$0.0057 &  42.9060$\pm$0.8765 &  0.9802$\pm$0.0028  &  0.0236$\pm$0.0028 & 4 & 4.11 M\\
pMRI-$\C$Net-K \eqref{eq:loss_ablation} & 38.9661$\pm$1.4382  & 0.9421$\pm$0.0177 & 0.0498$\pm$0.0056 &  43.2604$\pm$0.7610 & 0.9833$\pm$0.0022 & 0.0226$\pm$0.0022 & 4 & 4.11 M\\
pMRI-$\C$Net-K & 39.3360$\pm$1.1854 & 0.9497$\pm$0.0208 &  0.0477$\pm$0.0048 &  43.5653$\pm$0.8265 & 0.9844$\pm$0.0022 & 0.0217$\pm$0.0010 & 4 & 4.11 M\\
Proposed \eqref{eq:loss} & \textbf{40.7101$\pm$1.5357} & \textbf{0.9619$\pm$0.0144} & \textbf{0.0408$\pm$0.0051} & \textbf{44.8120$\pm$1.3185} & \textbf{0.9886$\pm$0.0023} & \textbf{0.0189$\pm$0.0018} & 4 & 2.92 M\\
\bottomrule
\end{tabular}}
\end{center}
\label{tab:result_our}
\end{table*}
The complete structure of the network with the three types of initialization is shown in Fig.~\ref{fig:framework}.
\begin{figure}[t]
    \centering
    \includegraphics[width=1\linewidth]{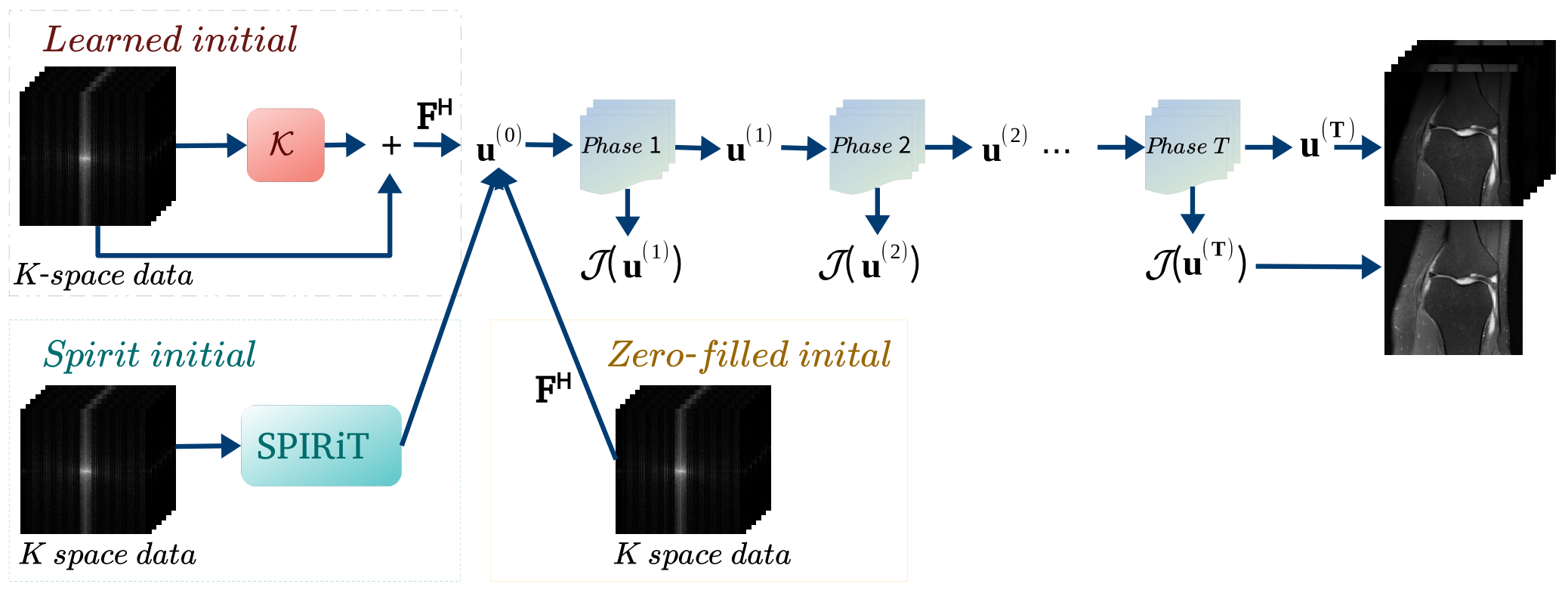}
    \caption{The framework paradigm for all phases with three different initial reconstruction methods, including zero-filled initial, SPIRiT reconstruction initial, and the learned initial. For $ t =1, \cdots, T$, each phase follows the algorithm introduced in our previous work \cite{10.1007/978-3-030-61598-7_2}.}  
    \label{fig:framework}
\end{figure}

We consider two types of training datasets in the ablation experiments and design a proper loss function for each of them in Section \ref{subsec:Ablation}.
In the first case, the ground truth is $\ubf^*$, we set loss function:
\begin{equation}
\begin{aligned}
   \ell(\ubf) = & \sum\nolimits^{c}_{i=1}  \gamma \| \ubf_i - \ubf^*_i\| + \| |\J(\ubf)| - \text{RSS}(\ubf^*)\|\\
   &+ \beta  \|  \text{RSS}(\ubf_i{(0)}) - \text{RSS}(\ubf^*_i)\|, \label{eq:loss_ablation}  
\end{aligned}
\end{equation}
$\ubf$ means $\ubf{(T)}$ for simplicity. If the training dataset consists ground truth single-body image $\vbf^* \in \mathbb{R}^{m \times n} = \text{RSS}(\ubf^*)$, then we test our network performance using the loss function:
\begin{equation}
\label{eq:loss_body_ablation}
\ell(\ubf) = \gamma \| \text{RSS}(\ubf) - \vbf^*\| + \| |\J(\ubf)| - \vbf^*\| + \beta\| \text{RSS}( \ubf{(0)}) - \vbf^*\|.
\end{equation}

The loss functions are indicated in Tables \ref{tab:result_our} and \ref{tab:ui_mse} follow the experiments.  We set $ \gamma = 1, \beta = 10^{-3}$ in \eqref{eq:loss_ablation} and \eqref{eq:loss_body_ablation} in the implementations.

%
\subsubsection{Combination operator $\J$ vs. RSS}\label{subsec:RSS}
In order to justify the effectiveness of the learned nonlinear combination operator $\J$, we modified pMRI-$\C$Net-ZF by substituting $ \J$ with RSS which is widely used to combine multi-coil images into a single-body image.
The other operators $ \G,\tilde{\G}$ and $ \tilde{\J}$ remain to perform complex convolutions.
Specifically, the output of RSS: $(\sum_{i=1}^{c} |\ubf_i{(t)}|^2)^{1/2} \in\mathbb{R}^{mn} $ is a single-channel real-valued image, which is input into both real and imaginary parts of the nonlinear operator $ \G$, so the output split into complex value. We refer  pMRI-Net-ZF/pMRI-$\C$Net-ZF as the networks with combination operator $\J$ and pMRI-$\C$Net-RSS as the network with RSS, all other settings remain the same as before.

The reconstructed images and average evaluation results of these two types of networks are shown in Fig.~\ref{FSPD} and Table \ref{tab:result_our}. pMRI-Net-ZF and pMRI-$\C$Net-ZF outperform pMRI-$\C$Net-RSS with mean improvements of 1.36 dB and 1.63 dB in PSNR respectively. It indicates that the learned combination operator $\J$ gives more favorable performance compared with applying RSS.
\subsubsection{Effect of initialization}\label{subsec:initial}
The choice of the input of the reconstruction network $\ubf{(0)}$, also has impacts on the final reconstruction quality. Instead of directly using the partial k-space data $\fbf$ as the input of our network, we use three different choices of input $\ubf{(0)}$: (i) Zero-filling reconstruction $\Fbf^{H} \fbf$; (ii) SPIRiT \cite{doi:10.1002/mrm.22428} reconstruction; (iii) $\Fbf^{H}(\fbf+\K(\fbf))$, one can treat $\fbf+\K(\fbf)$ as an interpolated pseudo full k-space.

We observe that from Table \ref{tab:result_our},
SPIRiT initial makes a slight improvement over the zero-filled initial, whereas the learned initial achieves the highest reconstruction quality compared to the other two initializations. 
Fig.~\ref{Init} displays the three types of initials. We observe that both SPIRiT and the learned initial obtain higher spatial resolution over zero-filling.
SPIRiT is a classical k-space method, this initial did a better job on reducing the aliasing artifacts and keeping edges compare to zero-filling and the learned initial, but SPIRiT introduces more noise. 

Comparing to zero-filling, the learned initial preserves structure features of major tissue thanks to the k-space network $\K$. Comparing to the SPIRiT initial, the learned initial reduces resolution noises in the image space. Learning-based initial obtain a balanced performance between zero-filling and SPIRiT in the sense of avoiding the weakness of these two initials.

\begin{figure}
\centering
\includegraphics[width=0.24\linewidth]{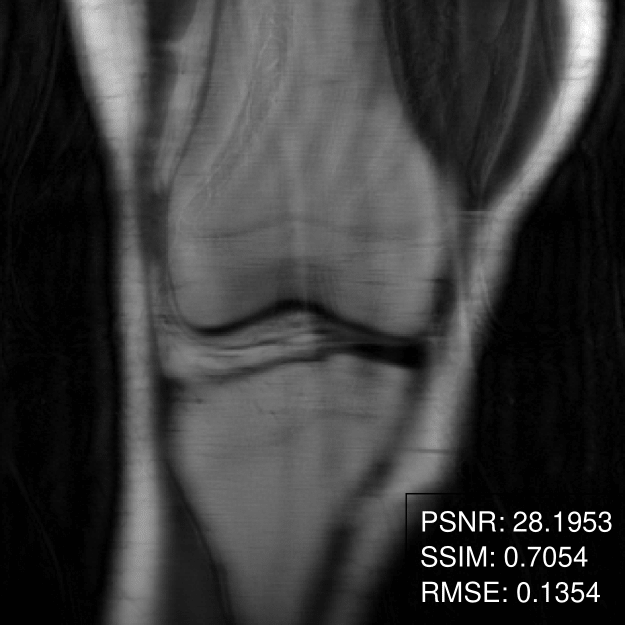}
\includegraphics[width=0.24\linewidth]{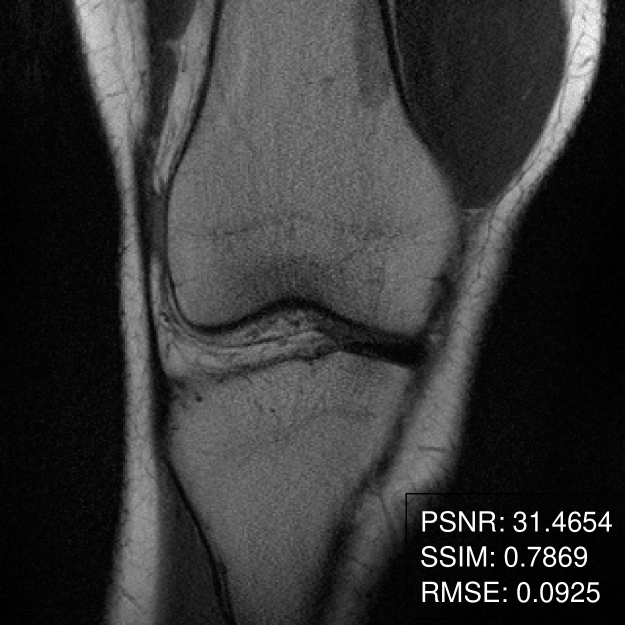}
\includegraphics[width=0.24\linewidth]{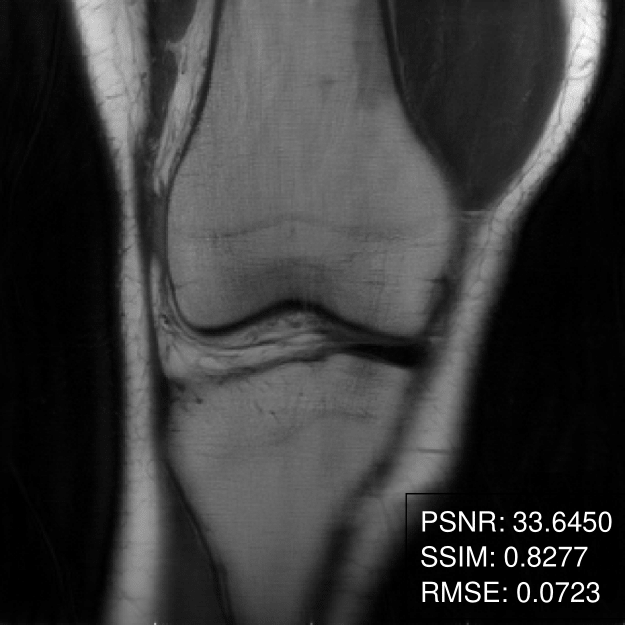}
\includegraphics[width=0.24\linewidth]{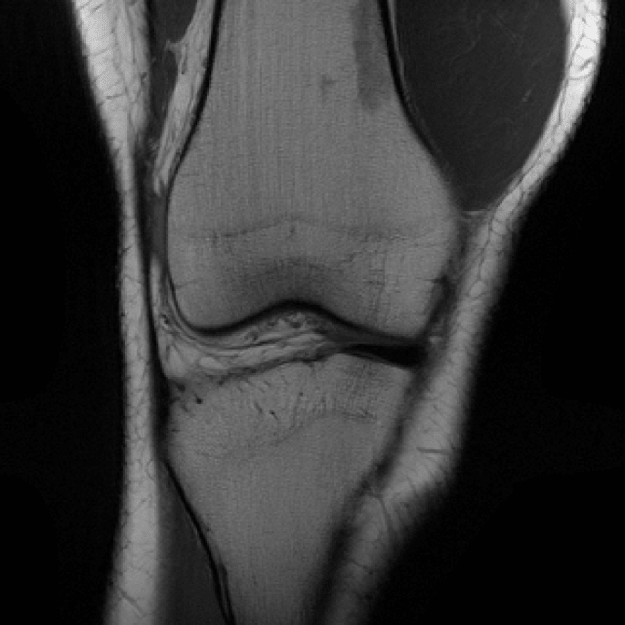}\\
\includegraphics[width=0.24\linewidth]{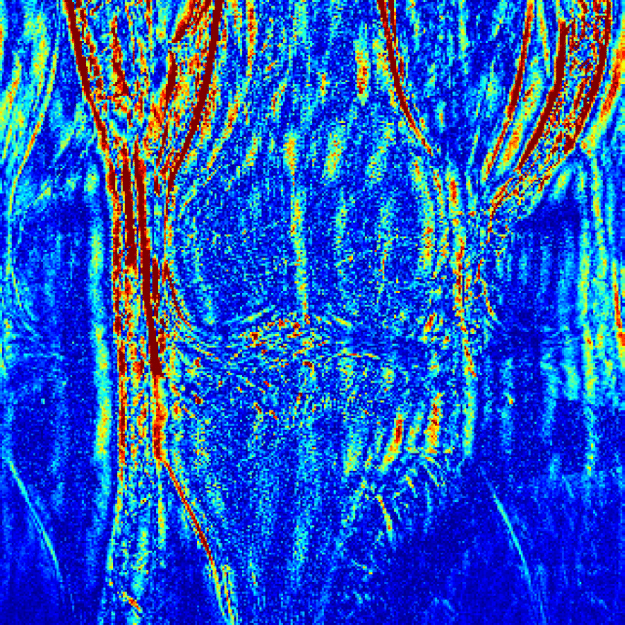}
\includegraphics[width=0.24\linewidth]{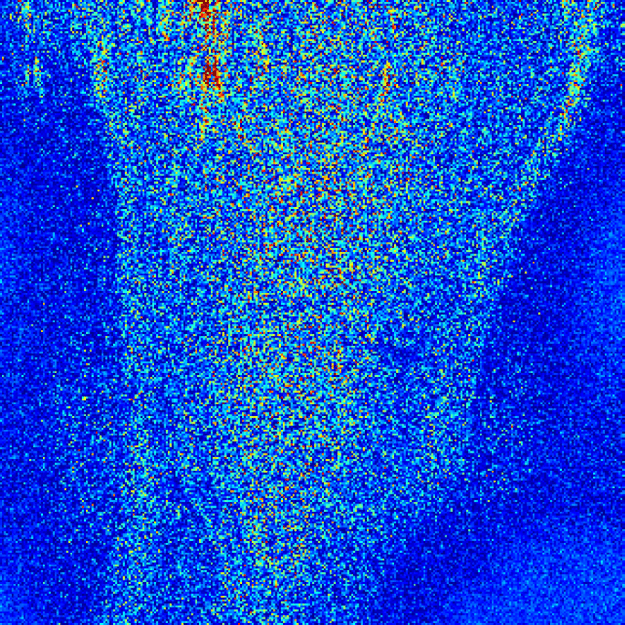}
\includegraphics[width=0.24\linewidth]{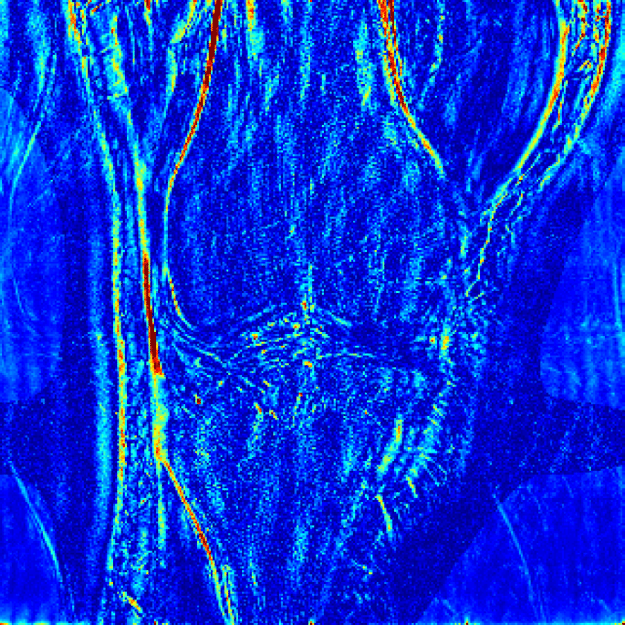}
\includegraphics[width=0.24\linewidth]{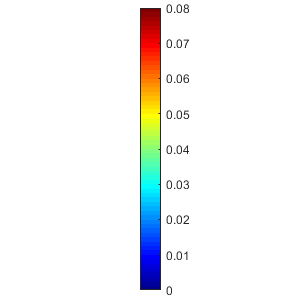}
\caption{The first row (from left to right) shows the RSS of initial $ \Bar{\ubf}{(0)}$ for pMRI-Net-ZF/pMRI-$\C$Net-ZF, pMRI-Net-SP/pMRI-$\C$Net-SP, pMRI-Net-K/pMRI-$\C$Net-K, and reference image on the Coronal PD knee image. The second row shows their pointwise error maps and color bar. }
\label{Init}
\end{figure}

\subsubsection{Complex convolutions}\label{subsec:complexnetwork}

In this experiment, we compared -Net and -$\C$Net.
The quantitative evaluation from Table \ref{tab:result_our} indicates that the proposed complex-valued networks are outstanding in terms of PSNR/SSIM/RMSE over proposed real-valued networks. 
Table \ref{tab:ui_mse} informs the complex convolutions extract the features in each $\ubf_i$ and obtain the lowest RMSE. The phase image for one channel of the reconstructed coil-image is displayed in Fig.~\ref{fig:phase_images}.
These results demonstrate complex-valued networks are playing important roles in updating multi-coil images and preserving phase information of each channel.

\begin{figure*}
\includegraphics[width=0.10\linewidth, angle=180]{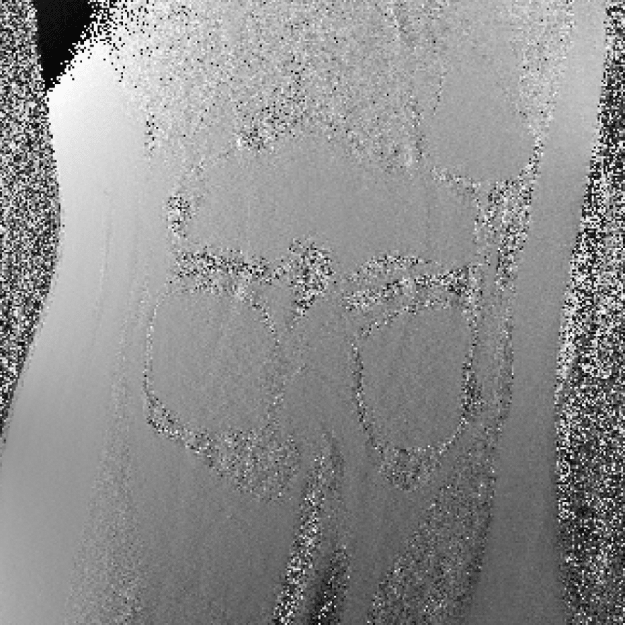}
\includegraphics[width=0.10\linewidth, angle=180]{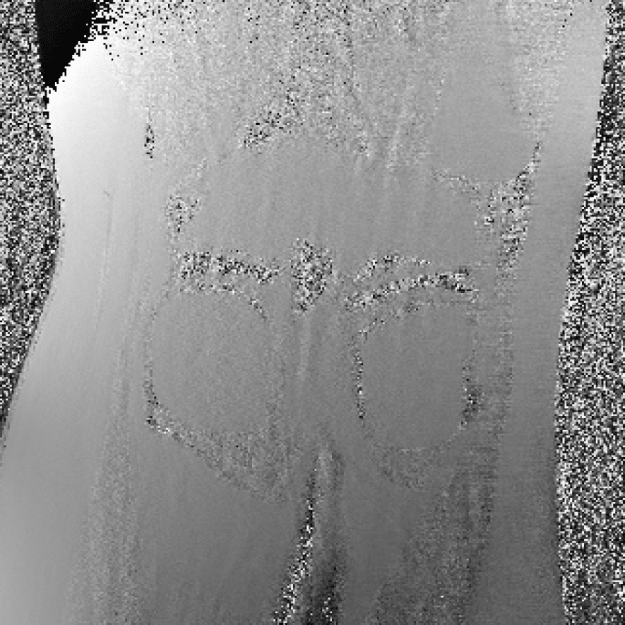}
\includegraphics[width=0.10\linewidth, angle=180]{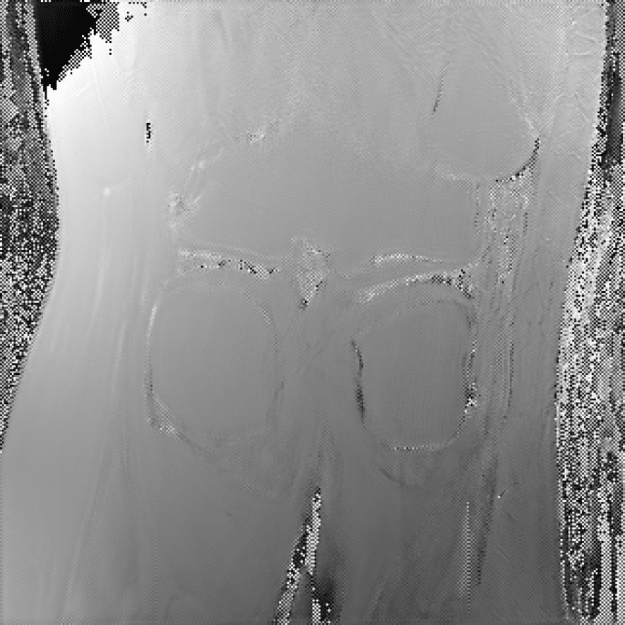}
\includegraphics[width=0.10\linewidth, angle=180]{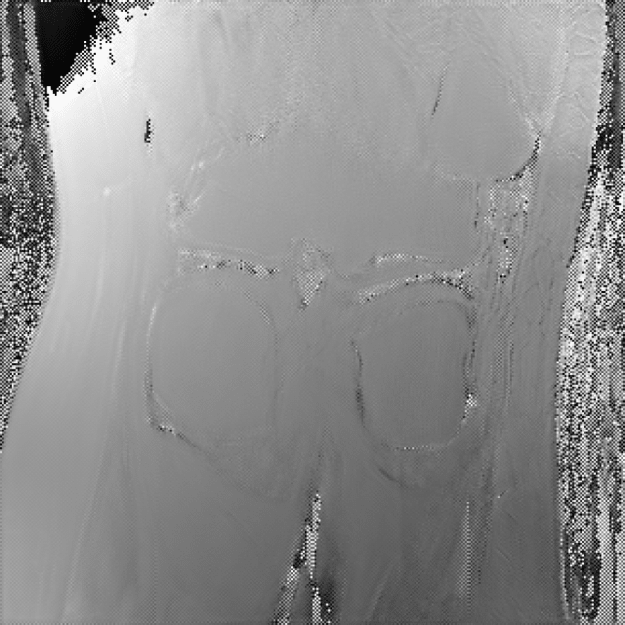}
\includegraphics[width=0.10\linewidth, angle=180]{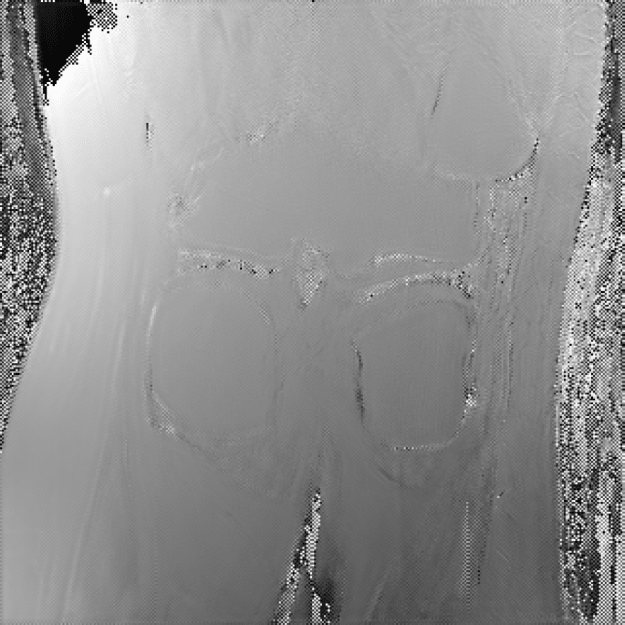}
\includegraphics[width=0.10\linewidth, angle=180]{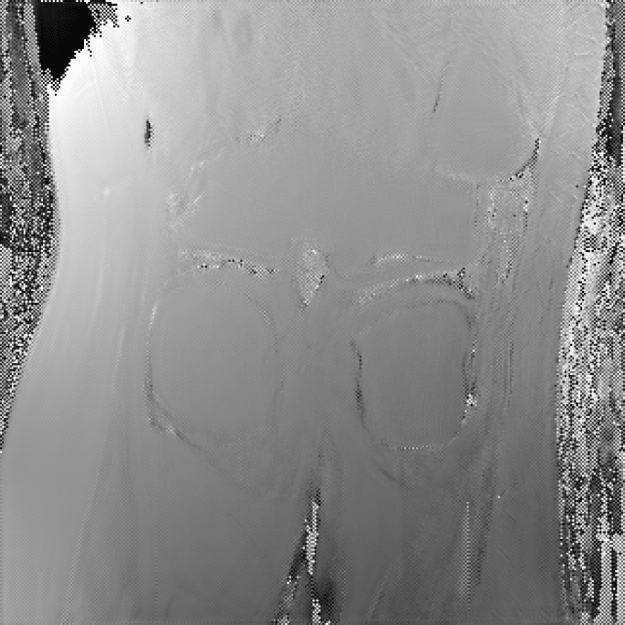}
\includegraphics[width=0.10\linewidth, angle=180]{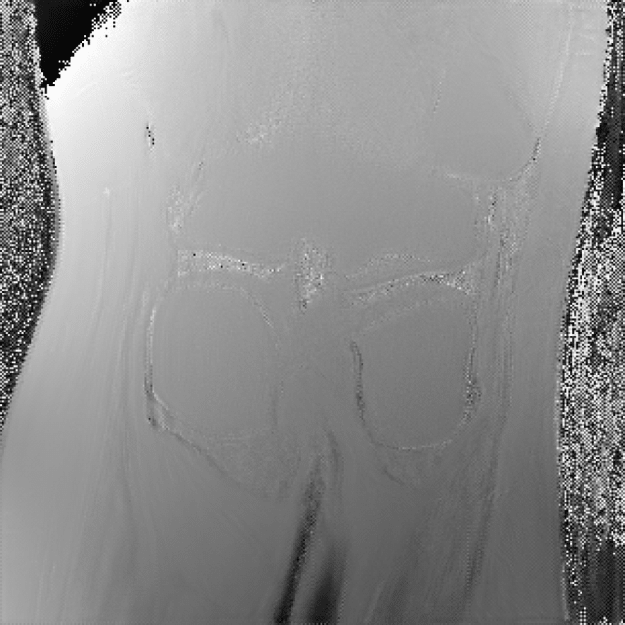}
\includegraphics[width=0.10\linewidth, angle=180]{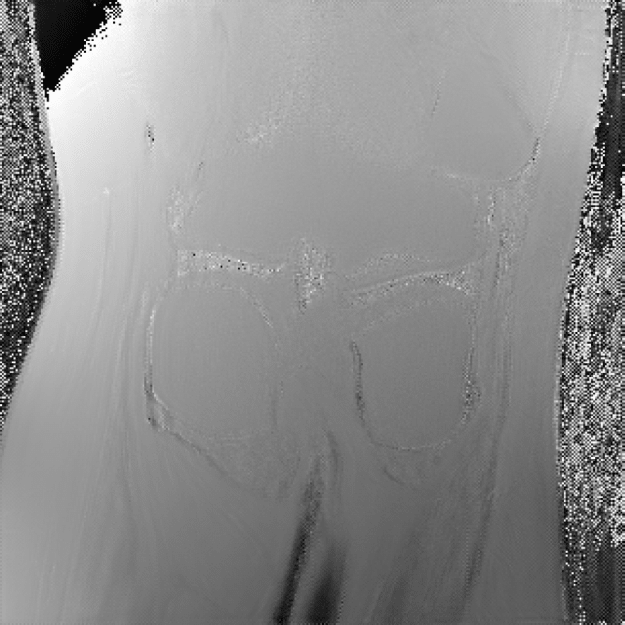}
\includegraphics[width=0.10\linewidth, angle=180]{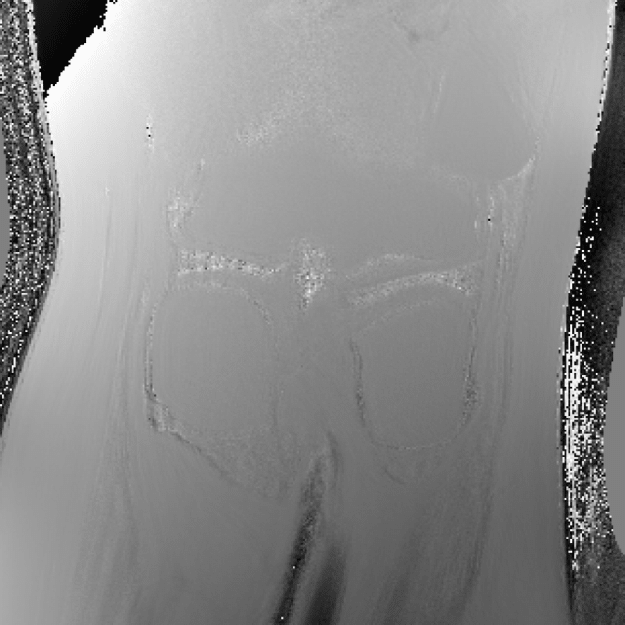}\\
\includegraphics[width=0.10\linewidth, angle=180]{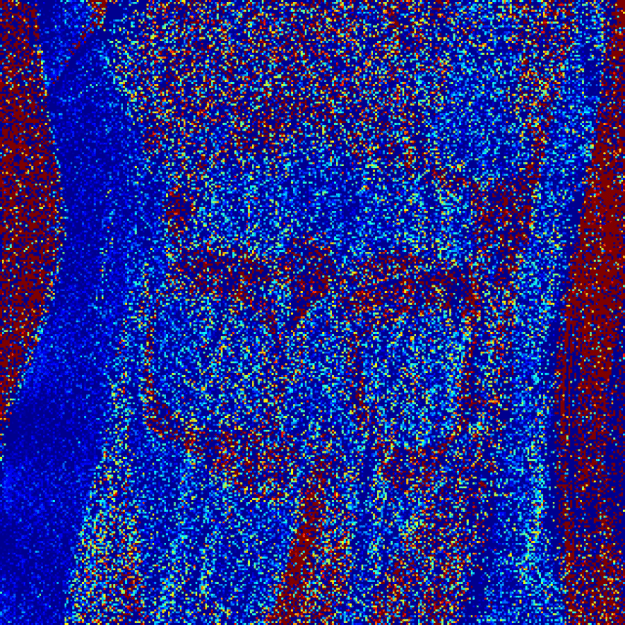}
\includegraphics[width=0.10\linewidth, angle=180]{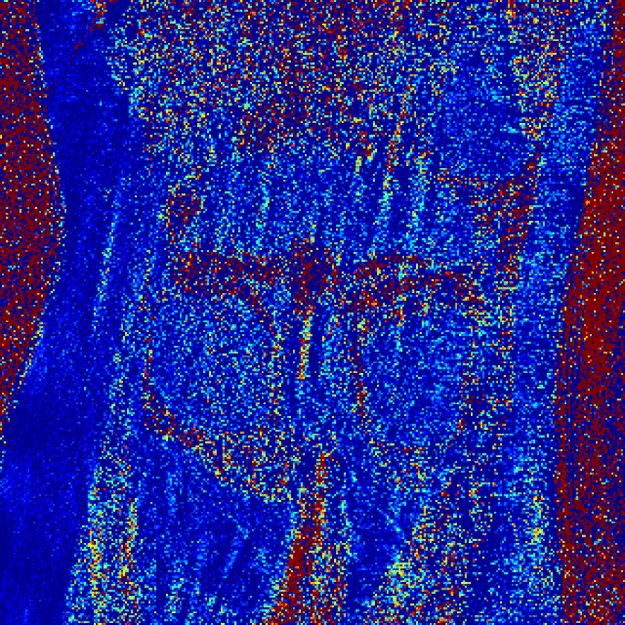}
\includegraphics[width=0.10\linewidth, angle=180]{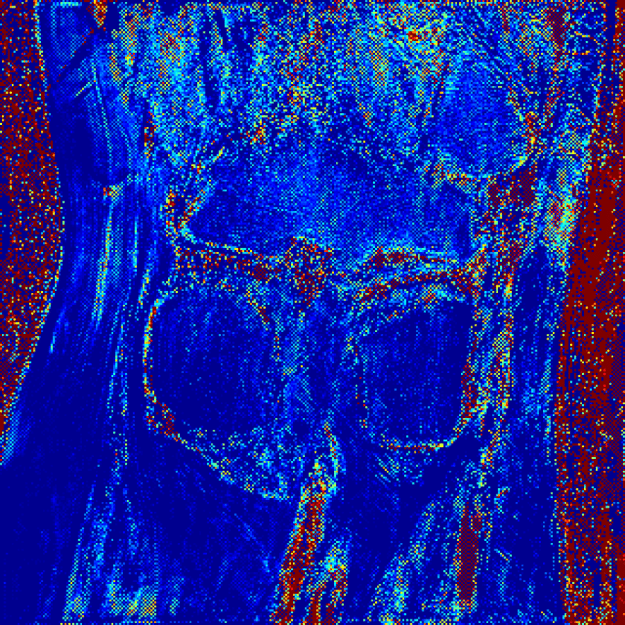}
\includegraphics[width=0.10\linewidth, angle=180]{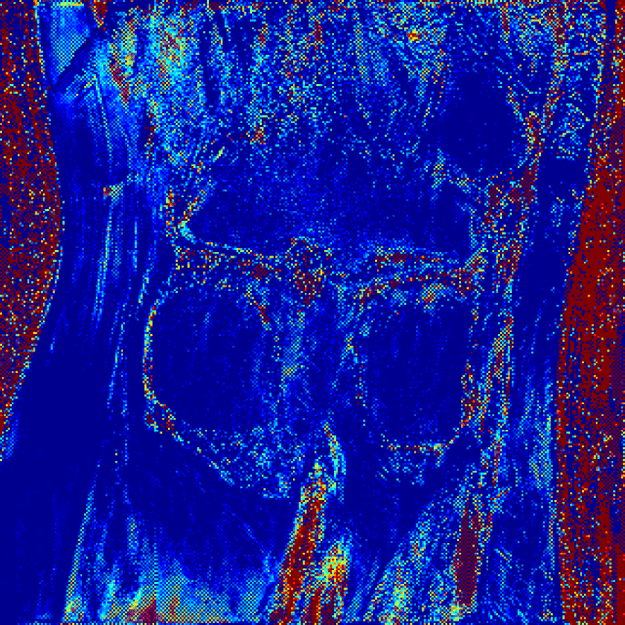}
\includegraphics[width=0.10\linewidth, angle=180]{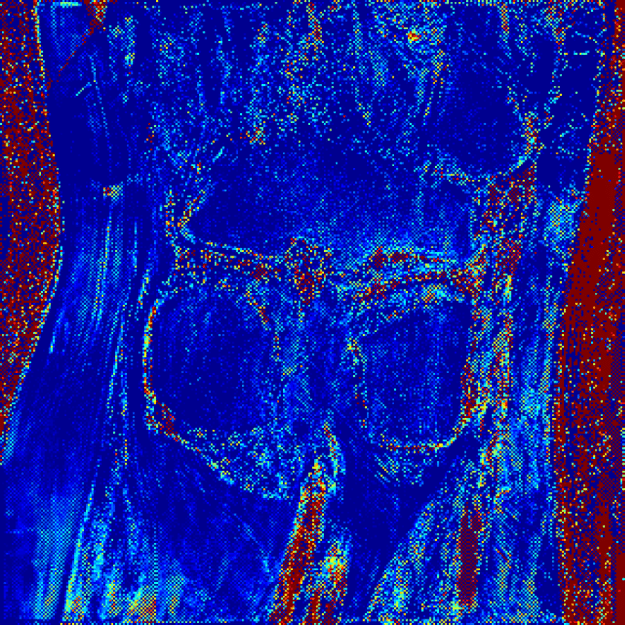}
\includegraphics[width=0.10\linewidth, angle=180]{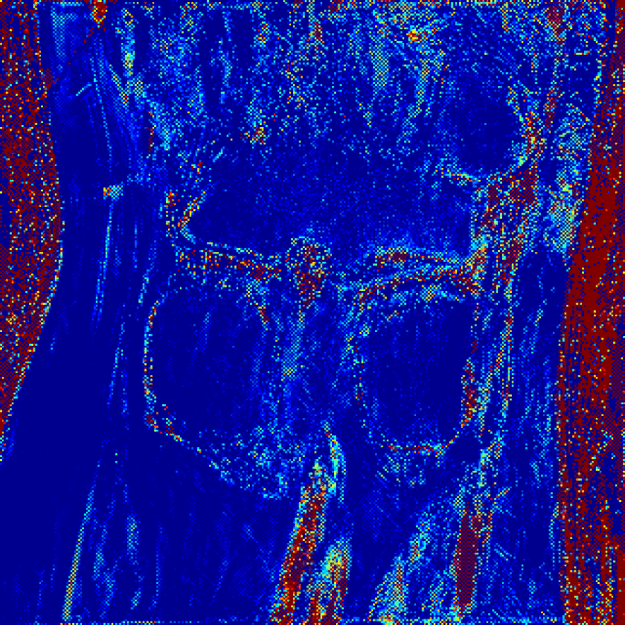}
\includegraphics[width=0.10\linewidth, angle=180]{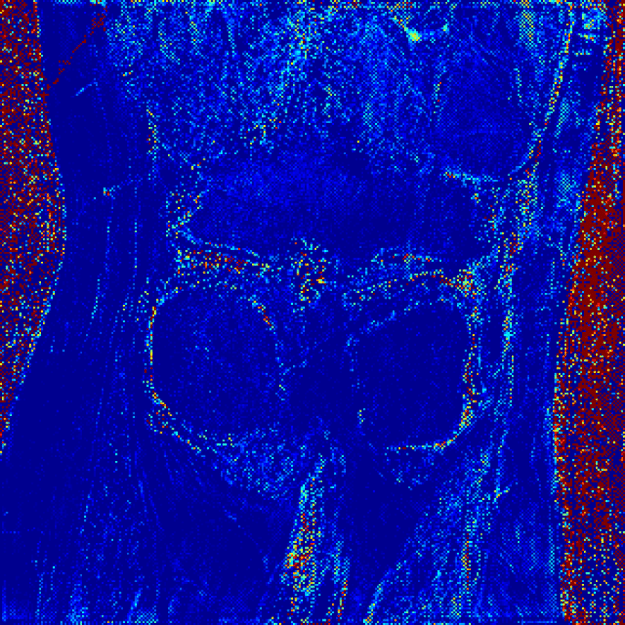}
\includegraphics[width=0.10\linewidth, angle=180]{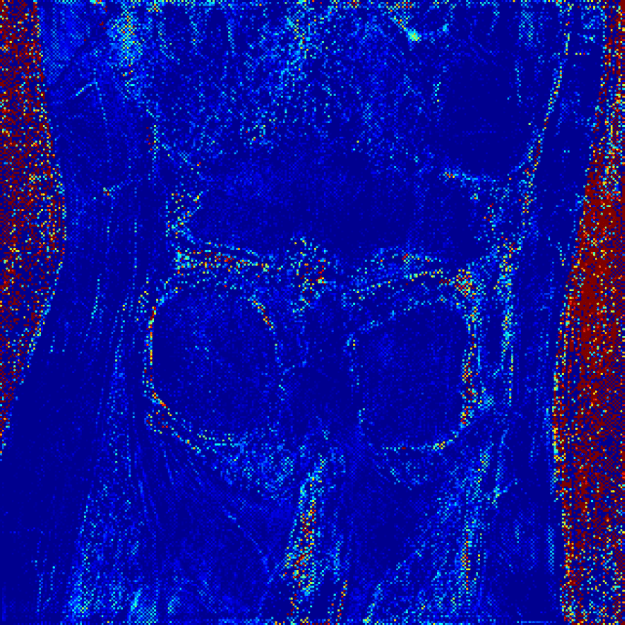}
\includegraphics[width=0.10\linewidth, angle=180]{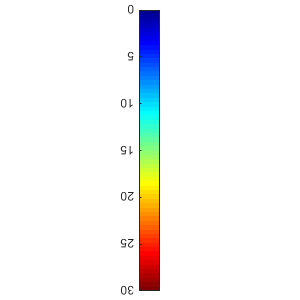}
\caption{The first row shows phase information of one coil in the reconstructed coil-image for GRAPPA, SPIRiT, De-AliasingNet, DeepcomplexMRI,  CDF-Net, Adaptive-CS-Net, pMRI-$\C$Net-K with loss function \eqref{eq:loss_ablation}, the proposed method and referenced image. The second row shows the corresponding pointwise error maps and color bar, the maximum error is $30^{\circ}$. }
\label{phase}
\end{figure*}

\subsubsection{Comparison with proposed image domain reconstruction and domain hybrid reconstruction}
\label{sec:discussion}
The major difference between pMRI-$\C$Net-K and the proposed network is that pMRI-$\C$Net-K  only iterates \eqref{eq:u_ablation} in the image domain, while the proposed method iterates \eqref{eq:scheme} which conducts a domain hybrid reconstruction. The denoising network in \eqref{eq:u_ablation} is $ \tilde{\J} \circ \tilde{\G} \circ \soft_{\alpha_t}(  \G \circ \J)$, and in the proposed network we use $ \M + \Fbf^{H} \K \Fbf + \Fbf^{H} \K \Fbf \M$, where  $ \M = \tilde{\J} \circ \tilde{\G} \circ \G \circ \J $ is the image domain network. The soft-thresholding operator $  \soft_{\alpha_t}$ was eliminated in the proposed method since we found the results of adding the soft-thresholding does not make an obvious difference.

The average reconstruction outcomes in Table \ref{tab:result_our} and Table \ref{tab:ui_mse} suggest that domain hybrid approach achieves better performance. Comparing to the pMRI-$\C$Net-K, our proposed method improved 0.598 dB in PSNR, 0.003 in SSIM, and reduced 0.014 in RMSE, which shows in Table \ref{tab:result_our} for PD dataset. 
\section{Conclusion}
\label{sec:conclusion}
This paper introduces a discrete-time optimal control framework for the calibration-free pMRI reconstruction model. We apply a convolutional combination operator to combine channels of the multi-coil images and apply a parametrized regularization function to the channel-combined image to reconstruct channel-wise multi-coil images. The proposed method is inspired by the proximal gradient algorithm. The proximal point is learned by two denoising networks, which conducts in the image domain and k-space domain.
We cast the reconstruction network as a structured discrete-time optimal control system, resulting in an optimal control formulation of parameter training, which provides an interpretable and high-performance deep architecture for pMRI reconstruction. We design network training from the view of the Method of Lagrangian Multipliers. We showed that the method of Lagrangian multipliers is equivalent to back-propagation, and we can employ SGD based algorithms to obtain a solution satisfying the necessary condition of the optimal control problem.
The reconstruction results are of high perceived quality demonstrate the superior performance of the proposed pMRI-Net.


\bibliographystyle{cas-model2-names}

\bibliography{cas-refs}

\end{document}